\colorlet{Changes@Color}{orange}
\newtheorem{thm}{Theorem}[section]
 \newtheorem{dfn}[thm]{Definition}
 \newtheorem{lem}[thm]{Lemma}
 \newtheorem{rem}[thm]{Remark}
\newtheorem{cor}[thm]{Corollary}
\newtheorem{prop}[thm]{Proposition}
\newtheorem{asm}[thm]{Assumption}
\newtheorem{example}[thm]{Example}
\numberwithin{equation}{section}
\DeclareMathOperator*{\essinf}{ess\, inf}
\begin{document}

\title{Super--replication in Fully Incomplete Markets}
\thanks{*Partly supported by the
 Einstein Foundation Berlin Grant no.A 2012 137 and
Marie--Curie Career Integration Grant, no. 618235}
\thanks{$^+$Partly supported by the
Swiss National Foundation grant SNF 200021$\_$153555}

  \author{Yan Dolinsky* \address{
 Department of Statistics, Hebrew University of Jerusalem, Israel. \\
 e.mail: yan.dolinsky@mail.huji.ac.il}}${}$\\
\author{Ariel Neufeld$^+$ \address{
 Department of Mathematics, ETH Zurich, Switzerland.\\
 e.mail: ariel.neufeld@math.ethz.ch}
  ${}$\\
Hebrew University and ETH Zurich}

\date{\today}

\subjclass[2010]{91G10, 91G20}
 \keywords{Martingale Measures, Super--replication, Stochastic Volatility}%
\maketitle \markboth{Y.Dolinsky and A.Neufeld}{Fully Incomplete Markets}
\renewcommand{\theequation}{\arabic{section}.\arabic{equation}}
\pagenumbering{arabic}

\begin{abstract}
In this work, we introduce the notion of fully incomplete markets.
We prove that for these markets, the super--replication price coincides
with the model free super--replication price. Namely, the knowledge of the model does not reduce the
super--replication price.
We provide two families of fully incomplete models:
stochastic volatility models and rough volatility models.
Moreover, we give several computational examples.
Our approach is purely probabilistic\added{.} \deleted{and allows to deal with
general models of stochastic volatility.}
\end{abstract}
\vspace{10pt}

\section{Introduction}
We consider a financial market with one risky asset, which is modeled through a semi--martingale
defined on a filtered probability space.
We
introduce and study a new notion,
the notion of fully incomplete markets. Roughly speaking,
a fully incomplete market is a financial market
for which the set of absolutely continuous local martingale measures is dense
in a sense that will be explained formally in the sequel.
We prove that a wide range of stochastic volatility models (see for instance Heston (1993), Hull and White (1987)\deleted{model}
and Scott (1987)) and rough volatility models (see
Gatheral, Jaisson and Rosenbaum (2014)) are fully incomplete.

The main contribution of this work is the establishment
of a surprising link between super--replication in the model
free setup and in fully incomplete markets.
Namely, we prove that for fully incomplete markets, the knowledge of
the probabilistic model does not reduce the super--replication price, i.e.,
the classical super--replication price is equal to the
model free super--replication price.
We deal with two main setups of super--replication.
The first setup is the semi--static hedging of
European options and the second setup
is the super--replication of game options.

In the first setup, we assume that in addition to trading the stock, the investor is
allowed to take
static positions in a finite number of options (written on the underlying asset) with
initially known prices.
The financial motivation for this assumption is that vanilla options such as call options
are liquid and hence should be treated as
primary assets
whose prices are given in the market.

We consider the super--replication of bounded (path dependent) European options. Our main result in Theorem~\ref{thm2.1} says
that for fully incomplete markets, the super--replication price is the same as in the model free setup.
Moreover,
when the probabilistic model is given, we show in Theorem~\ref{thm.new} the novel result that there
is a hedge which minimizes the cost
of a super--replicating strategy, i.e., that there is
an optimal hedge. This is done by applying the Koml\'os compactness principle, see, e.g., Lemma A 1.1 in Delbaen and Schachermayer (1994).
This compactness principle requires an underlying probability space. Hence, in the continuous time model
free setup, the existence of an optimal hedge is an open question which is left for future research.

In Bouchard and Nutz (2015), the authors proved the existence of an optimal hedge in a general
quasi sure setup (which includes the model free setup). In  their non trivial proof, they first considered the one-period case and then extended it by induction to the multi-period case.
Clearly, such an approach is limited to the discrete-time setup.

The model-independent approach with semi-static hedging received considerable
attention in recent years. The first work in this direction is the seminal contribution
by Hobson (1998). For more recent results,
see for instance (Acciaio et al. (2015), Beiglboeck et al. (2015),
Dolinsky and Soner (2014, 2015(a)), Galichon, Henry-Labordere and Touzi (2014),
Guo, Tan and Touzi (2015), Hou and Ob{\l}{\'o}j (2015),
and Henry-Labordere et al. (2014)).

Our second setup deals with super--replication of game options.
A game contingent claim (GCC) or game option,
which was introduced in Kifer (2000),
is defined
as a contract between the seller and the buyer such
that both have the right to exercise it at any time up to a
maturity date (horizon) $T$. If the buyer exercises the contract
at time $t$, then he receives the payment $Y_t$, but if the seller
exercises (cancels) the contract before the buyer, then the latter
receives $X_t$. The difference $\Delta_t=X_t-Y_t\added{\geq0}$ is the penalty
that the seller pays to the buyer for the contract cancellation.

A hedging strategy against a GCC is defined as a pair $(\pi,\sigma)$, which consists
of a self financing portfolio $\pi$ and a stopping time $\sigma$ representing the cancellation time
for the seller. A hedging strategy
is super-replicating the game option if no matter what exercise
time the buyer chooses, the seller can cover his liability to the
buyer (with probability one).
The super--replication price $V^*$ is defined as
the minimal initial capital which is required for a super-replicating strategy,
i.e., for any $\Xi>V^*$ there is a super-replicating strategy with an initial
capital $\Xi$.

For the above two setups (semi--static hedging of European options and
hedging of game options),
we prove that for fully incomplete markets, the super--replication price is the cheapest cost of a
trivial super--replicating strategy and coincides with the model free super--replication price.
For game options, a trivial hedging strategy is a pair which consists of a buy--and--hold portfolio
and a hitting time of the stock price process.
We show that for path independent payoffs
$X_t=f_2(S_t)$ and $Y_t=f_1(S_t)$, the super--replication price
equals $g(S_0)$ where $g$ (determined by $f_1,f_2$) can be viewed as
the game variant of a concave envelop\added{e}.
We give a characterization of the optimal hedging strategy
and provide several examples for explicit calculations of the above.

We note that the above two setups were studied recently for the case where hedging of the stock is subject
to proportional transaction costs (see Dolinsky (2013) for the game options setup and
Dolinsky and Soner (2015b) for semi--static hedging of European options). In these two papers,
it was shown that if the logarithm of the discounted stock price process satisfies the
conditional full support property (CFS) then
the super--replication price coincides with the model free super--replication price.
Thus, our results in the present paper show that the behavior of super--replication
prices in fully incomplete markets (without transaction costs)
is similar to their behavior in the presence of proportional transaction costs
in  markets which satisfy the CFS property.
Intuitively, one might expect that the notion of fully incomplete market is stronger than the CFS property.
However, as we will see in Remark \ref{rem.compare}, these two properties are in general not comparable.

In Cvitanic, Pham and Touzi (1999),
the authors studied the super--replication of European options
in the \replaced{presence}{present} of portfolio constraints and
stochastic volatility.
One of their results says that if the stochastic
volatility is unbounded (and satisfies some continuity assumptions),
then, even in the unconstrained case, the super--replication price
is the cheapest cost of a  buy--and--hold super-replicating portfolio,
and is given in terms of the concave envelop\added{e} of the payoff.
These results can trivially \added{be} extended
to the case of American options.
The main tool that the authors used relies on a
PDE approach to control theory of Markov processes (Bellman equation).

Our results are an extension of the results in
Cvitanic, Pham and Touzi (1999).
We present a purely probabilistic approach, which is
based on a change of measure.
The main idea of our approach is that,
in a sufficiently rich probability space,
the set of the distributions of the discounted stock price process
under equivalent martingale measures is dense in the set of all martingale measures.
We give an exact meaning
to this statement in Lemma \ref{lem.density}.

The idea to use a change of measure
for the construction
of dense pricing distributions goes back to Kusuoka (1992).
In this unpublished working paper,
Kusuoka deals with super--replication prices of European
 options in the Black--Scholes model in the presence of proportional transaction costs.
The author uses the Girsanov theorem in order to construct a set of shadow prices
such that any Brownian martingale (with some regularity assumptions) is a cluster point of this set.

Several important questions remain open and are left for future research.
The first question is whether our results can be extended to a more general setup
of super--replication, where we super--replicate
American or game options and permit static positions in
European and American options. Recently, several papers studied
static hedging of American options (with European options/American options) in a discrete-time setting,
see Bayraktar, Huang and Zhou (2015), Bayraktar and Zhou (2015, 2016), Deng and Tan (2016), and Hobson and Neuberger (2016).
The second question is whether one can extend the results to the case of multiple risky assets.
It seems that our definition for fully incomplete market can be extended
to this case as well. But in this instance,
it is not clear what the game variant of a concave envelop\added{e}
and the cheapest cost of a
trivial super--replicating strategy are.
We leave the
technicalities
for future research.
Another task is to provide an
interesting computational example
for model free semi--static hedging with finitely many options.
This was not done so far, even for the case of one risky asset.
We remark on more open questions in
Sections~\ref{sec:3+}~and~\ref{sec:proof}.

The paper is organized as follows. In the next section, we introduce the
concept of fully incomplete markets and
argue that a wide range of stochastic volatility models and
rough volatility models are fully incomplete. This is proven in Section~\ref{sec:3}.
In Section~\ref{sec:3+}, we formulate and prove our main results
for semi--static hedging of European options.
In Section~\ref{sec:3++}, we formulate our main results for game options. Furthermore,
we provide several examples for which we calculate explicitly
the super--replication price and the corresponding optimal hedging strategy.
In Section~\ref{sec:proof}, we prove our results for game options.
To that end, we prove some auxiliary lemmas in Section~\ref{sec:lemmas}. In the last
Section, we give an exact meaning to
the density property of fully incomplete markets.

\section{Fully Incomplete Markets}\label{sec:2}\setcounter{equation}{0}
Let $T$ be a finite time horizon and let $(\Omega,\mathcal F,\{\mathcal F_t\}_{t=0}^T,\mathbb P)$
be a complete probability space endowed with a
filtration $\{\mathcal F_t\}_{t=0}^T$ satisfying the usual conditions.
We consider a financial market which consists
of a \replaced{savings account}{non risky asset} $B={\{B_t\}}_{t=0}^T$
and of a stock $S={\{S_t\}}_{t=0}^T$. The \replaced{savings account}{non risky asset} is given by
\begin{equation}\label{2.1}
dB_t=r_t B_t\, dt\added{,}\quad B_0=1,
\end{equation}
where ${\{r_t\}}_{t=0}^T$
is a non--negative adapted stochastic process which represents the interest rate. We will assume that
${\{r_t\}}_{t=0}^T$ is uniformly bounded.
The risky asset is given by
\begin{equation}\label{2.2}
d S_t=S_t\left(r_t\,dt+\nu_t\,dW_t\right), \ \ S_0>0,
\end{equation}
where $\nu=\{\nu_t\}_{t=0}^T$ is a progressively measurable process with given starting point $\nu_0>0$ satisfying $\int_0^T \nu^2_s\,ds<\infty \ \mathbb{P}$-a.s., and where
$W=\{W_t\}_{t=0}^T$ is a Brownian motion with respect to the filtration
$\{\mathcal F_t\}_{t=0}^T$.
% We assume that $\nu_0>0$ is given.

Let
$\mathcal{C}(\nu_0)$ be the set of all
continuous, strictly positive
stochastic processes $\alpha={\{\alpha_t\}}_{t=0}^T$
which are adapted with respect to the filtration generated by $W$ completed by the null sets, and satisfy:
i. $\alpha_0=\nu_0$. ii.
$\alpha$ and $\frac{1}{\alpha}$ are uniformly bounded.
\begin{dfn}\label{dfn2.1}
A financial market given by (\ref{2.1})--(\ref{2.2}) is
called fully incomplete if
for any $\epsilon>0$
and any process $\alpha\in \mathcal C(\nu_0)$ there exists a
probability measure $\mathbb Q\ll\mathbb P$ such that:\\
i. ${\{W_t\}}_{t=0}^T$ is a \replaced{Brownian motion}{Wiener process}
with respect to the probability measure $\mathbb Q$ and the filtration
${\{\mathcal{F}_t\}}_{t=0}^T$.\\
ii.
\begin{equation}\label{dnew}
\mathbb Q(\|\alpha-\nu\|_{\infty}>\epsilon)<\epsilon,
\end{equation}
where $\|u-v\|_{\infty}:=\sup_{0\leq t\leq T}|u_t-v_t|$ is the distance between $u$ and $v$ with respect to the
uniform norm.
\end{dfn}
Let us briefly explain the intuition behind the definition of a fully incomplete market.
Consider the discounted stock price
$\tilde S_t:=\frac{S_t}{B_t}$, $t\in [0,T]$. From (\ref{2.1})--(\ref{2.2}), we get
 $d\tilde S_t=\nu_t\tilde S_t\, dW_t$. Thus, Definition \ref{dfn2.1}
 says that for a fully incomplete market,
for any volatility process $\alpha\in C(\nu_0)$,
 we can find an absolutely continuous
 local martingale measure $\mathbb Q\ll\mathbb P$
 under which the volatility of the discounted
 stock price $\tilde S$ is close to $\alpha$.
In fact, using density arguments, we will see (in Lemma~\ref{lem.density}) that in fully incomplete markets, the set of the
distributions of the discounted stock price under absolutely continuous
local martingale measures is dense in the set of all local martingale distributions.
 \begin{rem}
Observe that the probability measure
$\mathbb P$ is already a
%an equivalent
local martingale measure. Thus, by taking convex combinations of the from
$\lambda \mathbb P+(1-\lambda)\mathbb Q$
where $\lambda>0$ is "small" and $\mathbb Q$ is an absolutely continuous
local martingale measure, we deduce the following. If Definition \ref{dfn2.1} is satisfied, then
if we change the condition $\mathbb Q\ll \mathbb P$ to the more restrictive
condition
$\mathbb Q\sim \mathbb P$ of equivalent probability measures, the modified definition will be satisfied as well.
\end{rem}
The main results of this paper (which are formulated in Sections \ref{sec:3+}--\ref{sec:3++})
say that for fully incomplete markets the super--replication price
is the same as for the path--wise model free setup.
Namely, the knowledge of the probabilistic model does not
reduce the super--replication price.
We will formulate and prove this result for two setups.
The first setup is a semi--static European options' hedging model.
The second setup deals with game options.

The following Proposition (which will be proved in Section \ref{sec:3})
provides two families of stochastic volatility models which are fully incomplete.
\begin{prop}\label{prop2.1}
${}$\\
I.
Consider the following stochastic volatility model:
\begin{equation}\label{2.3}
d\nu_t=a(t,\nu_t)\,dt+b(t,\nu_t) \,d{\hat W}_t+c(t,\nu_t) \,dW_t, \ \ \nu_0>0,
\end{equation}
where $\hat W=\{\hat W_t\}_{t=0}^T$ is a Brownian motion with respect to
$\{\mathcal F_t\}_{t=0}^T$ which is independent
of $W$.
Assume that the SDE (\ref{2.3}) has a unique strong solution and the solution is strictly positive.
If
the functions $a,b,c: [0,T]\times (0,\infty)\rightarrow\mathbb R$
are continuous and for any $t\in [0,T]$, $x>0$ we have $b(t,x)>0$, then
the financial market given by (\ref{2.1})--(\ref{2.2}) is fully incomplete. \\
II. Let $\{\mathcal F_t\}_{t=0}^T$
be the usual augmentation of the filtration generated by $W$ and $\nu$.
Assume a decomposition
$\nu_t=\nu^{(1)}_t\nu^{(2)}_t$ where
$\nu^{(1)}$ is adapted to the filtration generated by $W$, and
$\nu^{(2)}$ is independent of $W$.
Moreover, assume that
$\nu^{(1)},\nu^{(2)}$ are strictly positive and continuous processes.
If $\ln\nu^{(2)}$ has a conditional full support (CFS) property,
then
the market given by (\ref{2.1})--(\ref{2.2}) is fully incomplete.
\end{prop}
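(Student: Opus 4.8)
The plan is to fix $\epsilon>0$ and $\alpha\in\mathcal C(\nu_0)$ and to construct $\mathbb Q\ll\mathbb P$ under which $W$ is still a Brownian motion for $\{\mathcal F_t\}$ and $\mathbb Q(\|\nu-\alpha\|_\infty>\epsilon)<\epsilon$. Two reductions are common to both parts. First, whenever the density process of $\mathbb Q$ relative to $\mathbb P$ is a $\mathbb P$-martingale whose covariation with $W$ vanishes (\emph{orthogonal to $W$}), Girsanov's theorem shows that $W$ remains a $\mathbb Q$-Brownian motion for $\{\mathcal F_t\}$; then the law of $W$ under $\mathbb Q$ is again Wiener measure, so every event $B$ measurable with respect to $W$ alone---in particular every event depending only on the $W$-functionals $\alpha$ and, in Part~II, $\nu^{(1)}$---satisfies $\mathbb Q(B)=\mathbb P(B)$. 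Second, mollifying in time, say $\bar\alpha_t:=\tfrac1h\int_{(t-h)\vee0}^{t}\alpha_{s\vee0}\,ds$ with $\alpha$ extended by $\alpha_0$ on $[-h,0]$, produces for $h$ small an $\mathcal F^W$-adapted $\bar\alpha$ with $\bar\alpha_0=\nu_0$, values in the same compact interval as $\alpha$, of class $C^1$ in $t$ with $|\dot{\bar\alpha}|\le2\|\alpha\|_\infty/h$, and $\mathbb P(\|\bar\alpha-\alpha\|_\infty>\epsilon/2)<\epsilon/2$; by the first reduction the same bound holds under any admissible $\mathbb Q$, so it suffices to arrange $\mathbb Q(\|\nu-\bar\alpha\|_\infty>\epsilon/2)<\epsilon/2$.

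For Part~I I would perturb the measure only through $\hat W$. Let $K>0$ be large and let $\tau$ be the first exit time of $\nu$ from a fixed compact neighbourhood $U$ of the range of $\bar\alpha$, and set
\[
\frac{d\mathbb Q}{d\mathbb P}=\mathcal E\!\left(\int_0^{\cdot}\theta_s\,d\hat W_s\right)_{\!T},\qquad
\theta_t:=\frac{-a(t,\nu_t)-K(\nu_t-\bar\alpha_t)+\dot{\bar\alpha}_t}{b(t,\nu_t)}\,\mathbf{1}_{\{t\le\tau\}} .
\]
On $\{t\le\tau\}$ this $\theta$ is bounded (there $b$ is bounded away from $0$ and $a$ is bounded), so Novikov applies, $\mathbb Q$ is a probability measure, its density process is driven by $\hat W$ and hence orthogonal to $W$, and $\tilde W_t:=\hat W_t-\int_0^t\theta_s\,ds$ is a $\mathbb Q$-Brownian motion; consequently $Y_t:=\nu_t-\bar\alpha_t$ solves $dY_t=-KY_t\,dt+b(t,\nu_t)\,d\tilde W_t+c(t,\nu_t)\,dW_t$ on $[0,\tau]$ with $Y_0=0$. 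It then remains to prove $\sup_{t\le T}|Y_{t\wedge\tau}|\to0$ in $\mathbb Q$-probability as $K\to\infty$, which also forces $\mathbb Q(\tau<T)\to0$ since leaving $U$ bounds $|Y|$ away from $0$. The key estimate is that $Y_{t\wedge\tau}=e^{-Kt}\int_0^{t\wedge\tau}e^{Ks}\big(b(s,\nu_s)\,d\tilde W_s+c(s,\nu_s)\,dW_s\big)$ is, uniformly in $t$, sub-Gaussian with variance proxy of order $1/K$; a Gaussian maximal inequality over a deterministic grid of mesh $\Delta$, combined with a Burkholder--Davis--Gundy/exponential bound for the oscillation of $Y$ on each mesh interval, gives the claim once $\Delta$ is chosen with $K\Delta\to0$ and $\log(1/\Delta)=o(K)$ (e.g.\ $\Delta\sim K^{-2}$).

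For Part~II, since $\nu^{(1)}$ and $\alpha$ are $\mathcal F^W$-adapted, I would write $\beta:=\bar\alpha/\nu^{(1)}$, an $\mathcal F^W$-adapted, strictly positive, continuous process with $\beta_0=\nu^{(2)}_0$, and use that $\ln\nu-\ln\bar\alpha=\ln\nu^{(2)}-\ln\beta$, so that $\|\ln\nu^{(2)}-\ln\beta\|_\infty<\delta$ forces $\|\nu-\bar\alpha\|_\infty\le\|\bar\alpha\|_\infty(e^{\delta}-1)$ because $\alpha$ is bounded. I would build $\mathbb Q$ by successive conditioning on a deterministic grid $0=t_0<\dots<t_n=T$: with $\ell_i$ the continuous path on $[t_i,t_{i+1}]$ moving linearly from $\ln\nu^{(2)}_{t_i}$ to $\ln\beta_{t_i}$ on the first half of the interval and constant equal to $\ln\beta_{t_i}$ on the second half, put
\[
A_i:=\Big\{\sup\nolimits_{t\in[t_i,t_{i+1}]}\big|\ln\nu^{(2)}_t-\ell_i(t)\big|<\delta\Big\},\qquad
\frac{d\mathbb Q}{d\mathbb P}:=\prod_{i=0}^{n-1}\frac{\mathbf{1}_{A_i}}{\mathbb P(A_i\mid\mathcal F_{t_i})} .
\]
The conditional probabilities are strictly positive by the conditional full support of $\ln\nu^{(2)}$ together with $\nu^{(2)}\perp W$, so the right-hand side is a nonnegative random variable of expectation one, its density process telescopes, and $\mathbb Q$-a.s.\ every $A_i$ occurs; consequently $\ln\nu^{(2)}_{t_i}$ stays within $\delta$ of $\ln\beta_{t_i}$ at each step, which, together with the mesh estimate that $\mathbb Q\big(\sup_{|s-t|\le\Delta}|\ln\beta_s-\ln\beta_t|\ge\delta\big)$ equals the corresponding $\mathbb P$-probability and tends to $0$, yields $\|\ln\nu^{(2)}-\ln\beta\|_\infty\le3\delta$ for $\Delta$ small, hence $\|\nu-\alpha\|_\infty<\epsilon$ for $\delta$ small. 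Crucially, on each $[t_i,t_{i+1}]$ the event $A_i$ depends only on $\mathcal F_{t_i}$ and on the increment of $\nu^{(2)}$ over $[t_i,t_{i+1}]$, which is independent of the increment of $W$ there, so the martingale $s\mapsto\mathbb E[\mathbf{1}_{A_i}\mid\mathcal F_s]$ on $[t_i,t_{i+1}]$ is orthogonal to $W$; hence the whole density process is orthogonal to $W$ and $W$ stays a $\mathbb Q$-Brownian motion.

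The delicate point, in both parts, is precisely the orthogonality-to-$W$ constraint, which shapes the whole construction. In Part~I it forces the change of measure to act through $\hat W$ only, so $\nu$ keeps an irreducible diffusive component that must be suppressed by strong mean reversion, and the hard part becomes the \emph{uniform-in-time} control of the Ornstein--Uhlenbeck-type error $Y$ (the pointwise $L^2$ bound is immediate, but the supremum requires the sub-Gaussian maximal inequality, the mesh-oscillation bound, and the calibrated choice of $\Delta$). In Part~II it forces the conditioning to proceed step by step with events that never involve the current $W$-increment, and one has to verify both this orthogonality and the self-consistency of the recursively defined targets $\ell_i$, i.e.\ that $\ln\nu^{(2)}$ re-enters each interval within $\delta$ of the level $\ln\beta_{t_i}$.
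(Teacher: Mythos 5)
Your proposal is correct in outline, and its two halves sit differently relative to the paper. Part II is essentially the paper's own argument: the paper also builds the density by successive conditioning along a grid, on events where $\ln\nu^{(2)}$ follows, within tolerance $\delta/n$, a piecewise-linear target determined by the $W$-adapted process $\phi_t=\ln\alpha_t-\ln\nu^{(1)}_t$ (it uses the lagged increments $\phi_{kT/n}-\phi_{(k-1)T/n}$ as targets rather than your ``move to $\ln\beta_{t_i}$ within half an interval'' paths, which spares the mollification of $\alpha$), proves positivity of the conditional probabilities from CFS plus independence, and verifies that $W$ stays a $\mathbb Q$-Brownian motion by the same independence computation that you phrase as orthogonality, concluding via L\'evy's characterization. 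Part I, by contrast, is a genuinely different construction: the paper does not use a feedback drift with gain $K\to\infty$, but a piecewise ``catch-up'' Girsanov drift on $\hat W$ over a partition of mesh $T/n$, truncated at level $n^2$, chosen so that over each interval the increment of $\nu$ reproduces the previous increment of $\alpha$ up to the error terms $I_k,L_k$; the tracking error is then controlled by Burkholder--Davis--Gundy and Markov estimates on these interval quantities, the role of your mollification being played by the modulus of continuity of $\alpha$ over the mesh. Your Ornstein--Uhlenbeck-type scheme gives a cleaner control-theoretic picture and avoids the truncation bookkeeping, at the price of needing the differentiable target $\bar\alpha$ and the uniform-in-time estimate for $Y$, which you correctly identify as the crux and which does go through (the exponential martingale inequality gives a sub-Gaussian tail of order $1/K$ uniformly in $t$, and your grid-plus-oscillation argument with $\Delta\sim K^{-2}$ closes it). Both constructions localize $\nu$ by a stopping time so that $a,b,c$ are controlled and $b$ is bounded away from zero, and both exploit that any $\sigma(W)$-measurable functional (such as $\alpha$ or $\bar\alpha$) has the same law under $\mathbb Q$ as under $\mathbb P$.

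Two small points in Part II deserve care, both handled explicitly in the paper. First, your justification ``the increment of $\nu^{(2)}$ over $[t_i,t_{i+1}]$ is independent of the increment of $W$ there'' is loosely phrased: $\nu^{(2)}$ need not have independent increments; the correct statement is that $A_i\in\mathcal F_{t_i}\vee\sigma(\nu^{(2)})$ and that future increments of $W$ are independent of $\mathcal F_t\vee\sigma(\nu^{(2)})$ (full-path independence of $\nu^{(2)}$ and $W$), which is exactly how the paper shows $\mathbb E_{\mathbb Q}[W_{t_{i+1}}-W_t\mid\mathcal F_t]=0$ before applying L\'evy. Second, positivity of $\mathbb P(A_i\mid\mathcal F_{t_i})$ requires CFS of $\ln\nu^{(2)}$ with respect to the augmented filtration generated by $\nu^{(2)}$ (the paper invokes Pakkanen's Lemma 2.3 to upgrade from the natural filtration) together with a regular-conditional-kernel argument allowing one to plug in the random, $W$-measurable target level $\ln\beta_{t_i}$; your one-line appeal to CFS and independence is the right idea but should be expanded along these lines.
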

Recall that a stochastic process $\Sigma=\{\Sigma_t\}_{t=0}^T$ has the
CFS property if
for all $t\in (0,T]$
$$\mbox{supp} \ \mathbb P(\Sigma_{|[t,T]}|\Sigma_{|[0,t]})=C_{\Sigma_t}[t,T] \ \ \mbox{a.s.,}$$
where
$C_y[t,T]$ is the space of all continuous functions $f:[t,T]\rightarrow\mathbb{R}_{+}$
with $f(t)=y$. In words, the CFS property prescribes that
from any given time on, the asset price path can continue arbitrarily close
to any given path with positive conditional probability.
\begin{rem}\label{rem.compare}
The notion of fully incomplete markets and the CFS property are in general not comparable.

It is well known that a Brownian motion with drift satisfies the CFS property.
Hence, e.g., the log price of the Black--Scholes model satisfy the CFS property, but being complete, it is clearly not fully incomplete.

Let us give a simple example of a fully incomplete market which does not satisfy the CFS property.
Consider a probability space which supports two independent Brownian motions $W$ and $\hat W$ and
a Bernoulli random variable $\xi\sim Ber (0.5)$ which is independent of $W$ and $\hat W$. Consider the market given by
(\ref{2.1})--(\ref{2.2}) with $r\equiv 0$ and
$\nu_t=e^{\hat W_t}\mathbb{I}_{\xi=0}$, $t\in [0,T]$. 
By looking at probability measures which are supported on the
event $\{\xi=0\}$, we deduce from Proposition \ref{prop2.1} 
(by applying any one of the two statements) that this market is fully incomplete.
On the other hand, we observe that it does not satisfy
the CFS property. Indeed, consider the event $D=\{S_t=S_0 \ \forall t\leq T/2\}$. Clearly, $D=\{\xi=1\}$. Hence,
$\mathbb P(D)=\mathbb P(\xi=1)=1/2$ and the conditional support
$\mbox{supp} \ \mathbb P(S_{|[T/2,T]}|D)$ contains only one function $f:[T/2,T]\rightarrow \mathbb R$ which is defined by
$f\equiv S_0$. This is a contradiction to the CFS property.

Even if we insist on strictly positive volatility, we can still construct similar examples
that produce martingales with atoms such that, with positive probability, the conditional support 
$\mbox{supp} \ \mathbb P(S_{|[T/2,T]}|S_{|[0,T/2]})$ is a finite set. This is clearly a contradiction
to the CFS property. 
Thus, without adding additional assumptions (it is an interesting question to understand what these assumptions would be),
full incompleteness in general does not imply the CFS property.
 \end{rem}
We end this section with several examples of fully incomplete markets.
\begin{example}
\textbf{Stochastic Volatility Models.}
\\
I. The Heston (1993) model:
\begin{eqnarray*}
&dS_t=S_t(r_t \,dt+ \sqrt\mathbf U_t \,dW^S_t)\\
&d\mathbf U_t=\kappa(\theta-\mathbf U_t) \,dt +\xi\sqrt\mathbf U_t \,dW^{\mathbf U}_t,
\end{eqnarray*}
 where $\{W^S_t\}_{t=0}^T$ and
 $\{W^{\mathbf U}_t\}_{t=0}^T$ are two Brownian motions with constant correlation
 $\rho\in (-1,1)$. Moreover,
 $\kappa,\theta,\xi>0$ are constants which satisfy
$2\kappa\theta>\xi^2$. The last condition guarantees that $\mathbf U$ is strictly positive.
Thus, applying It\^o's formula for $\nu_t:=\sqrt\mathbf U_t$ and
using the relations
$W^S=W$ and $W^{\mathbf U}=\rho W+\sqrt{1-\rho^2} \hat W$, we obtain that
$\nu$ is solution of (\ref{2.3}) with
$a(t,x)=\frac{\kappa}{2}\left(\frac{\theta}{x}-x\right)-\frac{\xi^2}{8 x}$,
$b(t,x)\equiv\frac{\xi}{2}\sqrt{1-\rho^2}$ and $c(t,x)\equiv\frac{\xi}{2}\rho$.\\
II. The Hull--White (1987) model:
\begin{eqnarray*}
&dS_t=S_t(r_t \,dt+ \sqrt\mathbf U_t \,dW^S_t)\\
&d\mathbf U_t=\mathbf U_t (\kappa  \,dt+\theta \,dW^{\mathbf U}_t),
\end{eqnarray*}
 where $\{W^S_t\}_{t=0}^T$ and
 $\{W^{\mathbf U}_t\}_{t=0}^T$ are two Brownian motions with constant correlation
 $\rho\in (-1,1)$ and $\kappa,\theta\in\mathbb R$ are constants.
Clearly, $\nu:=\sqrt\mathbf U$ satisfies the assumptions of Proposition \ref{prop2.1} (part I) . \\
III. The Scott (1987) model:
\begin{eqnarray*}
&dS_t=S_t( r \,dt +\lambda e^{\mathbf U_t} \,dW^S_t)\\
&d\mathbf U_t=-\kappa \mathbf U_t \,dt +\theta \,dW^{\mathbf U}_t,
\end{eqnarray*}
where $\{W^S_t\}_{t=0}^T$ and
 $\{W^{\mathbf U}_t\}_{t=0}^T$ are two Brownian motions with constant correlation
 $\rho\in (-1,1)$ and $\lambda,\kappa,\theta>0$ are constants. By applying It\^o's formula for
 $\nu:=\lambda e^{\mathbf U}$, this
model can be treated as the Heston model.
 \qed
 \end{example}
 \begin{example}
\textbf{Rough Volatility Models.}\\
Consider a model where the log--volatility is a fractional Ornstein--Uhlenbeck process (see Gatheral, Jaisson and Rosenbaum (2014)).
Formally, the volatility process is given by
$\nu_t=\nu_0 e^{\kappa U_t}$ where $\kappa>0$ is a constant and
$U_t=e^{-\lambda t}\int_{0}^t e^{\lambda u }dB^H_u$. Here,
$B^H=\{B^H_t\}_{t=0}^T$ is a fractional Brownian motion with Hurst parameter $H\in (0,1)$
and $\lambda>0$ is a constant.
The integral above is defined by integration by parts
$$\int_{0}^t e^{\lambda u}\,dB^H_u=e^{\lambda t}B^H_t-\lambda\int_{0}^t B^H_u e^{\lambda u}\,du.$$
Let $\{\mathcal F_t\}_{t=0}^T$
be the usual augmentation of the filtration generated by $W$ and $\nu$.

Assume that we have the representation
$B^H=\rho B^{H,1}+\sqrt{1-\rho^2} B^{H,2}$ where $\rho\in (-1,1)$ is a constant
and $B^{H,1},B^{H,2}$ are independent fractional Brownian motions. Moreover, assume that
$B^{H,1}$ is adapted to the filtration generated by $W$.
Then
$\nu_t=\nu^{(1)}_t\nu^{(2)}_t$ where
\begin{align*}
\nu^{(1)}_t&=\nu_0 \exp\left(\kappa \rho e^{-\lambda t}\int_{0}^t e^{\lambda u}\,dB^{H,1}_u\right),\\
\nu^{(2)}_t&=\exp\left(\kappa \sqrt{1-\rho^2} e^{-\lambda t}\int_{0}^t e^{\lambda u}\,dB^{H,2}_u\right).
\end{align*}
By Guasoni, Rasonyi and Schachermayer (2008, Proposition 4.2), fractional Brownian motion has the CFS property. This together with
Pakkanen (2010, Theorem 3.3) gives
that
$\ln\nu^{(2)}$ has the CFS property. Thus, as the assumptions of
the second statement in
Proposition \ref{prop2.1}
hold true, the market is fully incomplete.
\qed
\end{example}

\section{Semi--static Hedging}\label{sec:3+}\setcounter{equation}{0}
In this section, we deal with the super--replication of European options.
As the exercise time of the European options is fixed
(compared to game options), then for deterministic interest rates,
it is possible to discount the asset price and the payoffs
of the European options. Therefore, for that case, we can directly
assume without loss of generality that the interest rate is $r\equiv 0$.
For stochastic interest rate, writing the discounted payoffs of European options
in terms of the discounted asset price is not always possible,
and even when possible the new payoff function can loose its continuity.
Thus, in the case of stochastic interest rate, the assumption $r\equiv 0$ is not natural.
However, to make things simpler, we assume in this section that $r\equiv 0$.

Denote by $C[0,T]$ the space of all continuous functions
$f:[0,T]\rightarrow\mathbb R$ equipped with the uniform topology.
Consider a path-dependent European option with the payoff
$X=H(S)$, where $H:C[0,T]\rightarrow \mathbb{R}$ is a bounded and uniformly continuous function.
We assume that there are $N\geq 0$ static positions which can be bought at
time zero for a given price. Formally, the payoffs of the static positions
are given
by $X_i=h_i(S)$ where
$h_1,...,h_N:C[0,T]\rightarrow\mathbb R$ are bounded and uniformly continuous.
The price of the static position $X_i$ is denoted by $\mathcal P_i$.
% % % % % %
Therefore, the initial stock price $S_0$ and the prices $\mathcal P_1,...,\mathcal P_N$
of the options $h_1,...,h_N$ are the data available in the market.
%On the other hand the filtered probability space
%$(\Omega,\mathcal F,\{\mathcal F_t\}_{t=0}^T,\mathbb P)$
%is the investor belief on the financial market.
% % % % % %

First, consider the case where the investor has probabilistic belief,
modeled by the given filtered probability space
$(\Omega,\mathcal F,\{\mathcal F_t\}_{t=0}^T,\mathbb P)$
introduced before. In this setup, a hedging strategy is a
pair $\pi=(c,\gamma)$ where $c=(c_0,...,c_N)\times\mathbb{R}^{N+1} $
and $\gamma={\{\gamma_t\}}_{t=0}^T$ is a progressively measurable process
with $\int_0^T \gamma_t^2 \nu^2_t S^2_t\,dt<\infty \ \mathbb P$-a.s., such that
the stochastic integral $\int \gamma dS$ is uniformly bounded from below. The corresponding
portfolio value at the maturity date is given by
 \begin{equation*}
 Z^\pi_T=c_0+\sum_{i=1}^N c_i h_i(S)+\int_{0}^T \gamma_u dS_u.
 \end{equation*}
The initial cost of the hedging strategy $\pi$ is
\begin{equation}\label{3+.2}
C(\pi)=c_0+\sum_{i=1}^N c_i\mathcal P_i.
\end{equation}
A strategy $\pi$ is a super--replicating strategy if
\begin{equation*}
Z^{\pi}_T\geq H(S) \ \ \mathbb P\mbox{-a.s.}
\end{equation*}
Then, the super--replication price is defined by
\begin{equation*}
V^{\mathbb P}_{h_1,...,h_N}(H)=\inf\{C(\pi): \ \pi \ \mbox{is} \ \mbox{a}
\ \mbox{super--replicating} \ \mbox{strategy}\}.
\end{equation*}
%The initial stock price $S_0$ and the prices $\mathcal P_1,...,\mathcal P_N$
%of the options $f_1,...,f_N$ is the data available in the market.
%On the other hand the filtered probability space
%$(\Omega,\mathcal F,\{\mathcal F_t\}_{t=0}^T,\mathbb P)$
%is the investor belief on the financial market.

Next, consider the case where the investor has no probabilistic belief,
just the market data given as information.
Such an investor is modeled via the robust hedging approach.
Let $\{\mathbb S_t\}_{t=0}^T$ be the canonical process on the space $C[0,T]$, i.e.
$\mathbb S_t(\omega)=\omega(t)$, $\omega\in C[0,T]$.
Consider the corresponding
canonical filtration
$\mathbb F_t=\sigma\{\mathbb S_u: u\leq t\}$.
Denote by $\mathcal M$ the set of all probability measures
$Q$ on $C[0,T]$ such that under $Q$,
the process $\{\mathbb S_t\}_{t=0}^T$ is a strictly positive local martingale (with respect to its natural filtration) and
$\mathbb S_0=S_0$ $Q$-a.s.

In the robust setup, a hedging strategy is a pair $\pi=(c,\gamma)$ where $c\in \mathbb{R}^{N+1}$
and $\gamma={\{\gamma_t\}}_{t=0}^T$ is an adapted process ( w.r.t. the canonical filtration)
of bounded variation with left-continuous paths such that the process
$\int \gamma d\mathbb S$ is uniformly bounded from below, where here, we define
$$\int_{0}^T \gamma_u d\mathbb S_u:=\gamma_T\mathbb S_T-\gamma_0\mathbb S_0-\int_{0}^T \mathbb S_t d\gamma_t$$
using the standard Stieltjes integral for the last integral. The corresponding portfolio value at the maturity
date $T$ is given as before by
\begin{equation*}
\mathbb Z^{\pi}_T(\mathbb S)=c_0+\sum_{i=1}^N c_i h_i(\mathbb S)+\int_{0}^T \gamma_u d\mathbb S_u.
\end{equation*}

Moreover, as before, the cost of the hedging strategy
$\pi$ is given by (\ref{3+.2}).
The robust super--replication price is defined by
$$
  V_{h_1,...,h_N}(H)=\inf\{C(\pi): \exists  \pi \ \mbox{such} \ \mbox{that}
\ \mathbb Z^\pi_T(\mathbb S)\geq H(\mathbb S) \ \forall \mathbb S \ \mbox{strictly} \ \mbox{positive}, \ \mathbb S_0=S_0\}.
$$
The following theorem
says that
if
the financial market is fully incomplete, then
the corresponding super--replication price is the same as in the model
free setup. Namely, for fully incomplete markets the knowledge of the
probabilistic model does not reduce the super--replication price.
\begin{thm}\label{thm2.1}
Assume that the financial market given by
$\{S_t\}_{t=0}^T$ is fully incomplete. Then
$V^{\mathbb P}_{h_1,...,h_N}(H)=V_{h_1,...,h_N}(H).$ (might be $-\infty$).
\end{thm}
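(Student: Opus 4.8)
The plan is to establish the equality as two inequalities. The inequality $V^{\mathbb P}_{h_1,\dots,h_N}(H)\le V_{h_1,\dots,h_N}(H)$ is elementary: any robust super--replicating strategy $\pi=(c,\gamma)$ is also admissible in the probabilistic model, since $\gamma$ is left--continuous of bounded variation, so the Stieltjes integral $\int_0^T\gamma\,d\mathbb S$ agrees, $\mathbb P$--a.s., with the stochastic integral $\int_0^T\gamma\,dS$ (integration by parts), while $\mathbb P$--a.s. the trajectory of $S$ is continuous, strictly positive and starts at $S_0$, hence belongs to the class of paths over which $\pi$ dominates $H$. Thus $Z^\pi_T\ge H(S)$ $\mathbb P$--a.s. with the same cost $C(\pi)$ in both setups, and taking the infimum over $\pi$ gives the claim.

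For the reverse inequality I would first record a weak--duality bound in the probabilistic model which, crucially, does not require matching the option prices. Let $\pi=(c,\gamma)$ be a super--replicating strategy and let $Q\ll\mathbb P$ be an arbitrary absolutely continuous local martingale measure for $S$. Since $\int\gamma\,dS$ is a $Q$--local martingale bounded from below, it is a $Q$--supermartingale; combining $E_Q\bigl[\int_0^T\gamma\,dS\bigr]\le 0$ with $Z^\pi_T\ge H(S)$ $Q$--a.s. yields
\[
C(\pi)\ =\ c_0+\sum_{i=1}^N c_i\mathcal P_i\ \ge\ E_Q\bigl[H(S)\bigr]+\sum_{i=1}^N c_i\bigl(\mathcal P_i-E_Q[h_i(S)]\bigr).
\]

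Next I would invoke the robust superhedging duality for bounded, uniformly continuous payoffs in the continuous--path setting (cf.\ Dolinsky and Soner (2014) and Acciaio et al.\ (2015)). Either the set $\mathcal M^{\ast}:=\{Q\in\mathcal M:\ E_Q[h_i(\mathbb S)]=\mathcal P_i,\ i=1,\dots,N\}$ is empty, in which case there is a model--free arbitrage which is simultaneously a probabilistic arbitrage, so both prices equal $-\infty$; or $V_{h_1,\dots,h_N}(H)=\sup_{Q\in\mathcal M^{\ast}}E_Q[H(\mathbb S)]$. In the latter case fix $\tilde Q\in\mathcal M^{\ast}$. By Lemma \ref{lem.density} (the density property of a fully incomplete market), there exist absolutely continuous local martingale measures $Q_n\ll\mathbb P$ such that the law of $S$ under $Q_n$ converges weakly on $C[0,T]$ to $\tilde Q$. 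As $H$ and $h_1,\dots,h_N$ are bounded and continuous, $E_{Q_n}[H(S)]\to E_{\tilde Q}[H(\mathbb S)]$ and $E_{Q_n}[h_i(S)]\to\mathcal P_i$; substituting $Q=Q_n$ into the displayed inequality and letting $n\to\infty$ — the coefficients $c_i$ being fixed, the residual terms vanish — gives $C(\pi)\ge E_{\tilde Q}[H(\mathbb S)]$. Taking the supremum over $\tilde Q\in\mathcal M^{\ast}$ and then the infimum over $\pi$ proves $V^{\mathbb P}_{h_1,\dots,h_N}(H)\ge V_{h_1,\dots,h_N}(H)$.

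The principal obstacle is Lemma \ref{lem.density} itself — that in a fully incomplete market the laws of the discounted stock under absolutely continuous local martingale measures are weakly dense in $\mathcal M$ — whose proof is deferred and rests on Definition \ref{dfn2.1} together with a Girsanov--type change of measure approximating the volatility of an arbitrary target local martingale by a process in $\mathcal C(\nu_0)$. Within the present argument the remaining care is needed in two places: verifying that the terms $c_i(\mathcal P_i-E_{Q_n}[h_i(S)])$ vanish (this uses that $c_i$ does not depend on $n$ and that $h_i$ is bounded and continuous), and treating the degenerate case $\mathcal M^{\ast}=\emptyset$, where the model--free arbitrage must be exhibited explicitly and checked to force $V^{\mathbb P}_{h_1,\dots,h_N}(H)=-\infty$ as well.
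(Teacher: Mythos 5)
Your first inequality and your core analytic ingredients (the supermartingale weak-duality bound under an absolutely continuous martingale measure, Lemma \ref{lem.density}, and the limit passage using boundedness and continuity of $H,h_1,\dots,h_N$) are exactly the paper's. The genuine gap is in how you convert the robust price into a dual quantity: you invoke a semi-static duality $V_{h_1,\dots,h_N}(H)=\sup_{Q\in\mathcal M^{*}}\mathbb E_Q[H(\mathbb S)]$ over the calibrated set $\mathcal M^{*}=\{Q\in\mathcal M:\mathbb E_Q[h_i(\mathbb S)]=\mathcal P_i\}$, together with the dichotomy ``$\mathcal M^{*}=\emptyset$ implies both prices are $-\infty$''. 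Neither half is available here. The duality with finitely many statically traded options requires a qualification condition of the type of Assumption \ref{asm.noarbitrage} (prices in the interior of the attainable set, cf.\ Hou--Ob{\l}{\'o}j, Assumption 3.7), which Theorem \ref{thm2.1} deliberately does not impose --- that is precisely why the statement allows the value $-\infty$. And the dichotomy is false: take a single option with $\mathcal P_1=\sup_{Q\in\mathcal M}\mathbb E_Q[h_1(\mathbb S)]$ not attained. Then $\mathcal M^{*}=\emptyset$, yet for every $c_1$ one can pick $Q\in\mathcal M$ with $\mathbb E_Q[h_1(\mathbb S)]$ arbitrarily close to $\mathcal P_1$ from below, which shows $c_1\mathcal P_1+\sup_{Q\in\mathcal M}\mathbb E_Q[H(\mathbb S)-c_1h_1(\mathbb S)]\ge-\|H\|_{\infty}$, so the robust price (and hence also the probabilistic one, by your own first inequality combined with the weak-duality bound) is finite, not $-\infty$; so in this boundary case neither branch of your argument produces the asserted equality, even though the theorem still holds.

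The paper avoids this entirely by never dualizing the statics: for each \emph{fixed} vector $(c_1,\dots,c_N)$ it writes $V^{\mathbb P}_{h_1,\dots,h_N}(H)=\inf_{c}\big(\sum_i c_i\mathcal P_i+V^{\mathbb P}(H-\sum_i c_ih_i)\big)$, bounds $V^{\mathbb P}(H-\sum_i c_ih_i)\ge\sup_{\mathbb Q\in\mathcal Q}\mathbb E_{\mathbb Q}[H(S)-\sum_i c_ih_i(S)]\ge\sup_{Q\in\mathcal M}\mathbb E_Q[H(\mathbb S)-\sum_i c_ih_i(\mathbb S)]$ via the supermartingale property and Lemma \ref{lem.density}, and then applies the \emph{no-statics} robust duality (Hou--Ob{\l}{\'o}j, Theorem 3.2) to the bounded uniformly continuous claim $H-\sum_i c_ih_i$ to identify the last supremum with $V(H-\sum_i c_ih_i)$; taking $\inf_c$ recovers $V_{h_1,\dots,h_N}(H)$ with no calibration assumption whatsoever. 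To repair your proof, replace the appeal to the calibrated duality and the $\mathcal M^{*}=\emptyset$ case analysis by this fix-$c$-first reduction; the rest of your argument then goes through unchanged.
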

\begin{proof}
Clearly, $V^{\mathbb P}_{h_1,...,h_N}(H)\leq V_{h_1,...,h_N}(H)$, and so we need to establish the inequality
$
V^{\mathbb P}_{h_1,...,h_N}(H)\geq V_{h_1,...,h_N}(H).$

For a measurable function $\hat H:C[0,T]\rightarrow\mathbb R$
denote by $V^{\mathbb P}(\hat H)$ and $V(\hat H)$, the classical (i.e.  w.r.t. the probabilistic belief $\mathbb P$)
 and the robust super--replication price
of the claim $\hat H(S)$ for the case $N=0$, respectively.
Denote by $\mathcal Q$ the set of all probability measures
$\mathbb Q\ll \mathbb P$ such that $\{W_t\}_{t=0}^T$ is a
Brownian motion with respect to $\mathbb Q$ and
the filtration $\{\mathcal F_t\}_{t=0}^T$.

For any hedging strategy $\pi=(c,\gamma)$ and $\mathbb Q\in\mathcal Q$, the stochastic integral
$$\int_{0}^t \gamma_u dS_u=\int_{0}^t \gamma_u \nu_u S_udW_u, \ \ t\in [0,T]$$ is a local martingale bounded from below, hence a supermartingale. Thus, from Lemma \ref{lem.density} and the fact that
$H(S)-\sum_{i=1}^N c_i h_i(S)$ is a bounded and continuous function, we get
\begin{align*}
V^{\mathbb P}_{h_1,...,h_N}(H)&=\inf_{(c_1,...,c_N)\in\mathbb R^N}\bigg
(\sum_{i=1}^N c_i\mathcal P_i+V^{\mathbb P}\Big(H-\sum_{i=1}^N c_i h_i\Big)\bigg)\\
&\geq
\inf_{(c_1,...,c_N)\in\mathbb R^N}\bigg(\sum_{i=1}^N c_i\mathcal P_i+\sup_{\mathbb Q\in\mathcal Q}
\mathbb E_{\mathbb Q}[H(S)-\sum_{i=1}^N c_i h_i(S)]\bigg)\\
&\geq\inf_{(c_1,...,c_N)\in\mathbb R^N}
\bigg(\sum_{i=1}^N c_i\mathcal P_i+\sup_{Q\in\mathcal M}\mathbb
E_Q[H(\mathbb S)-\sum_{i=1}^N c_i h_i(\mathbb S)]\bigg).
\end{align*}
By applying
Hou and Ob{\l}{\'o}j (2015, Theorem~3.2) for the bounded and uniformly continuous claim
$H(S)-\sum_{i=1}^N c_i h_i(S)$, we obtain
\begin{align*}
& \ \inf_{(c_1,...,c_N)\in\mathbb R^N}
\bigg(\sum_{i=1}^N c_i\mathcal P_i+\sup_{Q\in\mathcal M}\mathbb E_Q[H(\mathbb S)-\sum_{i=1}^N c_i h_i(\mathbb S)]\bigg)\\
= & \ \inf_{(c_1,...,c_N)\in\mathbb R^N}\bigg
(\sum_{i=1}^N c_i\mathcal P_i+V\Big(H-\sum_{i=1}^N c_i h_i\Big)\bigg)\\
=& \ V_{h_1,...,h_N}(H)
\end{align*}
and the result follows.
\end{proof}
Next, we prove for the probabilistic model that there is an optimal super--replicating strategy,
i.e., a strategy which achieves the minimal cost.
To this end,
we need an additional assumption which rules out an arbitrage opportunity, i.e., a
case where $V^{\mathbb P}_{h_1,...,h_N}(H)=V_{h_1,...,h_N}(H)=-\infty$.
Thus, as in Hou and Ob{\l}{\'o}j (2015) (see Assumption 3.7 and Remark 3.8 there)
we assume the following.
\begin{asm}\label{asm.noarbitrage}
There is $\varepsilon>0$ such that for any
$(y_1,...,y_N)\in \prod_{i=1}^N [\mathcal P_i-\varepsilon,\mathcal P_i+\varepsilon]$
we can find a probability measure
$Q\in\mathcal M$
for which
$\mathbb E_{Q}[h_i(\mathbb S)]=y_i$, $i=1,...,N$.
\end{asm}
 \begin{thm}\label{thm.new}
Consider the super--replication problem on the filtered
probability space $(\Omega, \mathcal F,\{\mathcal F_t\}_{t=0}^T,\mathbb P)$ described above.
If Assumption \ref{asm.noarbitrage} holds true, then
there exists a super--replicating portfolio strategy $\hat\pi$ such
that $C(\hat\pi)=V^{\mathbb P}_{h_1,...,h_N}(H)$.
\end{thm}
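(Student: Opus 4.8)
The plan is to show that the infimum defining $V^{\mathbb P}_{h_1,\dots,h_N}(H)$ is attained by extracting a convergent subsequence from a minimizing sequence of super--replicating strategies, using the Koml\'os compactness principle for the stochastic--integral parts together with a compactness argument in $\mathbb R^{N+1}$ for the static positions. First I would take a minimizing sequence $\pi^n=(c^n,\gamma^n)$ with $C(\pi^n)\to V^{\mathbb P}_{h_1,\dots,h_N}(H)$ and $Z^{\pi^n}_T\ge H(S)$ $\mathbb P$--a.s. By Assumption~\ref{asm.noarbitrage} the value is finite (indeed bounded below, since otherwise one produces a model--free arbitrage contradicting the existence of the calibrated measures $Q\in\mathcal M$), so $C(\pi^n)$ is bounded. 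The first genuine step is to argue that the vectors $c^n$ stay bounded in $\mathbb R^{N+1}$: if $|c^n|\to\infty$ along a subsequence, normalize and pass to a limit direction $\bar c\neq0$; testing the super--replication inequality against a measure $Q\in\mathcal M$ from Assumption~\ref{asm.noarbitrage} (where the stochastic integral $\int\gamma^n\,dS$, being a supermartingale started at $0$, has nonpositive expectation) forces $\bar c_0+\sum_{i=1}^N \bar c_i\,\mathbb E_Q[h_i(\mathbb S)]\ge 0$ for all admissible price vectors $\mathbb E_Q[h_i(\mathbb S)]$ in a neighborhood of $(\mathcal P_1,\dots,\mathcal P_N)$, while $C$ evaluated in the limit direction would be $\le 0$; combining these forces $\bar c=0$, a contradiction. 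Hence, passing to a subsequence, $c^n\to \hat c$.

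Next I would handle the stochastic integrals. Set $M^n_t:=\int_0^t \gamma^n_u\,dS_u$; each $M^n$ is a supermartingale with $M^n_0=0$ and $M^n_T\ge H(S)-c^n_0-\sum_i c^n_i h_i(S)$, which is bounded below uniformly in $n$ (using boundedness of $H$, the $h_i$, and $c^n$). Apply the Koml\'os--type compactness lemma (Delbaen--Schachermayer (1994), Lemma~A1.1) to the sequence of terminal values $M^n_T$, or more precisely work with the convex--combination / forward--convex--combination argument on the integrands to obtain $\tilde\gamma^n\in\mathrm{conv}(\gamma^n,\gamma^{n+1},\dots)$ whose stochastic integrals converge in an appropriate sense (e.g.\ in the semimartingale/$\mathbf{H}^1$ or Fatou sense used in the superhedging literature) to $\int\hat\gamma\,dS$ for some admissible $\hat\gamma$, with $\int\hat\gamma\,dS$ still uniformly bounded below. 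Convexity of the cost and of the super--replication constraint ensures that the convex combinations $\tilde\pi^n$ remain super--replicating and still minimizing. Then $\hat\pi=(\hat c,\hat\gamma)$ satisfies, by lower semicontinuity (Fatou's lemma applied along the convergence), $Z^{\hat\pi}_T\ge \liminf \tilde Z^{\tilde\pi^n}_T\ge H(S)$ $\mathbb P$--a.s., and $C(\hat\pi)=\hat c_0+\sum_i\hat c_i\mathcal P_i=\lim C(\tilde\pi^n)=V^{\mathbb P}_{h_1,\dots,h_N}(H)$.

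The main obstacle I expect is the compactness/closedness step for the stochastic integrals: one must ensure that the limiting object is again of the form $\int\hat\gamma\,dS$ for a \emph{predictable} integrand satisfying the admissibility requirement $\int_0^T\hat\gamma_t^2\nu_t^2 S_t^2\,dt<\infty$ and $\int\hat\gamma\,dS$ bounded below, and that no mass ``escapes to the boundary'' — this is exactly the delicate point in the Delbaen--Schachermayer closedness results, and here it is facilitated by the fact that $S$ is a continuous (positive) semimartingale with a single Brownian driver $W$, so the integrands can be identified via the integral representation and the Koml\'os lemma applied to them directly in $L^0(\mathbb P\otimes dt)$ or to the terminal gains in $L^0(\mathbb P)$. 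A secondary technical point is making the uniform lower bound on $M^n_T$ and on the limiting integral precise; this follows once $c^n$ is known to be bounded, so the order of the argument — first bound $c^n$, then compactify $\gamma^n$ — matters. Everything else (finiteness of the price under Assumption~\ref{asm.noarbitrage}, convexity of the constraint set, continuity of $C$ on $\mathbb R^{N+1}$) is routine.
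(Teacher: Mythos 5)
Your overall architecture matches the paper's proof: take a minimizing sequence, first bound the static parts $c^{(n)}$, then apply the Koml\'os-type lemma of Delbaen--Schachermayer to the terminal gains and use the Fatou-closedness of the set of claims dominated by stochastic integrals to produce an admissible $\hat\gamma$; the convex combinations preserve the constraint and the cost, so the limit is optimal. The second half of your argument is essentially the paper's.

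However, there is a genuine gap in your boundedness step for $c^{(n)}$. You propose to ``test the super--replication inequality against a measure $Q\in\mathcal M$ from Assumption \ref{asm.noarbitrage}.'' But the measures in $\mathcal M$ live on the canonical space $C[0,T]$, whereas the super--replication inequality $c^{(n)}_0+\sum_i c^{(n)}_i h_i(S)+\int_0^T\gamma^{(n)}_u\,dS_u\geq H(S)$ holds $\mathbb P$--a.s.\ on $\Omega$, and the integrand $\gamma^{(n)}$ is an $\{\mathcal F_t\}$--progressive process on $\Omega$, not a functional of the path of $S$. You can therefore only take expectations under measures $\mathbb Q\ll\mathbb P$ for which $\int\gamma^{(n)}\,dS$ is a supermartingale (i.e.\ $\mathbb Q\in\mathcal Q$), and Assumption \ref{asm.noarbitrage} says nothing about the option prices $\mathbb E_{\mathbb Q}[h_i(S)]$ generated by such measures --- it only calibrates measures in $\mathcal M$. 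This is exactly where the paper invokes full incompleteness through Lemma \ref{lem.density}: the laws of the discounted price under $\mathcal Q$ are weakly dense in $\mathcal M$, so for the calibrated $Q\in\mathcal M$ with $\mathbb E_Q[h_i(\mathbb S)]=\mathcal P_i\mp\varepsilon$ (sign chosen according to $\mathrm{sign}(c^{(n)}_i)$) one finds $\mathbb Q\in\mathcal Q$ with $\mathbb E_{\mathbb Q}[h_i(S)]$ within $\varepsilon/2$ of these targets, and only then does the supermartingale estimate yield $\frac{\varepsilon}{2}\sum_i|c^{(n)}_i|\leq 1+2\|H\|_\infty$. Without this bridge your argument fails: in a complete market (e.g.\ Black--Scholes) the prices $\mathbb E_{\mathbb Q}[h_i(S)]$, $\mathbb Q\in\mathcal Q$, do not cover a neighbourhood of $(\mathcal P_1,\dots,\mathcal P_N)$ even if Assumption \ref{asm.noarbitrage} holds, and the normalization/limit-direction argument cannot rule out $\bar c\neq 0$. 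So you must insert the density lemma (i.e.\ use that the market is fully incomplete) before testing; once that is done, your normalization variant of the boundedness argument and the rest of the proof go through as in the paper.
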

\begin{proof}
Let $\pi^{(n)}=(c^{(n)},\gamma^{(n)})$, $n\geq 1$, be a sequence of super--replicating strategies
for which
$\lim _{n\rightarrow\infty}C(\pi^{(n)})= V^{\mathbb P}_{h_1,...,h_N}(H).$
Clearly $V^{\mathbb P}_{h_1,...,h_N}(H)\leq ||H||_{\infty}$. Hence without loss of generality,
we assume that for any $n$, $C(\pi^{(n)})<||H||_{\infty}+1$.
Let us prove that the sequence $c^{(n)}\in\mathbb R^{N+1}$, $n\in\mathbb N$, is bounded. Choose $n\in\mathbb N$.
We deduce from Assumption \ref{asm.noarbitrage}  that there exists
a probability measure $Q\in\mathcal M$ such that for any $i=1,\dots,N$,
$$ \mathbb E_{Q} [h_i(\mathbb S)]=
\begin{cases}
\mathcal P_i-\varepsilon & \text{if} \ c^{(n)}_i\geq 0   \\
\mathcal P_i+\varepsilon   & \text{if} \ c^{(n)}_i<0.
\end{cases}
$$
Lemma~\ref{lem.density} implies that there exists a probability
measure
$\mathbb Q\in \mathcal Q$ such that
$ \mathbb E_{\mathbb Q} [h_i(S)]<
\mathcal P_i-\varepsilon/2$ if $c^{(n)}_i\geq 0$ and
$ \mathbb E_{\mathbb Q} [h_i(S)]>
\mathcal P_i+\varepsilon/2$ if $c^{(n)}_i<0$. Thus, using the supermartingale property of each $\int \gamma^{(n)}\, dS$ under $\mathbb Q$, we obtain
\begin{align}
||H||_{\infty}+1&\geq C(\pi^{(n)}) \label{3+.6-}\\
&\geq c^{(n)}_0+
\mathbb E_{\mathbb Q} [\sum_{i=1}^N c^{(n)}_i h_i(S)]+\frac{\varepsilon}{2}\sum_{i=1}^N |c^{(n)}_i|\nonumber\\
&\geq
\mathbb E_{\mathbb Q} [H(S)-\int_{0}^T\gamma^{(n)}_t dS_t]+\frac{\varepsilon}{2}
\sum_{i=1}^N |c^{(n)}_i|\nonumber\\
&\geq -||H||_{\infty}+\frac{\varepsilon}{2}\sum_{i=1}^N |c^{(n)}_i|. \nonumber
\end{align}
%where the last inequality follows from the fact that
%$$\mathbb E_{\mathbb Q}\left(\int_{0}^T\gamma^{(n)}_t dS_t\right)=
%\mathbb E_{\mathbb Q} \left(\int_{0}^T\gamma^{(n)}_t S_t \nu_t dW_t\right)\leq 0.$$
From (\ref{3+.6-}), we derive that $|c^{(n)}_i|\leq \frac{2(1+2||H||_{\infty})}{\varepsilon}$ for all $n\in \mathbb N,i=1,\dots,N$. Moreover, by applying (\ref{3+.6-}) again we get that $c^{(n)}_0$ is uniformly bounded (in $n$).
  We conclude the uniform boundedness of $c^{(n)}$ as required.
Thus, there exists a subsequence (for simplicity we still denote it by $n$) such that
$\lim_{n\rightarrow\infty}c^{(n)}=\hat c=(\hat c_0,...,\hat c_N).$

Next, we apply the Koml\'{o}s theorem. Set
$Z_n=\int_{0}^T \gamma^{(n)}_t dS_t$, $n\in\mathbb N$.
Clearly $Z_n\geq H(S)-c_0-\sum_{i=1}^N c_i h_i(S)$ and so the sequence $Z_n$, $n\in\mathbb N$, is uniformly bounded from below.
Thus, by Delbaen and Schachermayer (1994, Lemma A 1.1) we obtain the existence of a sequence $\hat Z_n\in conv (Z_n,Z_{n+1},....)$, $n\in\mathbb N$,
such that $\hat Z_n$, $n\in\mathbb N$, converges a.s.
Denote the limit by $\hat Z$.
Using the fact that
the set of random variables which are dominated by stochastic integrals with respect to a
local martingale is Fatou closed, see Delbaen and Schachermayer (2006, Remark~9.4.3),
we can find a trading strategy
$\hat\gamma=\{\hat\gamma_{t}\}_{t=0}^T$ such that
$\int_{0}^t \hat\gamma_u dS_u$, $t\in [0,T]$ is uniformly bounded from below
and $\int_{0}^T\hat\gamma_t dS_t\geq \hat Z$.
Finally, we argue that $\hat\pi:=(\hat c,\hat\gamma)$ is an optimal super--replicating strategy.
Clearly, $C(\hat\pi)=\lim_{n\rightarrow\infty} C(\pi_n)=V^{\mathbb P}_{h_1,...,h_N}(H)$. Moreover,
it is straightforward to see that
$\int_{0}^T\hat\gamma_t dS_t\geq \hat Z\geq H(S)-\hat c_0-\sum_{i=1}^N \hat c_i h_i(S)$ a.s.,
and the result follows.
\end{proof}
\begin{rem}
A priori, it seems that we used a weaker assumption than Assumption \ref{asm.noarbitrage}.
Indeed, we only used that there exists $\varepsilon>0$ such that for any
$(j_1,...,j_N)\in\{-1,1\}^N$ there exists a probability measure $Q_{j_1,...,j_N}\in\mathcal M$
for which
$\mathbb E_{Q_{j_1,...,j_N}}[h_i(\mathbb S)]=\mathcal P_i+\varepsilon j_i$, $i=1,...,N$. However, by taking convex combinations
of such probability measures, we see that the weaker condition is in fact equivalent to Assumption \ref{asm.noarbitrage}.
% Hence, for our proof of Theorem \ref{thm.new},
%Assumption \ref{asm.noarbitrage} is essential.
\end{rem}
\begin{rem}
Let us remark that for the model free hedging, the existence of a
super--replicating strategy with minimal cost is an open question.
\end{rem}
\begin{rem}
Usually, the common static positions are call options.
However, due to the Put--Call parity, we can replace
the call options by put options and hence $h_1,...,h_N$ can be assumed to be bounded.
A natural question
is what if $H$ is unbounded, for instance if
$H(S)=\max_{0\leq t\leq T}S_t$ is a lookback option.
In this case we can show that if $h_1,...,h_N$ are bounded, then for fully incomplete markets
the super--replication is infinity. Namely, if the static positions are bounded, we cannot super--replicate a lookback option.
Thus, in order to have a reasonable super--replication price, we need to
assume that one of the $h_i$ is unbounded as well. For instance we can take a power option
$h_i(S)=S^p_T$, $p>1$. In this case Theorem
\ref{thm2.1} is much more delicate and in particular, requires some uniform integrability conditions.
Thus, the question whether Theorem
\ref{thm2.1} can be extended to the unbounded case remains open.
\end{rem}

\section{Hedging of Game Options}\label{sec:3++}\setcounter{equation}{0}
% % % % % % % % % % % % % % % % % % % % % % % % % %
In this section, we deal with the super-replication of game options.
 Consider a financial market which is given by
 (\ref{2.1})--(\ref{2.2}). We assume that Definition \ref{dfn2.1} holds true, i.e., the
 market is fully incomplete.

Consider a game option with maturity date $T$ and payoffs which are given by
\begin{equation*}\label{def:game-option-payoff}
Y_t=f_1(S_t)\,\,\mbox{and}\,\, X_t=f_2(S_t),\, t\in [0, T],
\end{equation*}
where $f_1,f_2:\mathbb{R}_{+}\rightarrow\mathbb{R}_{+}$ are
% CONVEX NOT NECESSARY!!!
continuous functions with $f_1\leq f_2$.
In addition\added{,} we assume that there exists $L>1$ such that for all $x,y>0$
\begin{equation}\label{2.3++}
|f_i(x)-f_i(y)|\leq L|x-y|\left(1+\frac{f_i(x)}{x}+\frac{f_i(y)}{y}\right), \ \ i=1,2.
\end{equation}
The condition (\ref{2.3++}) is weaker than assuming Lipschitz continuity,
and allows to consider Power options (in addition to \added{e.g.} call and put options).
We deduce from (\ref{2.3++}) that for any $x>0$
$$f_i\left(\frac{2L}{2L-1}x\right)\leq 2\left(\frac{L}{2L-1}x+\left(1+\frac{L}{2L-1}\right)f_i(x)\right), \ \ i=1,2.$$
For $\hat f_i(x):=\max(x,f_i(x))$, $i=1,2$ we obtain
$$\hat f_i\left(\frac{2L}{2L-1}x\right)\leq 2\left(\frac{L}{2L-1}+1+\frac{L}{2L-1}\right)\hat f_i(x)=\frac{8L-2}{2L-1}\hat f_i(x)$$
and so
$\hat f_i(x)\leq \max_{0\leq y\leq 1}\hat f_i(y)\left(\frac{8L-2}{2L-1}\right)^n$ for
 $\left(\frac{2 L}{2L-1}\right)^{n-1}\leq x\leq \left(\frac{2L}{2L-1}\right)^{n}$, $n\in\mathbb N$.
We conclude that there exists $\tilde L, N>1$ such that
for any $x>0$
\begin{equation}\label{2.4}
f_i(x)\leq \hat f_i(x)\leq \tilde L(1+x^N), \ \ i=1,2.
\end{equation}
Next, we introduce the notion of hedging. Recall
$\tilde  S_t=\frac{S_t}{B_t}$, $t\in [0,T]$, the discounted stock price, which
by (\ref{2.1})--(\ref{2.2})\added{,} has dynamics
$d\tilde S_t=\nu_t \tilde S_t \, dW_t$.
A self financing portfolio with an initial capital $z$ is a pair $\pi=(z,\gamma)$
where ${\{\gamma_t\}}_{t=0}^T$ is a progressively measurable process which satisfies
$\int_{0}^T\gamma^2_t \nu^2_t \tilde S^2_t\added{\,dt}<\infty$ a.s.
The corresponding portfolio value is given by
\begin{equation}\label{2.4+}
Z^\pi_t=B_t\left(\frac{z}{B_0}+\int_{0}^t \gamma_u \,d\tilde S_u\right)=
B_t\left(\frac{z}{B_0}+\int_{0}^t \gamma_u \tilde S_u \nu_u \,dW_u\right), \ \ t\in [0,T].
\end{equation}

As usual\deleted{,} for game options\added{,} a hedging strategy
consists of a self financing portfolio and a cancellation time. Thus, formally,
 a hedging strategy is a pair $(\pi,\sigma)$ such that
 $\pi$ is a self financing portfolio and $\sigma\leq T$ is a stopping time.
A hedging strategy $(\pi,\sigma)$ is super--replicating the game option if
for any $t\in [0,T]$
\begin{equation}\label{2.5}
Z^\pi_{t\added{\wedge\sigma}}\geq f_2(S_{\sigma})\mathbb{I}_{\sigma<t}+f_1(S_t)\mathbb{I}_{t\leq\sigma} \ \ \mbox{a.s.}
\end{equation}
The portfolio value process ${\{Z^\pi_t\}}_{t=0}^T$ is continuous and so,
if (\ref{2.5}) holds true for any $t\in [0,T]$, then
$$\mathbb P\big(\forall{t}\in [0,T], Z^\pi_{t\added{\wedge \sigma}}  \geq  f_2(S_{\sigma})\mathbb{I}_{\sigma<t}+f_1(S_t)\mathbb{I}_{t\leq\sigma}\big)=1.$$

 A hedging strategy $(\pi,\sigma)$ will be called trivial if it is of the form
\begin{equation}\label{2.6}
\gamma\equiv\gamma_0, \ \ \mbox{and} \ \ \sigma=\inf\{t: S_t\notin D\}\wedge{T}
\end{equation}
where $D\subset\mathbb R$ is an interval (not necessarily finite).

Define the super--replication price
\begin{equation*}
V=\inf\{Z^\pi_0:\exists\,\mbox{hedging strategy }\,(\pi,\sigma) \mbox{ super-replicating the option}
\}.
\end{equation*}
Also, set
\begin{equation*}
\mathbf V=\inf\{Z^\pi_0:\exists\,\mbox{trivial hedging strategy }\,(\pi,\sigma) \mbox{ super-replicating the option}\}.
\end{equation*}
Clearly the investor can cancel at $\sigma=0$ and so
 $V\leq \mathbf V\leq f_2(S_0)$.

Introduce the set $\mathbb H$ of all continuous functions
$h:(0,\infty)\rightarrow\mathbb R$ such that $f_1\leq h\leq f_2$ and
$h$ is concave in every interval in which $h<f_2$. \replaced{We deduce from}{From} Ekstr\"om and Villeneuve (2006, Lemma~2.4) that there exists a smallest element in $\mathbb H$ and \replaced{which is equal}{equals} to
$$g(x):=\inf_{h\in \mathbb H} h(x).$$

\replaced{Throughout this section, we}{We} will assume the following.
\begin{asm}\label{asm2.2}
At least one of the following conditions hold.\\
i. The interest rate is zero, i.e., $r\equiv 0.$\\
ii. For the initial stock price $S_0$ we assume that
if $g(S_0)<f_2(S_0)$, then
$$g(S_0)-S_0\partial _{+} g(S_0) \geq 0,$$
where $\partial _{+} g(S_0)$ is the right derivative at $S_0$\deleted{,} \added{(}\replaced{which}{it} exists because $g$ is concave in a
neighbourhood of $S_0$\added{)}.
\end{asm}
In Subsection \ref{sec:examples}, we
analyze in details the second condition in
Assumption \ref{asm2.2}.
In particular\added{,} we will see that it is satisfied for
most of the common payoff functions.

Next, for any $x\in \mathbb R_{+}$ introduce the open interval
$$K_x=\big(\sup\{z\leq x: g(z)=f_2(z)\},\,\inf\{z\geq  x: g(z)=f_2(z)\}\big)$$
where as usual, supremum and infimum\deleted{,} over an empty set are equal to
$-\infty$ and $\infty$, respectively.
Define the stopping time
\begin{equation*}
\hat\sigma=\inf\{t:S_t\not\in K_{S_{0}}\}\wedge T,
\end{equation*}
where we set $\hat\sigma=0$ if the set $K_{S_0}$
is empty \added{(where $(a,a):=\emptyset$ for any constant $a\in \mathbb{R}$)}.

The following theorem is the main result of this section.
It says that in fully incomplete markets, the super--replication
price of a game option is the cheapest cost of a trivial
super--replication hedging strategy, which can be calculated explicitly.
\begin{thm}\label{thm2.1game}
The super--replication price of the game
option introduced above is given by
\begin{equation*}
\mathbf V= V=g(S_0).
\end{equation*}
Furthermore,
define the buy--and--hold portfolio
strategy $\hat\pi=(g(S_0),\hat\gamma)$ by
% % % % % % % % % % % % % %
\begin{equation*}
\hat\gamma\equiv\left\{ \begin{array}{ll}
 \partial_{+}g(S_0)\ \ & \mbox{if } \  g(S_0)<f_2(S_0),\\
    % &\\
 0\ \ & \mbox{otherwise.}
\end{array}\right.
\end{equation*}
% % % % % % % % % % % % % % %
%$\hat\pi=(g(S_0),\hat\gamma)$ by
%$$
%\hat\gamma\equiv\partial_{+}g(S_0) \ \ \mbox{if} \ \
%g(S_0)<f_2(S_0)$$
%and
%$$\hat\gamma\equiv 0 \ \ \mbox{otherwise}.$$
Then $(\hat\pi,\hat\sigma)$ is the cheapest hedging strategy super-replicating the option.
\end{thm}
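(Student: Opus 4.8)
The theorem has two halves: the identity $\mathbf V=V=g(S_0)$, and the optimality of $(\hat\pi,\hat\sigma)$. Since $V\le\mathbf V$ by definition and $(\hat\pi,\hat\sigma)$ will be shown below to be a trivial super-replicating strategy with $Z^{\hat\pi}_0=g(S_0)$, it suffices to prove $\mathbf V\le g(S_0)$ (via $(\hat\pi,\hat\sigma)$) and $V\ge g(S_0)$; the optimality of $(\hat\pi,\hat\sigma)$ then follows at once.

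\emph{Upper bound.} If $g(S_0)=f_2(S_0)$, then $K_{S_0}=\emptyset$, so $\hat\sigma=0$ and $\hat\gamma\equiv0$; cancelling at time $0$ with capital $f_2(S_0)$ obviously satisfies (\ref{2.5}), whence $\mathbf V\le f_2(S_0)=g(S_0)$. If $g(S_0)<f_2(S_0)$, then $S_0\in K_{S_0}$, and since $g\in\mathbb H$ and $g<f_2$ on $K_{S_0}$, $g$ is concave on $\overline{K_{S_0}}$, its right derivative $\partial_+g(S_0)$ exists, the tangent line $\ell(x):=g(S_0)+\partial_+g(S_0)(x-S_0)$ satisfies $\ell\ge g$ on $\overline{K_{S_0}}$, and $g=f_2$ at the finite endpoints of $K_{S_0}$. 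A direct computation gives $Z^{\hat\pi}_t=B_t\big(g(S_0)-S_0\partial_+g(S_0)\big)+\partial_+g(S_0)\,S_t$, so under Assumption \ref{asm2.2} — either $r\equiv0$, in which case $Z^{\hat\pi}_t=\ell(S_t)$, or $g(S_0)-S_0\partial_+g(S_0)\ge0$ together with $B_t\ge1$ — we have $Z^{\hat\pi}_t\ge\ell(S_t)$ for all $t$. To check (\ref{2.5}): for $t\le\hat\sigma$ one has $S_t\in\overline{K_{S_0}}$, so $Z^{\hat\pi}_{t}\ge\ell(S_t)\ge g(S_t)\ge f_1(S_t)$; for $t>\hat\sigma$ necessarily $\hat\sigma<T$, hence $S_{\hat\sigma}\in\partial K_{S_0}$ and $Z^{\hat\pi}_{\hat\sigma}\ge\ell(S_{\hat\sigma})\ge g(S_{\hat\sigma})=f_2(S_{\hat\sigma})$. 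Thus $(\hat\pi,\hat\sigma)$ is a trivial super-replicating strategy with cost $g(S_0)$, so $\mathbf V\le g(S_0)$ and a fortiori $V\le g(S_0)$.

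\emph{Lower bound.} Fix any super-replicating $(\pi,\sigma)$ and put $z=Z^\pi_0$. By (\ref{2.5}) and $f_1,f_2\ge0$, the stopped discounted wealth $Z^\pi_{\cdot\wedge\sigma}/B_{\cdot\wedge\sigma}=z+\int_0^{\cdot\wedge\sigma}\gamma_u\,d\tilde S_u$ is a nonnegative local martingale, hence a supermartingale, under every $\mathbb Q\in\mathcal Q$. Optional sampling at an arbitrary buyer stopping time $\tau$ therefore yields
\begin{equation*}
z\ \ge\ \mathbb E_{\mathbb Q}\!\left[\frac{f_2(S_\sigma)}{B_\sigma}\,\mathbb I_{\sigma<\tau}+\frac{f_1(S_\tau)}{B_\tau}\,\mathbb I_{\tau\le\sigma}\right]\qquad\text{for every }\mathbb Q\in\mathcal Q\text{ and every stopping time }\tau.
\end{equation*}
It remains to show that the right-hand side can be made $\ge g(S_0)-\epsilon$ for every $\epsilon>0$, regardless of the (fixed) seller time $\sigma$. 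This is a purely model-free estimate, and it is here that full incompleteness is used: invoking Lemma \ref{lem.density}, one picks $\alpha\in\mathcal C(\nu_0)$ (with $\alpha$ suitably large) and a corresponding $\mathbb Q\in\mathcal Q$ so that the law of $\tilde S$ under $\mathbb Q$ is close to the law of a continuous martingale started at $S_0$ that moves through $K_{S_0}$ quickly and recurrently and exits $K_{S_0}$ before time $T$ with probability near $1$; one then lets the buyer exercise at the hitting time of the exercise set $\{g=f_1\}$ (and at $T$ otherwise). Using the minimality of $g$ in $\mathbb H$ — whereby $g$ is affine on each component of $\{f_1<g<f_2\}$, equals $f_1$ on the rest of $\{g<f_2\}$, and equals $f_2$ elsewhere — a martingale computation shows that against an arbitrary $\sigma$ this buyer strategy secures an expected discounted payoff of at least $g(S_0)$ up to an error tending to $0$ as the approximating martingale is taken faster and $\mathbb Q$ approaches the ideal law. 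Plugging this into the displayed inequality and letting $\epsilon\downarrow0$ gives $z\ge g(S_0)$, hence $V\ge g(S_0)$; combined with the upper bound, $\mathbf V=V=g(S_0)$, and $(\hat\pi,\hat\sigma)$ from the first step attains this value.

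\emph{Main obstacle.} The delicate point is the limiting argument behind the model-free estimate. The payoffs $f_1,f_2$ grow only polynomially in $S$ (cf.\ (\ref{2.4})), while the high-volatility measures $\mathbb Q$ used to approximate the ideal recurrent martingale have poorly controlled tails, so one must establish uniform integrability of $f_i(\tilde S)$ along the approximating sequence (this is where (\ref{2.4}) and the moment lemmas of Section \ref{sec:lemmas} enter), and one must make the buyer-side estimate uniform over \emph{all} $\{\mathcal F_t\}$-stopping times $\sigma$, i.e.\ rule out that information beyond $\tilde S$ available to the seller lowers the game value below $g(S_0)$ under the chosen $\mathbb Q$. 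The discount factor $B_{\cdot\wedge\sigma}$ must be carried through the estimate and handled via Assumption \ref{asm2.2}.
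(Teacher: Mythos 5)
Your upper bound is correct and is essentially the paper's argument: the tangent-line computation for $Z^{\hat\pi}$ on $K_{S_0}$, with Assumption \ref{asm2.2} controlling the term $B_{t\wedge\hat\sigma}(g(S_0)-S_0\partial_+g(S_0))$, shows $(\hat\pi,\hat\sigma)$ super-replicates at cost $g(S_0)$, so $\mathbf V\le g(S_0)$. The supermartingale/optional-sampling reduction at the start of your lower bound also matches the paper (there the buyer's time is the deviation time $\tau=\inf\{t:|\alpha_t-\nu_t|\ge\delta\}\wedge\mathbf T$, with $\mathbf T<T$ chosen so that $\int_0^{\mathbf T}r_t\,dt<\epsilon$ — note the discounting is absorbed this way, not via Assumption \ref{asm2.2}, which enters only in the upper bound).

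The genuine gap is that the heart of the lower bound is asserted rather than proved. You need, for every $\epsilon>0$, a single bounded adapted volatility $\alpha\in\mathcal C(\nu_0)$ such that
\begin{equation*}
\inf_{\zeta}\ \mathbb E_{\mathbb P}\Big[f_2(\mathbf S^{(\alpha)}_{\zeta})\mathbb I_{\zeta<\mathbf T}+f_1(\mathbf S^{(\alpha)}_{\mathbf T})\mathbb I_{\zeta=\mathbf T}\Big]\ >\ g(S_0)-\epsilon ,
\end{equation*}
uniformly over the seller's cancellation times $\zeta$; this is exactly Corollary \ref{4.new}, which the paper obtains by showing that the value function $G$ of the optimal stopping problem under volatility uncertainty satisfies $f_1\le G\le f_2$, is continuous, time-scaling invariant, and concave on every interval where $G<f_2$ (the binary-splitting martingale construction plus optimal stopping theory in Lemma \ref{lem4.0}), hence $G\in\mathbb H$ and $G\ge g$, and then passes from $\mathcal A$ to $\mathcal C(\nu_0)$ (Lemma \ref{lem4.1}). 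Your substitute — a buyer who exercises at the hitting time of $\{g=f_1\}$, a measure making $\tilde S$ ``fast and recurrent'' on $K_{S_0}$, and a ``martingale computation'' resting on the claim that the minimal $g$ is affine on $\{f_1<g<f_2\}$ — is not carried out: no such adversarial volatility is constructed, the structural claim about $g$ is not proved (the paper never needs it), and the two difficulties you yourself flag (uniformity over \emph{all} $\{\mathcal F_t\}$-stopping times $\sigma$, which the paper resolves via the Markov property reducing to Brownian-filtration stopping times, and the integrability/closeness estimates, which the paper gets pathwise from Definition \ref{dfn2.1}, the stopping time $\tau$, the Lipschitz-type bound \eqref{3.28} and the moment bound from \eqref{2.4}) are left unresolved. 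Note also that Lemma \ref{lem.density} alone (weak convergence of the laws of $\tilde S$) cannot control expectations involving a general $\mathcal F$-stopping time $\sigma$, which is why the paper works directly with Definition \ref{dfn2.1} rather than with the density lemma in this part of the argument. As it stands, the inequality $V\ge g(S_0)$ is not established.
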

\begin{proof}
As $\mathbf V\geq V$, Theorem \ref{thm2.1game} will follow from the inequality
\begin{equation}\label{2.30}
V\geq g(S_0)
\end{equation}
and the fact that $(\hat\pi,\hat\sigma)$
is a super-replicating strategy.
\replaced{Inequality}{The inequality} (\ref{2.30})
is the difficult part and will be proved in
Section \ref{sec:proof}. The fact that $(\hat\pi,\hat\sigma)$ is a super-replicating strategy
is simpler and we provide its proof here.

First, if $g(S_0)=f_2(S_0)$, then the statement is trivial. \replaced{Therefore}{Thus},
assume that
$g(S_0)<f_2(S_0)$.
Let $t\in [0,T]$. Observe that on the event \replaced{$\hat\sigma<t$}{$t<\hat\sigma$},
% we have
$g(S_{\hat\sigma})=f_2(S_{\hat\sigma})$.
From
Assumption \ref{asm2.2}, it follows that if
$\frac{B_{t\wedge\hat\sigma}}{B_0}>1$ then $g(S_0)-S_0\partial_{+} g(S_0)\geq 0$.
This together with
the fact that $g$ is concave in the interval $K_{S_0}$ yields
\begin{align}\label{mew}
Z^{\hat\pi}_{t\wedge\hat\sigma}&=\frac{B_{t\wedge\hat\sigma}}{B_0}\big(g(S_0)-S_0\partial_{+}
g(S_0)\big)+\partial_{+} g(S_0)S_{t\wedge\hat\sigma}\\
&\geq g(S_0)+\partial_{+} g(S_0)(S_{t\wedge\hat\sigma}-S_0)
  \geq g(S_{t\wedge\hat\sigma})\geq
f_2(S_{\hat\sigma})\mathbb{I}_{\hat\sigma<t}+f_1(S_t)
\mathbb{I}_{t\leq\hat\sigma}\replaced{.}{,}\nonumber
\end{align}
\end{proof}
\begin{rem}\label{rem4.3}
Let us notice that (\ref{mew}) holds true pathwise, and hence
the hedging strategy $(\hat\pi,\hat\sigma)$
is a super--replicating one in the model free sense.
Thus, from Theorem \ref{thm2.1game} we conclude that for
fully incomplete markets the super--replication price coincides
with the model free super--replication price.
\end{rem}
\begin{rem}
In Example \ref{ex:call} we will see that without the second part of Assumption \ref{asm2.2},
the hedge $(\hat\pi,\hat\sigma)$ may not be super--replicating, and so
Theorem \ref{thm2.1game} may not hold true.
\end{rem}

\vspace{10pt}
% % % % % % % % % % % % % % % % % % % % % % % % % % % % % %
\subsection{Examples}\label{sec:examples}
In this subsection, we give several examples for applications of Theorem~\ref{thm2.1game}.
In the case where both $f_1$ and $f_2$ are convex, we can calculate $g(S_0)$ and $\partial_+ g(S_0)$ explicitly. To this end, we assume throughout this
subsection that $f_1$ and $f_2$ are convex functions. Set
\begin{equation}\label{eq:Def-A}
A=\left\{ \begin{array}{ll}
\inf\big\{y>0 : \frac{f_2(y)-f_1(0)}{y}\leq \partial_{+}f_2(y)\big\} &\mbox{if } \ f_1(0)<f_2(0)\\
0 &\mbox{if } \ f_1(0)=f_2(0),
\end{array}\right.
\end{equation}
as well as
\begin{equation*}
\beta=\left\{\begin{array}{ll} \frac{f_2(A)-f_1(0)}{A} \, \mathbb{I}_{A<\infty} + \infty \, \mathbb{I}_{A=\infty} & \ \mbox{if } \ f_1(0)<f_2(0)\\
\partial_{+}f_2(0) & \ \mbox{if } \ f_1(0)=f_2(0).
\end{array}\right.
\end{equation*}
%For the case $f_1(0)=f_2(0)$ we have $A=0$ and $\beta= \partial_{+}f_2(0)$.
%We use the convention $\inf\emptyset=\infty$.
Moreover, set
\begin{equation*}
m:=\lim_{t\rightarrow\infty} \partial_{+}f_1(t), \quad \quad \rho:=\inf\{t :  \partial_{+}f_2(t)>m \}.
\end{equation*}
Observe that the terms $A,\beta,m,\rho$ can take the value $\infty$. Moreover,
if $m=\infty$, then $\lim_{t\rightarrow\infty} \partial_{+}f_2(t)=\infty$ as well.
In this case, from the convexity of $f_2$
$$\lim_{t\rightarrow\infty} f_2(t)-t\partial_{+}f_2(t)\leq
\lim_{t\rightarrow\infty}f_2(1)+(t-1)\partial_{+}f_2(t)-t\partial_{+}f_2(t)=
-\infty.$$
Thus $A,\beta<\infty$. We conclude that in any case $\beta\wedge m<\infty$.

Define the function $g:\mathbb{R}_{+}\rightarrow\mathbb{R}_{+}$
by
\begin{equation}\label{eq:g-convex-def}
g(x)=\left\{ \begin{array}{ll}
 (f_1(0) + \beta x)\mathbb{I}_{x<A}+
 f_2(x)\mathbb{I}_{A\leq x<\rho}+(f_2(\rho)+m(x-\rho))\mathbb{I}_{x\geq\rho}  &\mbox{if } \ \beta<m\\
f_1(0) + m x &\mbox{if } \ m\leq \beta. \ \ \phantom{m<\infty; \,}
        \end{array}\right.
\end{equation}
% % % % % % % % % % % % % % % % % % % % % % % % % % % % %
\begin{lem}\label{le:min-g}
If both $f_1$ and $f_2$ are convex, then the function $g$ defined in \eqref{eq:g-convex-def} is the minimal element in $\mathbb{H}$.
 \end{lem}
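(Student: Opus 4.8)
The plan is to prove separately that $g\in\mathbb H$ and that $g\le h$ for every $h\in\mathbb H$; together these two facts say precisely that $g$ is the minimal element of $\mathbb H$.

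First I would verify that the $g$ of \eqref{eq:g-convex-def} lies in $\mathbb H$. It is continuous, its only possible breakpoints being $x=A$ and $x=\rho$, where $f_1(0)+\beta A=f_2(A)$ by the definition of $\beta$ and where the affine piece $f_2(\rho)+m(x-\rho)$ meets $f_2$. On each component of $\{g<f_2\}$ --- that is, $(0,A)$ and $(\rho,\infty)$ when $\beta<m$, and $(0,\infty)$ when $m\le\beta$ --- the function $g$ is affine, hence concave, so the concavity requirement of $\mathbb H$ holds. The bounds $f_1\le g\le f_2$ come from the convexity of $f_1,f_2$: for the upper bound one analyses the secant slope $\phi(y):=(f_2(y)-f_1(0))/y$, whose derivative has the sign of $\partial_+f_2(y)-\phi(y)$, so the definition of $A$ makes $\phi$ decreasing on $(0,A)$ and hence $f_1(0)+\beta y<f_2(y)$ for $y<A$, while for $x\ge\rho$ one uses that $\partial_+f_2$ is non-decreasing with $\partial_+f_2\le m$ on $(0,\rho)$; for the lower bound one uses $f_1\le f_2$ (so that, by convexity of $f_1$, the slope $(f_1(x)-f_1(0))/x$ stays below $\beta$ on $(0,A]$) and $\partial_+f_1\le m$ everywhere. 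The degenerate cases $f_1(0)=f_2(0)$ (so $A=0$), $A=\infty$ (which forces $m\le\beta$), $\rho=\infty$, and $m=\infty$ are handled the same way. I would record from this analysis two structural facts for later use: $g$ is in fact convex on $(0,\infty)$ with $\partial_+g\le m$ everywhere, and $\lim_{x\downarrow0}g(x)=f_1(0)$.

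Next I would prove minimality. Fix $h\in\mathbb H$ and suppose $U:=\{x>0:h(x)<g(x)\}$ is nonempty. Since $g\le f_2$ we have $U\subseteq\{h<f_2\}$, so $h$ is concave on each component $(a,b)$ of $U$. By continuity $h=g$ at any finite endpoint of $(a,b)$ lying in $(0,\infty)$; if $a=0$ then $\liminf_{x\downarrow0}h(x)\ge f_1(0)=\lim_{x\downarrow0}g(x)$; and if $b=\infty$ then $\partial_+h\ge m$ on $(a,\infty)$, for otherwise $h$ would be dominated near $+\infty$ by an affine function of slope $<m$, contradicting $h\ge f_1$ and $m=\lim_{t\to\infty}\partial_+f_1(t)$ (in particular $b=\infty$ is impossible if $m=\infty$). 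When both endpoints are finite, $h$ (concave, with $h=g$ at $a$ and $b$) dominates the chord of $g$ over $[a,b]$, which in turn dominates $g$ there because $g$ is convex; when $b=\infty$, $h(x)\ge h(a)+m(x-a)\ge g(x)$ on $(a,\infty)$ since $\partial_+g\le m$; the case $a=0$ is the limiting analogue of these, using $\lim_{x\downarrow0}g(x)=f_1(0)$. In every case $h\ge g$ on $(a,b)$, contradicting $(a,b)\subseteq U$. Hence $U=\emptyset$ and $g\le h$.

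The hard part will be the first step: it is there that the precise, and possibly infinite, meaning of $A$, $\beta$, $\rho$ and $m$ must be used carefully, both for the bounds $f_1\le g\le f_2$ and, most importantly, for the convexity of $g$ together with the estimate $\partial_+g\le m$. Once those structural properties of $g$ are in hand, the second step is a short comparison between a concave and a convex function, the only subtle point being the behaviour at $+\infty$, which is controlled by the asymptotic slope $m$ of $f_1$.
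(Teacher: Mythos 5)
Your proposal is correct and follows essentially the same route as the paper: membership $g\in\mathbb H$ (which the paper dismisses as immediate ``by definition''), followed by a comparison of a concave competitor with the convex $g$ on a maximal interval where the competitor lies strictly below $g$, handling a finite right endpoint by convexity of the difference and an infinite one via the slope bounds $\partial_+ g\le m$ and $\partial_+ h\ge m$ forced by $h\ge f_1$. The only cosmetic difference is that you compare $g$ with an arbitrary $h\in\mathbb H$ rather than with the minimal element $g_{min}$ whose existence the paper borrows from Ekstr\"om and Villeneuve, which if anything makes the argument self-contained.
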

\begin{proof}
%By \cite[Lemma~2.4]{EkstromVilleneuve.06}), there exists an (unique) minimal element in $\mathbb{H}$.
 %Clearly,
 By definition, we see that $g \in \mathbb{H}$. Denote by $g_{min}$ the minimal element of $\mathbb{H}$.
 Then, $g_{min}(0)=f_1(0)=g(0)$.
 Assume by contradiction that there exists $x>0$ for which $g_{min}(x)<g(x)$.
Set,
 $$y=\inf\{t<x: g_{min}(t)<g(t) \  \mbox{on}  \ \mbox{the}  \ \mbox{interval} \ (t,x)\}$$
 and $$z=\sup\{t>x:g_{min}(t)<g(t) \  \mbox{on}  \ \mbox{the}  \ \mbox{interval} \ (x,t)\}.$$
  By continuity of $g_{min},g,$
  we have $y<x<z$. By definition of $\mathbb{H}$, $g_{min}$ is concave on $I:=(y,z)$ as $g_{min}<g\leq f_2$ on $I$.
  Observe that $g$ is convex on $\mathbb{R}_+$. Therefore, if $z<\infty$ we would get that
  $g-g_{min}$ is a convex function which is strictly positive on $I$ and satisfies
  $g(z)-g_{min}(z)=g(y)-g_{min}(y)=0$. But this is not possible and we conclude that $z=\infty$. Thus, $g_{min}<f_2$ on $I=(y,\infty)$ and so $g_{min}$ is concave
  on $(y,\infty)$. This together with the fact that $g_{min}\geq f_1$ gives
  $\inf_{t>y}\partial_{+}g_{min}(t)\geq m$.
  We derive from (\ref{eq:g-convex-def}) that
   $\sup_{t>0}\partial_{+}g(t)\leq m$.
  Thus,
  $g_{min}-g$ is non decreasing in the interval $(y,\infty)$, and so from
  the equality $g(y)-g_{min}(y)=0$ we conclude that $g_{min} \geq  g$ on $(y,\infty)$, this is
   a contradiction.
  \end{proof}
% % % % % % % % % % % % % % % % % % % % % % % % % %
We obtain from (\ref{eq:g-convex-def}) that if the initial stock price satisfies
$S_0\leq \rho$, then the second condition in Assumption \ref{asm2.2} is satisfied.
In particular, if $\rho=\infty$ this holds true trivially. Observe that
$$
\rho=\infty \Leftrightarrow \sup_{t>0}\partial_{+}f_1(t)=\sup_{t>0}\partial_{+}f_2(t).$$
This brings us to the following
immediate\deleted{(without proof)} Corollary.
 \begin{cor}\label{co:constant-penalty}
 ${}$\\
If at least one of the below conditions holds: \\
i. $f_2(x)=f_1(x)+\Delta$ for some constant $\Delta>0$ (i.e. constant penalty),\\
ii. $\sup_{t>0}\partial_{+}f_2(t)=0$ (for instance Put options),\\
iii. $\sup_{t>0}\partial_{+}f_1(t)=\infty$ (for instance Power options),\\
then the second condition in Assumption \ref{asm2.2} is satisfied.
 \end{cor}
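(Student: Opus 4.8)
The plan is to reduce the statement to the criterion isolated just before it: it is already shown that $S_0\le\rho$ implies the second condition in Assumption~\ref{asm2.2}, and that $\rho=\infty$ if and only if $\sup_{t>0}\partial_{+}f_1(t)=\sup_{t>0}\partial_{+}f_2(t)$. Hence it suffices to verify, under each of the three hypotheses, that $\rho=\infty$; then $S_0\le\rho$ holds trivially and the corollary follows.

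First I would record three elementary facts about convex functions $f_1,f_2:(0,\infty)\to\mathbb R_{+}$ with $f_1\le f_2$. (a) Since each $\partial_{+}f_i$ is nondecreasing, $\sup_{t>0}\partial_{+}f_i(t)=\lim_{t\to\infty}\partial_{+}f_i(t)$; in particular $m=\sup_{t>0}\partial_{+}f_1(t)$. (b) $\sup_{t>0}\partial_{+}f_1(t)\le\sup_{t>0}\partial_{+}f_2(t)$: if not, then $M:=\sup_{t>0}\partial_{+}f_2(t)<\infty$, and one can pick a real $s$ with $M<s<\sup_{t>0}\partial_{+}f_1(t)$ together with $T_0\ge1$ such that $\partial_{+}f_1\ge s$ on $[T_0,\infty)$, whence $f_1(t)-f_2(t)\ge\big(f_1(T_0)-f_2(1)\big)+s(t-T_0)-M(t-1)\to+\infty$, contradicting $f_1\le f_2$. (c) $m\ge0$: since $\partial_{+}f_1(t)\le m$ for all $t$, the case $m<0$ would give $f_1(t)\le f_1(1)+m(t-1)\to-\infty$, contradicting $f_1\ge0$.

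With these facts in hand the three cases are immediate, each amounting to showing $\rho=\infty$, where $\rho=\inf\{t:\partial_{+}f_2(t)>m\}$ (and $\inf\emptyset=\infty$). Under (i), $f_2=f_1+\Delta$ gives $\partial_{+}f_2\equiv\partial_{+}f_1$, so $\sup_{t>0}\partial_{+}f_2(t)=\sup_{t>0}\partial_{+}f_1(t)$. Under (ii), (a)--(c) give $0\le m=\sup_{t>0}\partial_{+}f_1(t)\le\sup_{t>0}\partial_{+}f_2(t)=0$, so $m=0$ and $\partial_{+}f_2(t)\le0\le m$ for every $t$; thus $\{t:\partial_{+}f_2(t)>m\}=\emptyset$ and $\rho=\infty$. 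Under (iii), $m=\sup_{t>0}\partial_{+}f_1(t)=\infty$, so again $\{t:\partial_{+}f_2(t)>m\}=\emptyset$ and $\rho=\infty$. In all three cases $\rho=\infty\ge S_0$, and the recalled criterion yields the second condition in Assumption~\ref{asm2.2}.

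Since the corollary is essentially bookkeeping, I do not anticipate a genuine obstacle; the only points requiring mild care are the handling of the value $+\infty$ in the definitions of $m$ and $\rho$ (relevant in case (iii)) and giving an actual argument for the comparison in fact (b) rather than merely asserting it.
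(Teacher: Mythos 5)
Your proposal is correct and follows exactly the route the paper intends: the corollary is stated as immediate from the observation preceding it (that $S_0\leq\rho$ suffices and that $\rho=\infty$ iff $\sup_{t>0}\partial_{+}f_1(t)=\sup_{t>0}\partial_{+}f_2(t)$), and your verification that $\rho=\infty$ in each of the three cases, including the elementary facts (a)--(c) about convex functions, simply fills in the bookkeeping the paper leaves unwritten.
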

Next, we give several explicit
examples
for applications of Theorem~\ref{thm2.1game}.
Given $f_1(x)$ convex, let $f_2(x)= c f_1(x) + \Delta$, where $c\geq 1, \Delta \geq0$. %Moreover, for simplicity, we assume that there is no interest rate (i.e. $r=0$).
Recall the game trading strategy $(\hat{\pi},\hat{\sigma})$ which was defined in Theorem~\ref{thm2.1game}.
%
% % % % % % % % % % % % % % % % % % % % % % % % % % % % % %
\begin{example}[Call option]\label{ex:call}
Let $K>0$ be a constant. Consider a game call option
\begin{equation*}
f_1(S_t)=(S_t-K)^+,  \quad f_2(S_t)=c (S_t-K)^+ + \Delta.
\end{equation*}
%
% % % % % % % % % % % % % % % % % % % % % % % % % % % % % % %5
We distinguish between two cases.
\begin{enumerate}
\item $\Delta< K$:\,
In this case,
\begin{equation*}
A=\left\{\begin{array}{ll}
 K  &\mbox{if } \ \Delta>0 \\
0 &\mbox{if } \ \Delta=0,
        \end{array}\right.
\end{equation*}
$\beta=\frac{\Delta}{K}<m=1$ and
\begin{equation*}
\rho=\left\{ \begin{array}{ll}
 \infty  &\mbox{if } \ c=1 \\
K &\mbox{if } \ c>1.
        \end{array}\right.
\end{equation*}
Thus, see Figure~\ref{pic:Callneu1} and Figure~\ref{pic:Callneu3}, we have
%, $f_1(0)<f_2(0)$, $A=K$ and $B=\frac{\Delta}{K}$, see Figure~\ref{pic:Callneu1}. Therefore,
\begin{equation*}
g(S_0)=\frac{\Delta}{K} S_0 \,\mathbb{I}_{S_0< K} +\big( S_0 -K + \Delta\big)\,\mathbb{I}_{S_0\geq K}.
\end{equation*}
Moreover,
\begin{enumerate}
\item If $S_0\leq K$, then
\begin{equation*}
(\hat\pi,\hat\sigma)=\left\{ \begin{array}{ll}
\big((\frac{\Delta}{K} S_0,\frac{\Delta}{K}),\,\inf\{t : S_t= K \}\wedge T\big) &\mbox{if } \ \Delta>0\\
\big((0,0),\,0\big) &\mbox{if } \ \Delta=0.
\end{array}\right.
\end{equation*}
%the optimal perfect hedge
% $(\hat{\pi},\hat{\sigma})=\big((\frac{\Delta}{K} S_0,\frac{\Delta}{K}),\,\inf\{t : S_t= K \}\wedge T\big)$.  %is given by
%\begin{equation*}
%\pi= \Big(\frac{\Delta}{K} s,\gamma\Big), \quad \mbox{ where } \quad \gamma=\frac{\Delta}{K} \ \mbox{ and } \  \sigma=\inf\{t\geq 0 \, | \, S_t\in [K,\infty) \}\wedge T.
%\end{equation*}
%
\item If $S_0> K$, then
\begin{equation*}
(\hat\pi,\hat\sigma)=\left\{ \begin{array}{ll}
\Big( \big(S_0 -K + \Delta,0\big),0\Big)  &\mbox{if } \ c=1 \\
\Big( \big(S_0 -K + \Delta,1\big),\inf\{t : S_t= K \}\wedge T\Big) &\mbox{if } \ c>1.
        \end{array}\right.
\end{equation*}
Observe that for the case $c>1$ and $S_0>K$, the second condition in
 Assumption \ref{asm2.2} is not satisfied. Thus, in order for Theorem \ref{thm2.1game} to hold true, we need to take the
 interest rate
 $r\equiv 0$. Indeed, for $r>0$ we get that the portfolio value of $\hat\pi$ equals $Z^{\hat\pi}_t=S_t-\frac{B_t}{B_0}(K-\Delta)$. It follows that if
 $\frac{B_t}{B_0}(K-\Delta)>K$  then
 $Z^{\hat\pi}_t < S_t-K$, and so $(\hat\pi,\hat\sigma)$ is not a super-replicating strategy.
\end{enumerate}
%
%
  % % % % % % % % % % % % % % % % % %
%
% % % % % % % % % % % % % % % % % % % % % % % % % % % % % % % % % %
%
\item $\Delta\geq K$:\,
In this case $A=K$, $\beta=\frac{\Delta}{K}\geq m=1$.
Thus, see Figure~\ref{pic:Callneu2}, we have $g(S_0)=S_0$.
Moreover, %the optimal perfect hedge is given by
 $(\hat{\pi},\hat{\sigma})=\big( (S_0,1),T\big)$.
\end{enumerate}
\end{example}
%
% % % % % % % % % % % % % % % % % % % % % % % % % % % %5

% % % % % % % % % % % % % % % % % % % % % % % % % % % %5
\begin{figure}[!ht]
    \subfloat[If\, $\Delta>0,\,\Delta< K$\label{pic:Callneu1}]{%
      \includegraphics[width=0.326\textwidth]{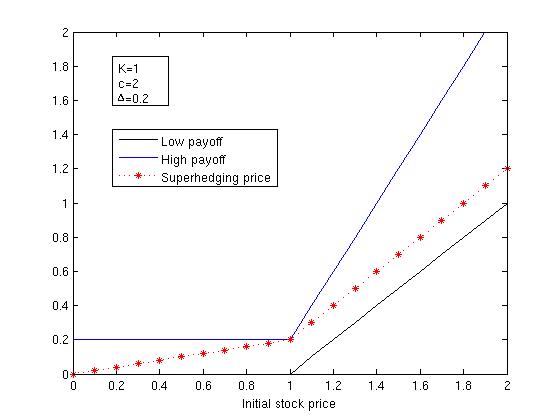}
    }
    \hfill
        \subfloat[If \,$\Delta=0,\,c>1$\label{pic:Callneu3}]{%
          \includegraphics[width=0.326\textwidth] {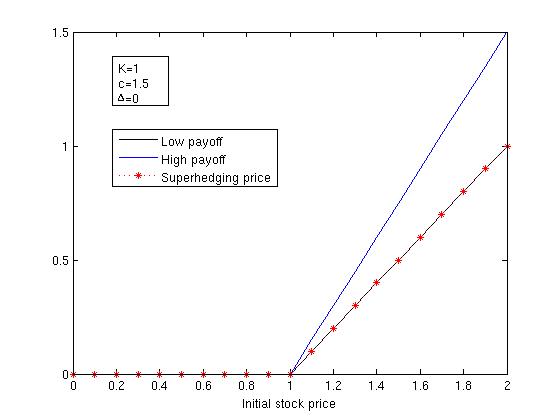}
          }
           \hfill
              \subfloat[If \, $\Delta\geq K$\label{pic:Callneu2}]{%
                \includegraphics[width=0.326\textwidth] {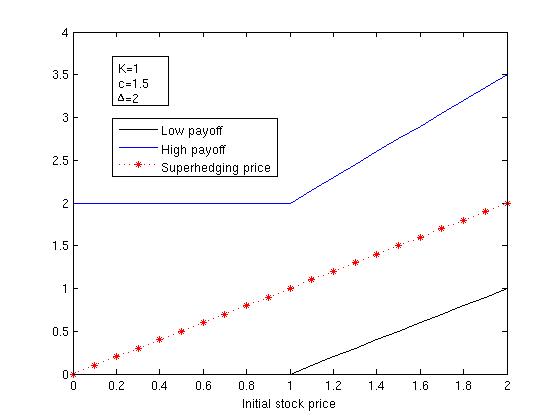}%{example-image-b}
              }
    \caption{Call option}
    %\label{fig:dummy}
  \end{figure}
  % % % % % % % % % % % % % % % % % %
% % % % % % % % % % % % % % % % % % % % % % % % % % % %5
% % % % % % % % % % % % % % % % % % % % % % % % % % % % % % % %
  % % % % % % % % % % % % % % % % % %
% % % % % % % % % % % % % % % % % % % % % % % % % % % %5
% % % % % % % % % % % % % % % % % % % % % % % % % % % % % % % % % % %
% % % % % % % % % % % % % % % % % % % % % % % % % % % % % % %
% % % % % % % % % % % % % % % % % % % % % % % % % % % % % % % % % % %
\begin{example}[Put option]\label{ex:Put}
Let $K>0$ be a constant. Consider a game put option
\begin{equation*}
f_1(S_t)=(K-S_t)^+,  \quad f_2(S_t)=c(K-S_t)^+ + \Delta.
\end{equation*}
 We distinguish between two cases.
% % % % % % % % % % % % % % % % % % % % % % % % % % % %
%
% % % % % % % %

% % % % % % % % % % %

\begin{enumerate}
\item $\Delta<K$:\,
In this case
$A=K$, $\beta=\frac{\Delta-K}{K}<m=0$ and $\rho=\infty$. Hence,
see Figure~\ref{pic:Putneu1} and Figure~\ref{pic:Putneu3}, the super-replication price is
\begin{equation*}
g(S_0)=\Big(K-\frac{K-\Delta}{K} S_0\Big)\,\mathbb{I}_{S_0<K} + \Delta \,\mathbb{I}_{S_0\geq K}.
\end{equation*}
\begin{enumerate}
\item If $S_0<K$, then % the opt. perf. hedge
$(\hat{\pi},\hat{\sigma})=\big((K-\frac{K-\Delta}{K} S_0,-\frac{K-\Delta}{K}),\,\inf\{ t : S_t=K\}\wedge T \big).$
%is given by
%\begin{equation*}
%\pi=\Big(K-\frac{K-\Delta}{K} s, \gamma), \quad \mbox{ where } \quad \gamma=-\frac{K-\Delta}{K} \ \mbox{ and } \ \sigma=\inf\{ t \geq 0 \, | \, S_t=K\}.
%\end{equation*}
%
\item If $S_0\geq K$, then
%the optimal perfect hedge is given by
$(\hat{\pi},\hat{\sigma})=\big((\Delta,0),0\big)$.
\end{enumerate}
%
% % % % % % % % % % % % % % % % % % % % % % % % % % % % % % % % % % %
%
\item $\Delta\geq K$\,
In this case,
\begin{equation*}
A=\left\{\begin{array}{ll}
 K  &\mbox{if } \ \Delta=K \\
\infty &\mbox{if } \ \Delta>K,
        \end{array}\right.
\end{equation*}
and
\begin{equation*}
\beta=\left\{\begin{array}{ll}
 0  &\mbox{if } \ \Delta=K \\
\infty &\mbox{if } \ \Delta>K.
        \end{array}\right.
\end{equation*}
Thus $\beta\geq m=0$. Hence,
see Figure~\ref{pic:Putneu2}, the super-replication price equals $g(S_0)\equiv K$,
and  $(\hat{\pi},\hat{\sigma})=\big((K,0),T\big)$.
\end{enumerate}
In  the Put-case, the super-replication price is independent of the scaling factor $c\geq1$.
\end{example}
%
%
% % % % % % % %
% % % % % % % %
\begin{figure}[!ht]
    \subfloat[If \,$\Delta>0,\,\Delta<K$\label{pic:Putneu1}]{%
      \includegraphics[width=0.33\textwidth]{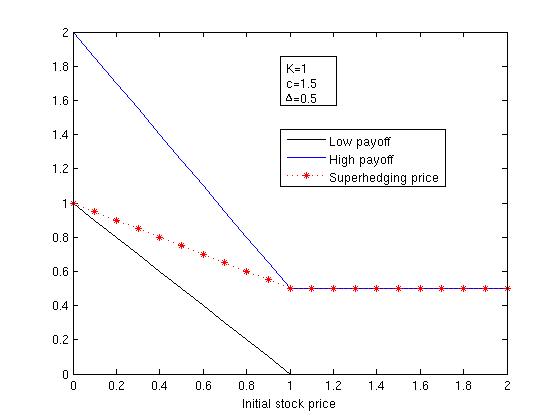}
    }
    \hfill
        \subfloat[If \,$\Delta=0,\,c>1$\label{pic:Putneu3}]{%
          \includegraphics[width=0.33\textwidth] {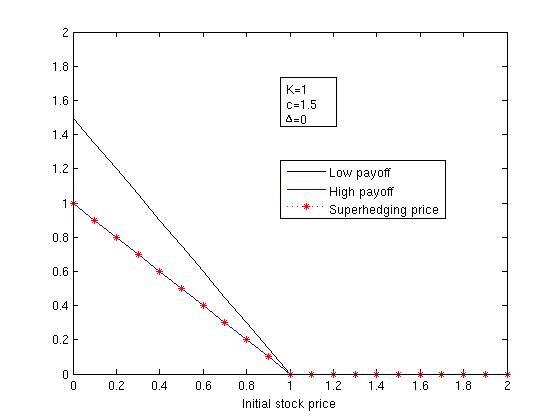}%{example-image-b}
        }
         \hfill
            \subfloat[If \,$\Delta\geq K$\label{pic:Putneu2}]{%
              \includegraphics[width=0.33\textwidth] {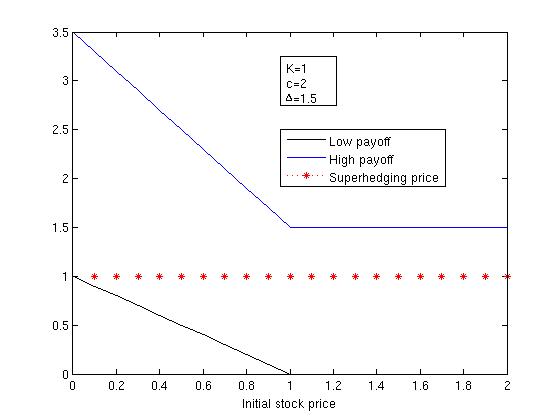}%{example-image-b}
            }
    \caption{Put option}
    %\label{fig:dummy}
  \end{figure}
% % % % % % % % % % %
% % % % % % % % % % %
% % % % % % % % % % % % % % % % % %
%
\begin{example}[Power option]
% % % % % % % % % % % % % % % % % % % % % % % % % % %
%
%
Let $p>1$ and consider the game $p$-th power option
\begin{equation*}
f_1(S_t)=S_t^p,  \quad f_2(S_t)= cS_t^p + \Delta.
\end{equation*}
We have $\rho=m=\infty$ and when $\Delta>0$, $A=\left(\frac{\Delta}{c(p-1)}\right)^{1/p}$, $\beta= c p \left(\frac{\Delta}{c(p-1)}\right)^{1-1/p}$. Thus,  see Figure~\ref{pic:Powerneu1} and Figure~\ref{pic:Powerneu2}, the super--replication price equals
\begin{equation*}
g(S_0)= \beta S_0 \,\mathbb{I}_{S_0<A}+ (c S_0^p+\Delta)\,\mathbb{I}_{S_0\geq A}.
\end{equation*}
The cheapest super-replicating strategy is given by:
\item If $S_0<A$, then  $(\hat{\pi},\hat{\sigma})= \big((\beta S_0,\beta),\, \inf\{t : S_t = A\}\wedge T\big)$.
%is given by
%\begin{equation*}
%\pi=(B s,B), \quad \sigma=\inf\{t\geq 0\,|\, S_t \geq A\}\wedge T
%\end{equation*}
\item If $S_0\geq A$, then  $(\hat{\pi},\hat{\sigma})=\big((cS_0^p+\Delta,0),0\big)$.

%We observe that  on $\{g<f_2\}$, $g(S_0)$ equals to the tangent line to $f_2$ starting in $0$.
% % % % % % % % %
% % % % % % % % %5 % % % % % % % %
%

\end{example}
 % % % % % % % % %
 % % % % % % % % %
 \begin{figure}[!ht]
 \centering
 %\captionsetup[subfigure]{labelformat=empty}
     \subfloat[If $\Delta>0$\label{pic:Powerneu1}]{%
       \includegraphics[width=0.327\textwidth]{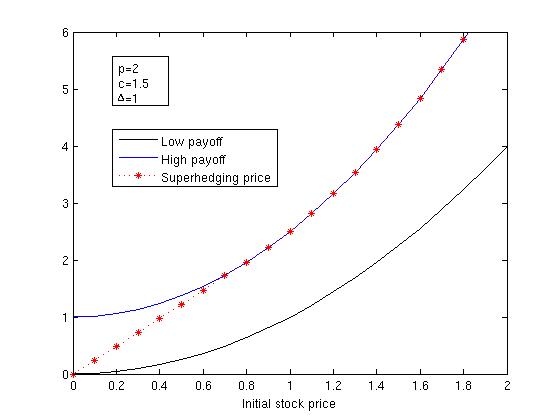}
     }
     \hfill
     \subfloat[If $\Delta=0, \, c>1$\label{pic:Powerneu2}]{%
           \includegraphics[width=0.327\textwidth]{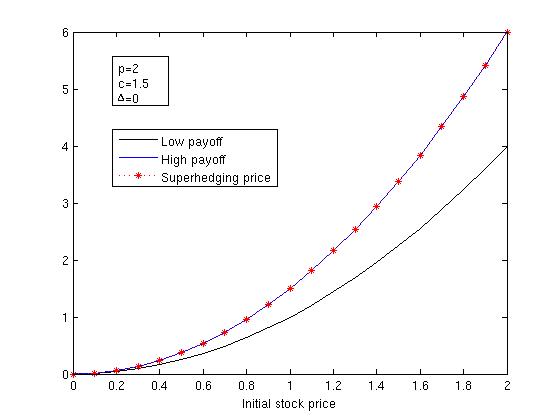}
         }
     \caption{Power option}
     %\label{fig:dummy}
   \end{figure}
 % % % % % % % % %5 % % % % % % % %
 % % % % % % % % %5 % % % % % % % %
% % % % % % % % % % % % % % % % % % % % %
% % % % % % % % % % % % % % % % % % % % % % % %
% % % % % % % % % % % % % % % % % % % % % % %
%

\section{Proof of Proposition \ref{prop2.1}}\label{sec:3}\setcounter{equation}{0}
\begin{proof}
Let $\alpha\in\mathcal C(\nu_0)$ and $\epsilon>0$.
We will show (for both set-ups I and II) that there exists a probability measure
$\mathbb Q\ll \mathbb P$ such that the properties of Definition \ref{dfn2.1} hold true.\\
I.
There \replaced{is}{exists} a constant $C>0$ such that
$\frac{1}{C}\leq\alpha\leq C$. Without loss of generality
we assume that
$\epsilon<\frac{1}{2C}.$
Define the stopping time
\begin{equation*}
\Theta=\inf\{t: |\alpha_t-\nu_t|\replaced{\geq}{=}\epsilon\}\wedge T.
\end{equation*}
Clearly,
$$0<\frac{1}{2 C}\leq\inf_{0\leq t\leq \replaced{\Theta}{\theta}}\nu_{t}\leq \sup_{0\leq t\leq \replaced{\Theta}{\theta}}\nu_t\leq C+\frac{1}{\added{2}C}.$$
From the assumptions on the functions $a,b,c$,
we get that there exists a constant $\tilde C>0$ such that
\begin{equation}\label{3.1}
\sup_{0\leq t\leq \Theta}\left[|a(t,\nu_t)|+|b(t,\nu_t)|+|c(t,\nu_t)|+\frac{1}{|b(t,\nu_t)|}\right]\leq \tilde C.
\end{equation}
Fix $n>\frac{2\tilde C}{T}$.
For $k=1,...,n$\added{,} let
\begin{align*}
I_k&=\int_{(k-1)T/n}^{kT/n} a(t,\nu_t)\,dt+\int_{(k-1)T/n}^{kT/n} c(t,\nu_t)\, dW_t, \\
 J_k&=\alpha_{\frac{kT}{n}}-\alpha_{\frac{(k-1)T}{n}}.\nonumber
\end{align*}
Introduce the function
$\Phi(x)=-n^2\vee (x\wedge n^2)$, $x\in\mathbb R$.
Let ${\{\gamma_t\}}_{t=0}^T$ and ${\{\tilde W_t\}}_{t=0}^T$ be the unique stochastic processes which satisfy
the following (recursive) relations:
$$\hat W_t=\tilde W_t+\int_{0}^t\gamma_u \,du$$
where $\gamma_t=0$ for $t\leq \frac{T}{n}$, and for $k=1,...,n\added{-1}$
$$\gamma_t=\Phi\left(\frac{n}{b(t,\nu_t)T}
\Big(J_k-I_k-\int_{(k-1)T/n}^{kT/n} b(u,\nu_u)\,d\tilde W_u\Big)\right), \  \frac{kT}{n}<t\leq \frac{(k+1)T}{n}.$$

The process $\{\gamma_t\}_{t=0}^T$ is uniformly bounded, thus we deduce from the Girsanov theorem \added{and Novikov condition} that there exists a probability measure $\mathbb Q\sim\mathbb P$ (which depends on $n$)
%$\mathbb Q\sim\mathbb P$
such that
$\{(\tilde W_t,W_t)\}_{t=0}^T$ is a two dimensional
standard Brownian motion with respect to $\mathbb Q$ and the filtration
${\{\mathcal F_t\}}_{t=0}^T$.

For any $k=1,...,n$\added{,} denote
$L_k=\int_{(k-1)T/n}^{kT/n} b(t,\nu_t)\,d\tilde W_t$
and introduce the event
$$A_k=\{kT/n<\Theta\}\replaced{\cap}{\bigcap} \left\{|I_k|+|L_k|>1\right\}.$$
Clearly,
for any $k=1,...,n$
\begin{equation*}
\mathbb{I}_{\frac{kT}{n}<\Theta} \,|I_k|\leq
\Big|\int_{(k-1)T/n}^{kT/n} \mathbb{I}_{t<\Theta} \,a(t,\nu_t)\,dt\Big|+
\Big|\int_{(k-1)T/n}^{kT/n} \mathbb{I}_{t<\Theta}\,c(t,\nu_t)\,dW_t\Big|.
\end{equation*}
This together with (\ref{3.1}) and the Burkholder--\replaced{Davis}{David}--Gundy inequality yield\deleted{s}
that for any $p>1$ there exists a constant $c_p>0$
such that
\begin{equation}\label{3.5}
\mathbb E_{\mathbb Q}\left[\mathbb{I}_{\frac{kT}{n}<\Theta}\, |I_k|^p\right]\leq 2^{p}\Big((\replaced{\tilde{C}}{\tilde N} T/n)^p +c_p (\replaced{\tilde{C}^2}{\tilde N^2} T/n)^{p/2}\Big).
\end{equation}
Similarly,
\begin{equation}\label{3.6}
\mathbb E_{\mathbb Q}\left[\mathbb{I}_{\frac{kT}{n}<\Theta} |L_k|^p\right]\leq
\mathbb E_{\mathbb Q}\left[\Big|\int_{(k-1)T/n}^{kT/n} \mathbb{I}_{t<\Theta}b(t,\nu_t)\,d\tilde W_t\Big|^p\right]\leq
c_p (\replaced{\tilde{C}^2}{\tilde N^2} T/n)^{p/2}.
\end{equation}
By applying the Markov inequality and (\ref{3.5})--(\ref{3.6}) for $p=4$, we obtain
\begin{equation}\label{3.7}
\mathbb Q\big(\cup_{k=1}^n A_k\big)\leq \sum_{k=1}^n \mathbb Q(A_k)\leq \frac{c}{n}
\end{equation}
for some constant $c$ \added{(independent of $n$)}.

Next, let $k<n$
and $kT/n\leq t<(k+1)T/n$.
Consider the event
$$U:=\{t<\Theta\}\setminus \Big(\big(\cup_{j=1}^n A_j\big)\cup \big(\max_{|u-v|\leq\frac{T}{n}} |\alpha_u-\alpha_v|>1\big)\Big).$$
Recall the constant $\tilde C$
from (\ref{3.1}). As $n>\frac{2\tilde C}{T}$, we get on the event $U$ that
for any $u\leq t$
 $$\gamma_u b(u,\nu_u)=\frac{n(J_m-I_m-L_m)}{T}, \quad \mbox{for } \ \frac{mT}{n}< u\leq \frac{(m+1)T}{n},$$
 where we set $I_{0}=J_{0}=L_{0}=0$.
Thus, on the event $U$ we have
\begin{equation*}
\nu_{\frac{kT}{n}}-\alpha_{\frac{kT}{n}}=\sum_{m=1}^{k}[I_m+L_m-J_m]+\sum_{m=1}^{k-1}[J_m-I_m-L_m]=\mathbb{I}_{\frac{kT}{n}<\Theta}(I_k+L_k-J_k)
\end{equation*}
\replaced{as well as}{and}
\begin{align*}
|\nu_{t}-\nu_{\frac{kT}{n}}|&\leq \Big|\int_{kT/n}^t \mathbb{I}_{u<\Theta}\, a(u,\nu_u)\,du\Big|+
\Big|\int_{kT/n}^t \mathbb{I}_{u<\Theta}\, b(u,\nu_u)\,d\tilde W_u\Big|\\
& \quad +
\Big|\int_{kT/n}^t \mathbb{I}_{u<\Theta}\, c(u,\nu_u)\,dW_u\Big|+\mathbb{I}_{\frac{kT}{n}<\Theta}\,(|J_k|+|I_k|+|L_k|).
\end{align*}
We conclude that on the event
$\hat U:=\Omega\setminus \Big(\big(\cup_{j=1}^n A_j\big)\cup \big(\max_{|u-v|\leq\frac{T}{n}} |\alpha_u-\alpha_v|>1\big)\Big)$
\begin{align}\label{3.8}
& \sup_{0\leq t<\Theta}|\alpha_t-\nu_t| \\
\leq &\,\max_{|u-v|\leq\frac{T}{n}} |\alpha_u-\alpha_v|
+
2\max_{1\leq k\leq n}
\Big(\mathbb{I}_{\frac{kT}{n}<\Theta}\,(|J_k|+|I_k|+|L_k|)\Big)
+\max_{1\leq k\leq n} (\Gamma_k+\Upsilon_k+\Lambda_k) \nonumber\\
\leq &\,3\max_{|u-v|\leq\frac{T}{n}} |\alpha_u-\alpha_v|+2\max_{1\leq k\leq n}
\Big(\mathbb{I}_{\frac{kT}{n}<\Theta}\,(|I_k|+|L_k|)\Big)
+
\max_{1\leq k\leq n} (\Gamma_k+\Upsilon_k+\Lambda_k), \nonumber %\label{3.8}%\nonumber
\end{align}
where
\begin{align*}
\Gamma_k&=\max_{(k-1)T/n\leq t\leq kT/n} \Big|\int_{(k-1)T/n}^t \mathbb{I}_{u<\Theta}\, a(u,\nu_u)\,du\Big|,\\
\Upsilon_k&=\max_{(k-1)T/n\leq t\leq kT/n} \Big|\int_{(k-1)T/n}^t \mathbb{I}_{u<\Theta}\, b(u,\nu_u)\,d\tilde W_u\Big|, \\
\Lambda_k&=\max_{(k-1)T/n\leq t\leq kT/n} \Big|\int_{(k-1)T/n}^t \mathbb{I}_{u<\Theta}\, c(u,\nu_u)\,dW_u\Big|.
\end{align*}
Similarly
to (\ref{3.5})--(\ref{3.6}), we get that
$$\mathbb{E}_{\mathbb Q}\left[\max_{1\leq k\leq n}(\deleted{|}\Gamma_k\deleted{|}+\deleted{|}\Upsilon_k\deleted{|}+\deleted{|}\Lambda_k\deleted{|})^4\right]\leq
3^4\sum_{k=1}^n \mathbb E_{\mathbb Q}\big[\Gamma^4_k+\Upsilon^4_k+\Lambda^4_k\big]\leq \frac{\tilde c}{n}$$
for some constant $\tilde c$. Thus\added{,} from the Markov inequality we get that for sufficiently large $n$
\begin{equation}\label{3.8+}
\mathbb{Q}\left(
\max_{1\leq k\leq n} [\Gamma_k+\Upsilon_k+\Lambda_k]\geq\frac{\epsilon}{3}\right)<\frac{\epsilon}{5}.
\end{equation}
Similarly,
(\ref{3.5})--(\ref{3.6}) give that for sufficiently large $n$
\begin{equation}\label{3.9}
\mathbb{Q}\left(2\max_{1\leq k\leq n}
\left[\mathbb{I}_{\frac{kT}{n}<\Theta}(|I_k|+|L_k|)\right]
\geq\frac{\epsilon}{3}\right)<\frac{\epsilon}{5}.
\end{equation}
The stochastic \added{process} $\alpha$ is progressively measurable with respect to
the filtration generated by $W$, thus the distribution\deleted{s} of $\alpha$ under $\mathbb Q$ \replaced{is the same as under}{and} $\mathbb P$ \deleted{are the same}
and so,
for sufficiently large $n$
\begin{equation}\label{3.10}
\mathbb Q\Big(3\max_{|u-v|\leq\frac{T}{n}} |\alpha_u-\alpha_v|\geq \frac{\epsilon}{3}\Big)
<\frac{\epsilon}{5}.
\end{equation}
Finally,
by combining (\ref{3.7})--(\ref{3.10})\added{,} we obtain that for sufficiently large $n$,
\begin{align*}
& \ \mathbb Q(\|\alpha-\nu\|_{\infty}>\epsilon)\\
\leq & \
\mathbb{Q}\Big(\big(\cup_{j=1}^n A_j\big)\cup \big(\max_{|u-v|\leq\frac{T}{n}} |\alpha_u-\alpha_v|>1\big)\Big)+
\mathbb{Q}\Big(\big(\sup_{0\leq t<\Theta}|\alpha_t-\nu_t|=\epsilon\big)\cap\hat U\Big)\\
\leq & \ \frac{c}{n}+\frac{\epsilon}{5}+
\frac{3\epsilon}{5}<\epsilon,
\end{align*}
as required.\qed
${}$\\
II.
Consider the continuous stochastic process
$\phi_t=\ln\alpha_t-\ln\nu^{(1)}_t$, $t\in [0,T]$. Fix $\delta>0$.
Choose $n\in\mathbb N$ sufficiently large such that
\begin{equation}\label{3.10+}
\mathbb P\Big(\max_{|u-v|\leq\frac{T}{n}} |\phi_u-\phi_v|
\geq \delta\Big)\leq \delta.
\end{equation}
For $k=0,...,n-1$,
define the events
\begin{equation*}
\hat A_k=\left\{\max_{k T/n\leq t\leq (k+1)T/n}
|\ln\nu^{(2)}_t-\ln\nu^{(2)}_{\frac{kT}{n}}-(nt/T-k)(\phi_{\frac{kT}{n}}-\phi_{\frac{(k-1)T}{n}})|
<\frac{\delta}{n}
\right\}
\end{equation*}
where we set $\phi_{-\frac{T}{n}}\equiv \phi_0$. First, we argue that for any $k$
\begin{equation}\label{3.11}
\mathbb P(\hat A_k\,|\,\mathcal F_{\frac{k T}{n}})>0 \  \mbox{a.s.}
\end{equation}
Denote by $\{\mathbf G_t\}_{t=0}^T$ the usual augmentation of the filtration generated
by $\nu^{(2)}$.
By
our assumptions,
$\nu^{(2)}$ is independent of $W$ and $\phi$.
This, together with the fact that
$\{\mathcal F_t\}_{t=0}^T$ is the usual augmentation of the filtration generated
by $W$ and $\nu^{(2)}$  yields
$$\mathbb P\left(\hat A_k|\mathcal F_{\frac{k T}{n}}\right)=\Psi\left(\phi_{\frac{(k-1)T}{n}},\phi_{\frac{k T}{n}},\nu^{(2)}\right) \ \mbox{ a.s.,}$$
where
$\Psi:\mathbb R\times \mathbb R\times C_+[0,T]\rightarrow\mathbb R$ is a measurable function satisfying a.s.
$$\Psi(u,v,\nu^{(2)})=
\mathbb P\left(\max_{k T/n\leq t\leq (k+1)T/n}|\ln\nu^{(2)}_t-\ln\nu^{(2)}_{\frac{kT}{n}}-\right.\\
\left.(nt/T-k)
(v- u)|<\frac{\delta}{n}\,\big|\,\mathbf G_{\frac{k T}{n}}\right).
$$
It is assumed $\ln \nu^{(2)}$ satisfies the CFS property with respect to its natural filtration.
We deduce from  Pakkanen (2010, Lemma 2.3)
that $\ln \nu^{(2)}$ satisfies the CFS property with respect to the usual augmented filtration $\{\mathbf G_t\}_{t=0}^T$, as well. Therefore, we obtain  that $\Psi(u,v,\nu^{(2)})>0 \  \mathbb P$-a.s., for any $u,v\in\mathbb R$, hence we conclude that (\ref{3.11}) holds true.

Next, define the continuous martingale $\mathbf Z=\{\mathbf Z_t\}_{t=0}^T$ by
$\mathbf Z_0=1$ and
\begin{equation*}
\mathbf Z_t=\frac{\mathbb P\big(\hat A_k\,|\,\mathcal F_{t}\big)}{\mathbb P\big(\hat A_k\,|\,\mathcal F_{\frac{k T}{n}}\big)}
\sum_{i=0}^{k-1} \frac{\mathbb I_{\hat A_i}}{\mathbb P\big(\hat A_i\,|\,\mathcal F_{\frac{i T}{n}}\big)},
\ \ t\in (kT/n, (k+1)T/n],\ \ 0\leq k\leq n-1.
 \end{equation*}
There exists a probability measure $\mathbb Q\ll \mathbb P$ such that
$\frac{d\mathbb Q}{d\mathbb P}{|}_{\mathcal F_t}=\mathbf Z_t$, $t\in [0,T]$.
Let us prove that (for sufficiently small $\delta>0$),
$\mathbb Q$ satisfies the required properties.
Fix $k<n$ and
$t\in [kT/n, (k+1)T/n]$. On the event $\mathbf Z_t\neq 0$, using that $W_{\frac{(k+1)T}{n}}-W_{t}$ is independent of
$\mathcal F_t$ and $\hat A_k$, yields
\begin{align*}
\mathbb E_{\mathbb Q}\left(W_{\frac{(k+1)T}{n}}-W_{t}\,|\,\mathcal{F}_t\right)&=\frac{1}{\mathbf Z_t}
\mathbb E_{\mathbb P}\left(\mathbf Z_{\frac{(k+1)T}{n}}(W_{\frac{(k+1)T}{n}}-W_{t})\,|\,\mathcal{F}_t\right)\\
&=\frac{1}{P(\hat A_k|\mathcal F_{t})}\mathbb E_{\mathbb P}
\left(\mathbb I_{\hat A_k}(W_{\frac{(k+1)T}{n}}-W_{t})\,|\,\mathcal F_t\right)\\
&=0.
\end{align*}
Thus,
 for any $k<n$ the stochastic process
$\{W_t\}_{t=kT/n}^{(k+1)T/n}$ is a $\mathbb Q$-martingale, and so
$W=\{W_t\}_{t=0}^T$
is a $\mathbb Q$-martingale.
As $\mathbb Q\ll\mathbb P$, we conclude that
$\langle W\rangle _t\equiv t$, $\mathbb Q$-a.s. This together with L\'evy's characterization theorem
yields that
$W$ is a Brownian motion
with respect to $\mathbb Q$ and $\{\mathcal F_t\}_{t=0}^T$.

We arrive to the final step
of the proof.
Consider the event
$$\hat A:=\Big(\bigcap_{i=0}^{n-1} \hat A_i\Big) \cap \Big\{\max_{|u-v|\leq\frac{T}{n}} |\phi_u-\phi_v|
\leq \delta \Big\}. $$
The stochastic process $\phi$ is adapted to the filtration generated by $W$; in particular,
$\phi$ is determined by $\{W_t\}_{t=0}^T$.
 Hence ($W$ is a Brownian motion under $\mathbb P$ and $\mathbb Q$), the distributions
of $\phi$ under $\mathbb P$ and $\mathbb Q$ are the same.
This, together with (\ref{3.10+}) and the fact that
$\mathbb Q\big(\bigcap_{i=0}^{n-1} \hat A_i\big)=1$ yields
%$\mathbb Q$ is supported on the event
%$\bigcap_{i=0}^{n-1} \hat A_i$ we get
\begin{equation}\label{3.12}
\mathbb Q(\hat A)=\mathbb Q \Big(\max_{|u-v|\leq\frac{T}{n}} |\phi_u-\phi_v|
\leq \delta \Big)=\mathbb P \Big(\max_{|u-v|\leq\frac{T}{n}} |\phi_u-\phi_v|
\leq \delta \Big)\geq 1-\delta.
\end{equation}
Next, let $k<n$ and
$t\in [kT/n, (k+1)T/n]$. Observe that
$\phi_0=\ln \nu^{(2)}_0$. Thus, we have on the event $\hat A$
\begin{align*}
|\ln \nu_t-\ln\alpha_t|&=|\ln \nu^{(2)}_t-\phi_t|\\
&\leq
\sum_{i=0}^{k-1} \left|\ln\nu^{(2)}_{\frac{(i+1)T}{n}}-\ln\nu^{(2)}_{\frac{iT}{n}}-
\phi_{\frac{i T}{n}}+\phi_{\frac{(i-1)T}{n}}\right|\\
& \quad +|\phi_{\frac{k T}{n}}-\phi_{\frac{(k-1)T}{n}}|+
|\phi_t-\phi_{\frac{kT}{n}}|+|\ln\nu^{(2)}_{t}-\ln\nu^{(2)}_{\frac{kT}{n}}|\\
& \leq
\frac{\delta k}{n}+2\delta+\frac{\delta}{n}+\delta+(nt/T-k)\delta\\
&\leq 6\delta.
\end{align*}
From the inequality
$$|e^x-e^y|\leq e^{\max(x,y)}|x-y|\leq e^x e^{|x-y|}|x-y| \ \ x,y\in\mathbb R$$
we conclude that on the event
$\hat A$, (take $x=\ln \alpha_t$, $y=\ln \nu_t$)
$$\sup_{0\leq t\leq T}|\alpha_t-\nu_t|\leq 6\delta e^{6\delta}||\alpha||_{\infty}. $$
 This, together with applying (\ref{3.12})
for sufficiently small $\delta>0$ (recall that $\alpha$ is uniformly bounded) we get
$\mathbb Q\left(||\alpha-\nu||_{\infty}<\epsilon\right)>1-\epsilon,$
and the proof is completed.
\end{proof}

% % % % % % % % % % % % % % % % % % % % % % % % % % % %
\section{Proof of Theorem \ref{thm2.1game}}
\label{sec:proof}
\setcounter{equation}{0}
In this section, we finish the proof of Theorem~\ref{thm2.1game} by showing that the inequality  (\ref{2.30}) holds true.
It suffices to show that for any super-replicating strategy $(\pi,\sigma)$
we have the inequality
\begin{equation}\label{new}
Z^\pi_0\geq g(S_0).
\end{equation}
To this end, let $(\pi,\sigma)$ be a super-replicating strategy.
Choose $\epsilon>0$. The stochastic process ${\{r_t\}}_{t=0}^T$
is uniformly bounded\replaced{, thus}{ and so,} there exists $\mathbf T<T$ such that
\begin{equation}\label{3.20}
\int_{0}^\mathbf T r_t\, dt<\epsilon.
\end{equation}
Let ${\{\mathcal{G}_t\}}_{t=0}^\mathbf T$
be the filtration generated by
$W$ and completed by the null sets.
Denote by ${\mathcal T}_{\mathbf T}$ the set of all stopping times with respect to the filtration
${\{\mathcal{G}_t\}}_{t=0}^\mathbf T$ with values in $[0,\mathbf T]$.
By Corollary \ref{4.new}, there exists a stochastic process
$\alpha\in\mathcal C(\nu_0)$ such that
\begin{equation}\label{3.21}
\inf_{\zeta\in{\mathcal T}_{\mathbf T}} \mathbb E_{\mathbb P}\Big[f_2(\mathbf S^{(\alpha)}_{\zeta})\mathbb{I}_{\zeta<\mathbf T}
+f_1(\mathbf S^{(\alpha)}_{\mathbf T})\mathbb{I}_{\zeta=\mathbf T}\Big]> g(S_0)-\epsilon\added{,}
\end{equation}
where
\begin{equation*}
\mathbf S^{(\alpha)}_t=S_0e^{\int_{0}^t \alpha_u dW_u-\frac{1}{2}\int_{0}^t \alpha^2_u du}, \ \ t\in [0,T].
\end{equation*}
Choose $\delta>0$.
%From Lemma \ref{lem3.1},
The financial market is fully incomplete. Hence by definition, we obtain %that there exists
a probability measure $\mathbb Q \ll \mathbb P$ such that
\begin{equation}\label{3.23}
\mathbb Q\big(\|\alpha-\nu\|_{\infty}\geq\delta\big)<\delta
\end{equation}
and that $W$ is a \replaced{Brownian motion}{Wiener process} with respect to $\mathbb Q$ and $\{\mathcal F_t\}_{t=0}^T$.

Define the stopping time
$\tau=\mathbf \inf\{t:|\alpha_t-\nu_t|\replaced{\geq}{=}\delta\}\wedge \mathbf T$ and denote
$\pi=(Z^\pi_0,\gamma)$. From (\ref{2.4+})--(\ref{2.5}), it follows that the stochastic integral
$$\int_{0}^{t\wedge\sigma} \gamma_u \tilde S_u \nu_u \,dW_u, \ \ t\in [0,T]$$
is uniformly bounded from below, and so \replaced{it is}{its} a supermartingale
with respect to the probability measure $\mathbb Q$.
Thus, from (\ref{2.4+})--(\ref{2.5})
$$
\mathbb E_{\mathbb Q}\left[\frac{B_0}{B_{\sigma\wedge\tau}}\big(f_2(S_{\sigma})\mathbb{I}_{\sigma<\tau}+f_1(S_{\tau})\mathbb{I}_{\tau\leq\sigma}\big)\right]\leq
\mathbb E_{\mathbb Q}\left[\frac{B_0}{B_{\sigma\wedge\tau}}Z^\pi_{\sigma\wedge\tau}\right]\leq Z^\pi_0,$$
and so\deleted{,} from (\ref{3.20})\added{,} we conclude \added{that}
\begin{equation}\label{3.23+}
e^{\epsilon} Z^\pi_0 \geq
\mathbb E_{\mathbb Q}\big[f_2(S_{\sigma})\mathbb{I}_{\sigma<\tau}+f_1(S_{\tau})\mathbb{I}_{\tau\leq\sigma}\big].
\end{equation}
Clearly,
$$\int_{0}^{\sigma\wedge\tau}|\alpha^2_t-\nu^2_t|dt\leq \delta (2||\alpha||_{\infty}+\delta) T,$$
and from It\added{\^o}'s Isometry
$$\mathbb E_{\mathbb Q}\left[\Big(\int_{0}^{\sigma\wedge\tau}(\nu_t-\alpha_t) \,dW_t\Big)^2\right]\leq \delta^2 T.$$
Thus, from the Markov inequality we get \added{for sufficiently small $\delta$}
\begin{equation}\label{3.26}
\mathbb Q\left(\int_{0}^{\sigma\wedge\tau}|\alpha^2_t-\nu^2_t|\,dt+
\Big|\int_{0}^{\sigma\wedge\tau}(\nu_t-\alpha_t) \,dW_t\Big|>2\sqrt \delta\right)<c \sqrt{\delta}
\end{equation}
 for some constant $c>0$ (which may depend on the chosen $\epsilon>0$).
The SDE (\ref{2.2}) implies that
$$S_{\sigma\wedge\tau}=S_0e^{\int_{0}^{\sigma\wedge\tau} \nu_t dW_t+\int_{0}^{\sigma\wedge\tau} (r_t-\nu^2_t/2) dt}.$$
From (\ref{3.20}) and (\ref{3.26}) we get that for sufficiently small $\delta$
\begin{equation}\label{3.27}
\mathbb Q\left(|\ln S_{\sigma\wedge\tau}-\ln \mathbf S^{(\alpha)}_{\sigma\wedge\tau}|>2\epsilon\right)<c\sqrt \delta.
\end{equation}
Now, we arrive \replaced{at}{to} the final step of the proof.
Set\deleted{,}
$$\tilde\sigma=\sigma\wedge \mathbf T, \ \quad X=\sup_{0\leq t\leq T} f_2(\mathbf S^{(\alpha)}_t),$$
and introduce the event
$U=(\tau<\mathbf T)\cup (|\ln S_{\sigma\wedge\tau}-\ln \mathbf S^{(\alpha)}_{\sigma\wedge\tau}|>2\epsilon).$
\replaced{We deduce from}{From} (\ref{2.3}) \deleted{it follows} that
\begin{equation}\label{3.28}
|\ln x-\ln y|\leq 2\epsilon\Rightarrow f_i(y)\geq \frac{(1-L(e^{2\epsilon}-1))f_i(x)-L x(e^{2\epsilon}-1)}{1+L(e^{2\epsilon}-1)} , \ \ i=1,2.
\end{equation}
From (\ref{3.23+}) and (\ref{3.28}) we obtain
\begin{align}
e^{\epsilon} Z^\pi_0&\geq
\mathbb E_{\mathbb Q}\big[\mathbb{I}_{\Omega\setminus U}\left(f_2(S_{\tilde\sigma})
\mathbb{I}_{\tilde\sigma<\mathbf T}+f_1(S_{\mathbf T})\mathbb{I}_{\tilde\sigma=\mathbf T}\right)\big] \label{3.29}\\
&\geq
\frac{1-L(e^{2\epsilon}-1)}{1+L(e^{2\epsilon}-1)} \,\mathbb E_{\mathbb Q}\left[\mathbb{I}_{\Omega\setminus U}\left(f_2(\mathbf S^{(\alpha)}_{\tilde\sigma})
\mathbb{I}_{\tilde\sigma<\mathbf T}+f_1(\mathbf S^{(\alpha)}_{\mathbf T})\mathbb{I}_{\tilde\sigma=\mathbf T}\right)\right] \nonumber\\
& \phantom{ \geq } \ -
\frac{L (e^{2\epsilon}-1)}{1+L(e^{2\epsilon}-1)}\,\mathbb E_{\mathbb Q}[\mathbf S^{(\alpha)}_{\tilde\sigma}] \nonumber\\
&\geq
\frac{1-L(e^{2\epsilon}-1)}{1+L(e^{2\epsilon}-1)}\,\mathbb E_{\mathbb Q}\left[f_2(\mathbf S^{(\alpha)}_{\tilde\sigma})
\mathbb{I}_{\tilde\sigma<\mathbf T}+f_1(\mathbf S^{(\alpha)}_{\mathbf T})\mathbb{I}_{\tilde\sigma=\mathbf T}\right] \nonumber\\
&\phantom{ \geq } \ -
\frac{1-L(e^{2\epsilon}-1)}{1+L(e^{2\epsilon}-1)}\,\mathbb E_{\mathbb Q}[\mathbb{I}_U X]-
\frac{L S_0(e^{2\epsilon}-1)}{1+L(e^{2\epsilon}-1)}.\nonumber
\end{align}
The growth condition (\ref{2.4}) implies
that $\mathbb E_{\mathbb Q} [X^2]<\infty$.
Observe that $(\tau<\mathbf T)\subset (\|\alpha-\nu\|_{\infty}\geq \delta)$.
Thus, from the Cauchy-–Schwarz inequality, (\ref{3.23}) and (\ref{3.27}), we get that for sufficiently small $\delta>0$
\begin{equation}\label{3.30}
\mathbb E_{\mathbb Q}[\mathbb{I}_U X]\leq \left(\mathbb E_{\mathbb Q} [X^2]\right)^{1/2}\left(\delta+c\sqrt\delta\right)^{1/2}<\epsilon.
\end{equation}
Finally, we estimate $\mathbb E_{\mathbb Q}\left[f_2(\mathbf S^{(\alpha)}_{\tilde\sigma})
\mathbb{I}_{\tilde\sigma<\mathbf T}+f_1(\mathbf S^{(\alpha)}_{\mathbf T})\mathbb{I}_{\tilde\sigma=\mathbf T}\right]$.
Denote by $\mathbb T$ the set of all stopping times with respect to the filtration
$\{\mathcal F_t\}_{t=0}^T$ with values in $[0,\mathbf T]$.
The stochastic process $W$ is a Brownian motion under the probability measure $\mathbb Q$
and the filtration ${\{\mathcal F_t\}}_{t=0}^T$.
Thus\added{,} from the Markov property of Brownian motion,
the fact that $\alpha$ is adapted to the filtration $\{\mathcal G_t\}_{t=0}^T$,
$\tilde\sigma\in\mathbb T$ and (\ref{3.21}), it follows that
\begin{align*}
\mathbb E_{\mathbb Q}\left[f_2(\mathbf S^{(\alpha)}_{\tilde\sigma})
\mathbb{I}_{\tilde\sigma<\mathbf T}+f_1(\mathbf S^{(\alpha)}_{\mathbf T})\mathbb{I}_{\tilde\sigma=\mathbf T}\right]\geq
&\inf_{\zeta\in{\mathbb T}} \mathbb E_{\mathbb Q}\Big[f_2(\mathbf S^{(\alpha)}_{\zeta})\mathbb{I}_{\zeta<\mathbf T}
+f_1(\mathbf S^{(\alpha)}_{\mathbf T})\mathbb{I}_{\zeta=\mathbf T}\Big]\\=
&\inf_{\zeta\in\mathcal T_{\mathbf T}} \mathbb E_{\mathbb Q}\Big[f_2(\mathbf S^{(\alpha)}_{\zeta})\mathbb{I}_{\zeta<\mathbf T}
+f_1(\mathbf S^{(\alpha)}_{\mathbf T})\mathbb{I}_{\zeta=\mathbf T}\Big]\\=
&\inf_{\zeta\in{\mathcal T}_{\mathbf T}} \mathbb E_{\mathbb P}\Big[f_2(\mathbf S^{(\alpha)}_{\zeta})\mathbb{I}_{\zeta<\mathbf T}
+f_1(\mathbf S^{(\alpha)}_{\mathbf T})\mathbb{I}_{\zeta=\mathbf T}\Big]\\
> & \,g(S_0)-\epsilon.
\end{align*}
This together with (\ref{3.29})--(\ref{3.30}) gives
$$e^{\epsilon} Z^\pi_0\geq \frac{1-L (e^{2\epsilon}-1)}{1+L(e^{2\epsilon}-1)}(g(S_0)-\epsilon)-
\frac{1-L (e^{2\epsilon}-1)}{1+L(e^{2\epsilon}-1)}\epsilon-\frac{L S_0(e^{2\epsilon}-1)}{1+L(e^{2\epsilon}-1)},
$$
and by letting $\epsilon\downarrow 0$ we obtain
(\ref{new}).
\qed
\begin{rem}\label{rem.pathdependent}
A natural question is whether for game options with path dependent payoffs the model free super--replication price
is equal to the price achieved in fully incomplete markets (see Remark~\ref{rem4.3}). In order to answer this question
we should develop a dual characterization for the super--replication price of
path dependent game options in a model free setup. This was not done so far.
\end{rem}
\section{Auxiliary lemmas for the proof of Theorem~\ref{thm2.1game}}
\label{sec:lemmas}
The goal of this section is to establish Corollary \ref{4.new} which provides a connection between the
function $g$ (which is the game variant of a concave envelope) and the left hand side of
(\ref{4.nnn})
which can be viewed as an optimal stopping problem under volatility uncertainty.

Consider the probability space
$(\Omega,{\mathcal{F}},{\mathbb P})$ and
the filtration $\{\mathcal G_t\}_{t=0}^T$
generated by the \replaced{Brownian motion}{Wiener process} $\{W_t\}_{t=0}^T$\replaced{,}{and} completed by the null sets.
For any $u\in [0,T]$ we denote by $\mathcal T_u$ the set of all stopping times
with respect to the filtration $\{\mathcal G_t\}_{t=0}^T$
with values in $[0,u]$.
For any $x>0$ and any \added{(sufficiently integrable)} progressively
measurable process  $\alpha=\{\alpha_t\}_{t=0}^T$ (with respect to $\{\mathcal G_t\}_{t=0}^T$)
define the process
$$\mathbf S^{\alpha,x}_t=xe^{\int_{0}^t \alpha_v dW_v-\frac{1}{2}\int_{0}^t \alpha^2_v dv}, \ \ t\in [0,T].$$
Denote by $\mathcal{A}$ the set of all non--negative, progressively measurable processes
$\alpha=\{\alpha_t\}_{t=0}^T$ \added{with $\int_0^T \alpha^2_t \,dt <\infty$ \, a.s.} which satisfy the following: there exists a constant $C=C(\alpha)$ such that
$\frac{1}{C}\leq \replaced{\mathbf S^{\alpha,1}}{S^{\alpha,1}}\leq C$.
Define the function $G:(0,\infty)\times (0,T]\rightarrow\mathbb R$
\begin{equation}\label{4.0}
G(x,u):=\sup_{\alpha\in\mathcal A}\inf_{\zeta\in\mathcal{T}_{u}}
 \mathbb E_{\mathbb P}\left[f_2(\mathbf S^{\alpha,x}_{\zeta})\mathbb{I}_{\zeta<u}
+f_1(\mathbf S^{\alpha,x}_{u})\mathbb{I}_{\zeta=u}\right].
\end{equation}
The following lemma is similar to Dolinsky (2013, Lemmas 4.1--4.2). As the present setup is a bit different, we provide
for reader's convenience a self contained proof.
\begin{lem}\label{lem4.0}${}$\\
i.The function $G(x,u)$ does not depend on $u$\replaced{,}{.} i.e. for all $u<T$
$$G(x,u)=G(x,T).$$
ii. The function $G(x):=G(x,T)$ is continuous and satisfies $f_1\leq G\leq f_2$.\\
iii. The function $G(x)$ is concave in every interval in which $G<f_2$.
\end{lem}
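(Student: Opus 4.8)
The plan is to prove the three assertions in the order (i), (ii), (iii), since the concavity statement will use both the horizon–independence (i) and the continuity (ii). Two bounds are immediate and settle half of (ii) at once: $G(x)\le f_2(x)$ because the buyer may always take $\zeta\equiv 0$, giving $f_2(\mathbf S^{\alpha,x}_0)=f_2(x)$ for every $\alpha$; and $G(x)\ge f_1(x)$ because, for small $\eta\in(0,1)$ and small $c>0$, the choice $\alpha_t=c\,\mathbb I_{t<\theta}$ with $\theta=\inf\{t:\mathbf S^{\alpha,x}_t\notin[x(1-\eta),x(1+\eta)]\}$ lies in $\mathcal A$ and keeps $\mathbf S^{\alpha,x}_\zeta\in[x(1-\eta),x(1+\eta)]$ for every stopping time $\zeta$, so (using $f_2\ge f_1$ and continuity of $f_1$) the integrand in \eqref{4.0} is $\ge\min_{[x(1-\eta),x(1+\eta)]}f_1$; let $\eta\downarrow0$. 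For (i), one inequality is easy: for $u\le T$ and any $\alpha\in\mathcal A$, replace $\alpha$ by $\bar\alpha=\alpha\,\mathbb I_{[0,u]}\in\mathcal A$, which freezes the price after time $u$; since $f_1\le f_2$, against $\bar\alpha$ the buyer never benefits from cancelling in $(u,T)$ rather than waiting until $T$, so the horizon--$T$ value of $\bar\alpha$ equals the horizon--$u$ value of $\alpha$, and taking the supremum over $\alpha$ gives $G(x,T)\ge G(x,u)$.

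The reverse inequality $G(x,u)\ge G(x,T)$ is the substantive half of (i). Given $\beta\in\mathcal A$ that is $\epsilon$--optimal for horizon $T$, I would produce $\alpha\in\mathcal A$ whose price process over $[0,u]$ has the law of $\mathbf S^{\beta,x}$ over $[0,T]$ compressed by the deterministic time change $t\mapsto Tt/u$: since $\mathbf S^{\beta,x}$ is a bounded continuous martingale with absolutely continuous quadratic variation, that compressed law is again the law of a process of the form $\mathbf S^{\alpha,x}$ adapted to the Brownian filtration on $[0,u]$, and the value $\inf_{\zeta}\mathbb E_{\mathbb P}[\cdots]$ depends on $\beta$ only through this law. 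This time--change argument is carried out exactly as in Dolinsky (2013, Lemmas~4.1--4.2).

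For the continuity in (ii) I would use the scaling identity $\mathbf S^{\alpha,x}=(x/y)\,\mathbf S^{\alpha,y}$ together with \eqref{2.3++} and \eqref{2.4}. Lower semicontinuity: given $\epsilon>0$, pick $\alpha$ that is $\epsilon$--optimal from $x$, with constant $C=C(\alpha)$; since $\mathbf S^{\alpha,x}$ is then bounded in $[x/C,xC]$, \eqref{2.3++} shows that moving $x$ to a nearby $y$ changes $f_i(\mathbf S^{\alpha,x}_\zeta)$ by an amount small \emph{uniformly in $\zeta$}, so $\liminf_{y\to x}G(y)\ge G(x)-\epsilon$. Upper semicontinuity is automatic at points where $G(x)=f_2(x)$ (from $G\le f_2$ and continuity of $f_2$); at points where $G(x)<f_2(x)$ one transfers $\epsilon$--optimal seller strategies from nearby $y$ back to $x$ in the same way, the one delicate point being that such strategies can be chosen with $C(\alpha)$ bounded uniformly for $y$ near $x$ — here the growth condition \eqref{2.4} enters, since an ``over--volatile'' strategy sending the price to extreme levels is beaten by the buyer stopping the price at a moderate barrier, so the seller gains nothing from large $C(\alpha)$.

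Part (iii) is the heart of the lemma. Let $I$ be an open interval on which $G<f_2$. Since local concavity on an interval implies concavity, it suffices to prove, for each $x\in I$ and each sufficiently short $[x_1,x_2]\subset I$ with $x_1<x<x_2$ and $x=\lambda x_1+(1-\lambda)x_2$, that $G(x)\ge\lambda G(x_1)+(1-\lambda)G(x_2)$. I would exhibit a seller strategy achieving almost this value in two phases. In \emph{phase one} the seller runs the price as a martingale inside $I$ at a large constant volatility until the first time $\theta$ it hits $\{x_1,x_2\}$; by optional stopping it reaches $x_1$ with probability $\lambda$ and $x_2$ with probability $1-\lambda$. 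In \emph{phase two}, starting from $x_i$ at time $\theta$, the seller plays an $\epsilon$--optimal strategy for the game started at $x_i$, invoking part (i) to absorb the random remaining horizon $T-\theta>0$. If the buyer cancels in phase one the payoff is $f_2(\mathbf S_\zeta)$ with $\mathbf S_\zeta\in[x_1,x_2]$, and choosing $[x_1,x_2]$ short enough that $\mathrm{osc}_{[x_1,x_2]}G<\min_{[x_1,x_2]}(f_2-G)$ (possible by the continuity from (ii)) makes this $\ge\lambda G(x_1)+(1-\lambda)G(x_2)$; if the buyer cancels in phase two the conditional expected payoff is $\ge G(x_i)-\epsilon$; on the event (of probability made arbitrarily small by the phase--one volatility) that phase one does not finish by $T$, the worst case is the terminal payoff $f_1(\mathbf S_T)\ge0$. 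Combining these and letting $\epsilon\downarrow0$ yields the concavity inequality. The main obstacle I anticipate is orchestrating this two--phase strategy within the game structure — controlling the buyer's early cancellations and the concatenation/time--change bookkeeping in phase two — together with the uniform control of near--optimal seller strategies required in (ii).
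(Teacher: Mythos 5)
Parts (i) and the bounds $f_1\le G\le f_2$ are essentially fine (the paper gets $G\ge f_1$ even more simply, taking $\alpha\equiv 0$, which is in $\mathcal A$; and in (i) your phrase that the inner value ``depends on $\beta$ only through this law'' is imprecise as a general principle --- for a fixed strategy, the infimum over Brownian stopping times depends on how the price process sits inside the Brownian filtration, not only on its law --- but your deterministic time change transports the whole filtered system, and made explicit it is exactly the paper's scaling bijection). The genuine gap is in the upper semicontinuity part of (ii): you reduce it to the claim that $\epsilon$-optimal seller strategies for starting points $y$ near $x$ can be chosen with $C(\alpha)$ uniformly bounded, and you justify this only by the heuristic that ``an over-volatile strategy is beaten by the buyer stopping at a moderate barrier.'' That is not a proof, it is far from obvious (nothing in the definition of $\mathcal A$ bounds $C(\alpha)$, and for payoffs of power type it is unclear why near-optimal strategies should not use huge excursions), and no argument is indicated for how \eqref{2.4} would deliver it. Moreover the detour is unnecessary: since $\mathbf S^{\alpha,y}=(y/x)\,\mathbf S^{\alpha,x}$, condition \eqref{2.3++} gives, for every $\alpha\in\mathcal A$ and every $t$, the bound $f_i(\mathbf S^{\alpha,y}_t)\ge \frac{(1-L(z-1))f_i(\mathbf S^{\alpha,x}_t)-L\,\mathbf S^{\alpha,x}_t(z-1)}{1+L(z-1)}$ with $z=\max(x/y,\,y/x)$, and the error term is controlled uniformly over all $\alpha$ and all stopping times $\zeta$ by the supermartingale bound $\mathbb E_{\mathbb P}[\mathbf S^{\alpha,x}_\zeta]\le x$; taking $\inf_\zeta$ and then $\sup_\alpha$ yields $G(y)\ge\frac{(1-L(z-1))G(x)-Lx(z-1)}{1+L(z-1)}$, which by symmetry in $x$ and $y$ gives upper and lower semicontinuity simultaneously, with no uniform control of $C(\alpha)$ and no case distinction at points where $G=f_2$. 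This is how the paper argues, and it is the step you should replace.

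In (iii) your two-phase scheme is a legitimate alternative in outline, but the step you yourself flag --- absorbing the random remaining horizon $T-\theta$ after the exit time of the constant-volatility phase --- is exactly the nontrivial part and is not carried out: you need a measurable (through $\theta$) family of $\epsilon$-optimal continuation strategies for the horizon-$(T-\theta)$ problems (the scaling bijection of part (i) can provide it, but this must be said), a conditioning argument at the stopping time $\theta$ via the strong Markov property, and the observation that the buyer's extra information in $\mathcal G_\theta$ is independent of the post-$\theta$ driving Brownian motion and hence does not lower the continuation value. The paper removes this difficulty at the source: phase one is the martingale $M_t=\mathbb E_{\mathbb P}[x_1\mathbb I_{W_{T/2}>a}+x_2\mathbb I_{W_{T/2}<a}\mid\mathcal G_t]$ with $\mathbb P(W_{T/2}>a)=\lambda$, which reaches $\{x_1,x_2\}$ at the deterministic time $T/2$, so a single pair of horizon-$T/2$ near-optimal strategies suffices and the pasting only uses independence of Brownian increments at $T/2$; and instead of your short-interval oscillation device plus local-to-global concavity (which additionally leans on the continuity proof whose gap is noted above), the paper shows via the Snell envelope and part (i) that the buyer never cancels before $T/2$, since $Z_t\le G(M_t,T-t)=G(M_t)<f_2(M_t)$ while $M_t$ stays in $D$, giving the concavity inequality directly for arbitrary $x_2<x_3<x_1$ in $D$. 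As submitted, parts (ii) (upper semicontinuity) and (iii) (concatenation at the random time) are incomplete.
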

\begin{proof}
i. The proof will be done by a standard time scaling \added{argument}.
Let $x>0$ and $u\in (0,T]$.
Consider the Brownian motion \replaced{defined}{given} by
$\hat{\mathbf W}_t:=\sqrt\frac{u}{T}W_{\frac{tT}{u}}$, $t\in [0,u]$.
Let $\{\hat{\mathcal{G}}_t\}_{t=0}^{u}$ be the filtration which is generated by $\{\hat{\mathbf W}_t\}_{t=0}^u$ (completed by the null sets)
and let $\hat{\mathcal T}_u$ be the
set of all \added{$\{\hat{\mathcal{G}}_t\}_{t=0}^{u}$--}stopping times with values in $[0,u]$\deleted{with respect to this filtration}.
For any $x>0$ and any \added{$\{\hat{\mathcal{G}}_t\}_{t=0}^{u}$--}progressively
measurable \added{(sufficiently integrable)} process \deleted{(with respect to $\{\hat{\mathcal G}_t\}_{t=0}^u$)} $\hat\alpha=\{\hat\alpha_t\}_{t=0}^u$
define the process
$$\replaced{\hat{\mathbf{S}}^{\hat\alpha,x}_t}{\mathbf{S}^{\hat\alpha,x}_t}=xe^{\int_{0}^t \hat\alpha_v d\hat{\mathbf W}_v-\frac{1}{2}\int_{0}^t \hat\alpha^2_v dv}, \ \ t\in [0,u].$$
Denote by $\hat{\mathcal{A}}$ the set of all non--negative, \added{$\{\hat{\mathcal{G}}_t\}_{t=0}^{u}$-}progressively measurable processes
$\hat\alpha=\{\hat\alpha_t\}_{t=0}^u$ \added{with $\int_0^u \hat{\alpha}^2_t \,dt<\infty$ a.s.} for which
there exists a constant $C=C(\hat\alpha)$ such that
$\frac{1}{C}\leq \mathbf{\hat{S}}^{\hat{\alpha},1}\leq C$.
Observe that the maps
$\phi:\mathcal{T}_T\rightarrow \hat{\mathcal T}_u$
and $\psi:\mathcal{A}\rightarrow \hat{\mathcal A}$
given by $\phi(\zeta):=\frac{\zeta u}{T}$
and $[\psi(\alpha)]_t:=\sqrt\frac{T}{u}\alpha_{\frac{t T}{u}}$\replaced{,}{.} $t\in [0,u]$\added{,}
are bijections.
Moreover,
$\mathbf S^{\alpha,x}_t=\hat{\mathbf S}^{\psi(\alpha),x}_{\phi(t)}$, $t\in [0,T]$.
Thus, we obtain
\begin{align*}
G(x,T)&=\sup_{\hat\alpha\in\hat{\mathcal A}}\inf_{\hat\zeta\in\hat{\mathcal{T}}_{u}}
 \mathbb E_{\mathbb P}\left[f_2(\hat{\mathbf S}^{\hat\alpha,x}_{\hat\zeta})\mathbb{I}_{\hat\zeta<u}
+f_1(\hat{\mathbf S}^{\hat\alpha,x}_{u})\mathbb{I}_{\hat\zeta=u}\right]\\
&=
\sup_{\alpha\in\mathcal A}\inf_{\zeta\in\mathcal{T}_{u}}
 \mathbb E_{\mathbb P}\left[f_2({\mathbf S}^{\alpha,x}_{\zeta})\mathbb{I}_{\zeta<u}
+f_1({\mathbf S}^{\alpha,x}_{u})\mathbb{I}_{\zeta=u}\right]=G(x,u)\replaced{,}{,}
\end{align*}
as required. \qed\\
ii. In (\ref{4.0}), if we put $\zeta\equiv 0$ we obtain
$G\leq f_2$ and for $\alpha\equiv 0$ we obtain $G\geq f_1$.
Thus, $f_1\leq G\leq f_2$.
Next, we prove the continuity of $G$.
Let $x,y>0$. Denote
$z=\replaced{\max\big(\frac{x}{y},\frac{y}{x}\big)}{\max\left(\frac{x^2}{y^2},\frac{y^2}{x^2}\right)}.$
 Similarly to (\ref{3.28})\added{,} we obtain
that for any $\alpha\in\replaced{\mathcal{A}}{\Gamma}$ and $t\in [0,T]$
$$
f_i(\mathbf S^{\alpha,y}_t)\geq \frac{(1-L(z-1))f_i(\mathbf S^{\alpha,x}_t)-L \mathbf S^{\alpha,x}_t(z-1)}{1+L(z-1)} , \ \ i=1,2.
$$
This together with the fact that $\{\mathbf S^{\alpha,x}_t\}_{t=0}^T$ is a supermartingale gives
$$G(y)\geq \frac{(1-L(z-1))G(x)- L x(z-1)}{1+L(z-1)}.$$
As $x,y$ are arbitrary we\deleted{e} conclude
that $G(y)\geq\limsup_{n\rightarrow\infty} G(x_n)$ for any sequence $x_n\rightarrow y$,
\replaced{which}{this} yields the upper semi--continuity. Similarly,  for any sequence
$y_n\rightarrow x$ \replaced{we have}{,} $G(x)\leq \liminf_{n\rightarrow\infty} G(y_n)$, \replaced{which}{this} yields the lower semi--continuity and completes the proof.\qed\\
iii. Let $D\added{\subseteq(0,\infty)}$ be an open interval such that $G<f_2$ in $D$.
Fix $x_1,x_2, x_3\in D$ and assume that $0<x_2<x_3<x_1$.
Let $0<\lambda<1$ such that
$x_3=\lambda x_1+(1-\lambda)x_2$.
We need to show that
\begin{equation}\label{4.concave}
G(x_3)\geq \lambda G(x_1)+(1-\lambda) G(x_2).
\end{equation}
Let $a\in\mathbb R$ be a constant such that
$\mathbb P(W_{\frac{T}{2}}>a)=\lambda.$
Define the martingale
$$M_t=
\mathbb{E}_{\mathbb P}\Big[x_1\mathbb{I}_{W_{\frac{T}{2}}>a}+x_2\mathbb{I}_{W_{\frac{T}{2}}<a}\,\Big|\,\mathcal{G}_t\Big], \ \ t\in [0,T/2].$$
Observe that $M_0=x_3$. \replaced{We deduce from}{From} It\^o's formula\deleted{it follows} that
\begin{equation*}
M_t=x_3e^{\int_{0}^t \alpha_v\, dW_v-\frac{1}{2}\int_{0}^t \alpha^2_v \,dv},
\end{equation*}
where \added{for $t<\frac{T}{2}$}
\begin{align*}
\alpha_t&=\frac{1}{M_t}\frac{\partial }{\partial W_t}\left(x_1\int_{a-W_t}^{\infty}
\frac{\exp\left(-\frac{v^2}{2(\frac{T}{2}-t)}\right)}{\sqrt{2\pi(\frac{T}{2}-t)}}dv+x_2
\int_{-\infty}^{a-W_t}
\frac{\exp\left(-\frac{v^2}{2(\frac{T}{2}-t)}\right)}{\sqrt{2\pi(\frac{T}{2}-t)}}
dv\right)\nonumber\\
&=\frac{x_1-x_2}{M_t} \frac{\exp\left(-\frac{(a-W_t)^2}{2(\frac{T}{2}-t)}\right)}{\sqrt{2\pi(\frac{T}{2}-t)}}>0,
\end{align*}
\replaced{and for}{For} $t=\frac{T}{2}$ define $\alpha_{\frac{T}{2}}\equiv 0$.

Next, choose $\epsilon>0$.
There exist\deleted{s} $\alpha^{(1)},\alpha^{(2)}\in \mathcal{A}$ such that
\begin{equation}\label{4.5}
G(x_i)<\epsilon+\inf_{\zeta\in\mathcal{T}_{\frac{T}{2}}}
 \mathbb E_{\mathbb P}\left[f_2(\mathbf S^{\alpha^{(i)},x_i}_{\zeta})\mathbb{I}_{\zeta<\frac{T}{2}}
+f_1(\mathbf S^{\alpha^{(i)},x_i}_{\frac{T}{2}})\mathbb{I}_{\zeta=\frac{T}{2}}\right], \ \ i=1,2.
\end{equation}
The processes $\alpha^{(i)}$ are progressively measurable with respect to the filtration $\{\mathcal{G}_t\}_{t=0}^{T}$ and so
there exist\deleted{s} progressively measurable maps
$\varphi_i:C[0,T]\rightarrow C_{+}[0,T]$
(i.e. $[\varphi_i(y)]_{[0,t]}$ depends only on $y_{[0,t]}$)
such that
$\alpha^{(i)}=\varphi\added{_i}\deleted{^{(i)}}(W)$ a.s..
Consider the Brownian motion $\mathbf W_t=W\added{^{(2)}}_{t+\frac{T}{2}}-W\added{^{(2)}}_{\frac{T}{2}}$, \replaced{$0\leq t \leq \frac{T}{2}$}{$t\geq 0$}.
We extend the process $\alpha$ to the interval $(T/2,T]$ by \added{setting}
$$\alpha_{t+\frac{T}{2}}=\mathbb{I}_{W_{\frac{T}{2}}>a}[\varphi_1(\mathbf W)]_t+\mathbb{I}_{W_{\frac{T}{2}}\replaced{\leq}{<}a}[\varphi_2(\mathbf W)]_t, \ \ 0<t\leq \frac{T}{2}.$$
Clearly, the process ${\{\alpha_t\}}_{t=0}^T$ is non--negative and progressively measurable
with respect to the filtration $\{\mathcal{G}_t\}_{t=0}^{T}$.
The martingale $\{M_t\}_{t=0}^{T/2}$ satisfies
$0< \replaced{x_2}{x_1}\leq M\leq\replaced{x_1}{x_2}$. This together with the fact that $\alpha^{(1)},\alpha^{(2)}\in\mathcal A$
yields that
$\alpha\in\mathcal{A}$. Thus,
\begin{equation}\label{4.6}
G(x_3)\geq \inf_{\zeta\in\mathcal{T}_{T}}
 \mathbb E_{\mathbb P}\left[f_2(\mathbf S^{\alpha,x_3}_{\zeta})\mathbb{I}_{\zeta<T}
+f_1(\mathbf S^{\alpha,x_3}_{T})\mathbb{I}_{\zeta=T}\right].
\end{equation}

Now, we use that $G<f_2$ in $D$. Define the \deleted{stochastic}process\deleted{es}
\begin{equation*}
Z_t=\essinf_{\zeta\in\mathcal{T}_{[0,T]},\zeta\geq t}
\mathbb E_{\mathbb P}\left[f_2(\mathbf S^{\alpha,x_3}_{\zeta})\mathbb{I}_{\zeta<T}
+f_1(\mathbf S^{\alpha,x_3}_{T})\mathbb{I}_{\zeta=T}\, \Big| \,\mathcal G_t\right], \ \  t\in [0,T],
\end{equation*}
and the stopping time $\eta\in\mathcal{T}_{[0,T]}$ by,
\begin{equation*}
\eta=\inf\{t: Z_t=f_2(\mathbf S^{\alpha,x_3}_t)\}\wedge T.
\end{equation*}
From the general theory of optimal stopping
(see Peskir and Shiryaev 2006, chapter I)\added{,} it follows
that
\begin{equation*}
Z_0=\mathbb E_{\mathbb P} \Big[f_2(\mathbf S^{\alpha,x_3}_{\eta})\mathbb{I}_{\eta<T}
+f_1(\mathbf S^{\alpha,x_3}_{T})\mathbb{I}_{\eta=T}\Big].
\end{equation*}
The strong Markov property of Brownian \added{motion} implies that \added{for $t< \frac{T}{2}$}
$$Z_t\leq G(\mathbf S^{\alpha,x_3}_{t},T-t)=G(M_t)<f_2(M_t)\added{,}$$
where the last inequality follows from the fact that $M_t\in D$.
We conclude that $\eta\geq \frac{T}{2}$ a.s., and so
from the independence of $\{\mathbf W_t\}_{t=0}^{T/2}$ and $\{W\added{^{(2)}}_t\}_{t=0}^{T/2}$
\begin{align*}
Z_0=\mathbb E_{\mathbb P}Z_{\frac{T}{2}}=& \
\lambda \inf_{\zeta\in\mathcal{T}_{\frac{T}{2}}}
 \mathbb E_{\mathbb P}\left[f_2(\mathbf S^{\alpha^{(1)},x_1}_{\zeta})\mathbb{I}_{\zeta<\frac{T}{2}}
+f_1(\mathbf S^{\alpha^{(1)},x_1}_{\frac{T}{2}})\mathbb{I}_{\zeta=\frac{T}{2}}\right]\\
&\ +
(1-\lambda) \inf_{\zeta\in\mathcal{T}_{\frac{T}{2}}}
 \mathbb E_{\mathbb P}\left[f_2(\mathbf S^{\alpha^{(2)},x_2}_{\zeta})\mathbb{I}_{\zeta<\frac{T}{2}}
+f_1(\mathbf S^{\alpha^{(2)},x_2}_{\frac{T}{2}})\mathbb{I}_{\zeta=\frac{T}{2}}\right].
\end{align*}
This together with (\ref{4.5})--(\ref{4.6})
yields
$$G(x_3)\geq Z_0\geq \lambda G(x_1)+(1-\lambda) G(x_2)-\epsilon\added{,}$$
and by letting $\epsilon\downarrow 0$ we get $(\ref{4.concave})$, which completes the proof.
\end{proof}
Recall the set $\mathcal{C}(\nu_0)$, which was introduced in the beginning of Section~\ref{sec:2}, namely the set of all
continuous, strictly positive
stochastic processes $\alpha={\{\alpha_t\}}_{t=0}^T$
which are adapted with respect to the filtration generated by $W$ completed by the null sets, and satisfy:
i. $\alpha_0=\nu_0$. ii.
$\alpha$ and $\frac{1}{\alpha}$ are uniformly bounded.
Define the function $F:(0,\infty)\times (0,T]\rightarrow \mathbb R$ by
$$F(x,u)=\sup_{\alpha\in\mathcal{C}(\nu_0) }\inf_{\zeta\in\mathcal{T}_{u}}
 \mathbb E_{\mathbb P}\left[f_2(\mathbf S^{\alpha,x}_{\zeta})\mathbb{I}_{\zeta<u}
+f_1(\mathbf S^{\alpha,x}_{u})\mathbb{I}_{\zeta=u}\right].$$
\begin{lem}\label{lem4.1}
For any $x>0$ and $u\in (0,T]$, $F(x,u)\geq G(x)$.
\end{lem}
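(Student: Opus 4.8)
The plan is to fix an arbitrary $\alpha\in\mathcal A$ and show that the optimal-stopping value
$$v_\alpha:=\inf_{\zeta\in\mathcal T_u}\mathbb E_{\mathbb P}\big[f_2(\mathbf S^{\alpha,x}_\zeta)\mathbb I_{\zeta<u}+f_1(\mathbf S^{\alpha,x}_u)\mathbb I_{\zeta=u}\big]$$
is the limit of $v_\beta$ along a suitable sequence $\beta\in\mathcal C(\nu_0)$. Since $v_\beta\le F(x,u)$ for every $\beta\in\mathcal C(\nu_0)$, and since Lemma~\ref{lem4.0}(i) gives $G(x)=G(x,u)=\sup_{\alpha\in\mathcal A}v_\alpha$, this will yield $F(x,u)\ge G(x)$. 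The argument is driven by the crude but decisive estimate
$$|v_\theta-v_{\theta'}|\le \mathbb E_{\mathbb P}\Big[\sup_{0\le t\le u}\big|f_2(\mathbf S^{\theta,x}_t)-f_2(\mathbf S^{\theta',x}_t)\big|\Big]+\mathbb E_{\mathbb P}\big[\big|f_1(\mathbf S^{\theta,x}_u)-f_1(\mathbf S^{\theta',x}_u)\big|\big],$$
which holds because the right-hand side dominates the difference of the two stopping integrands \emph{uniformly} in $\zeta\in\mathcal T_u$. Consequently I never need continuity of the optimal-stopping value in the volatility; it is enough to make $\mathbf S^{\theta,x}$ close to $\mathbf S^{\alpha,x}$ uniformly on $[0,u]$ in probability, with enough integrability (from the growth bound \eqref{2.4}) to upgrade this to $L^1$-closeness of the path functionals, and closeness of the stochastic exponentials follows from closeness of the volatilities in $L^2([0,T],dt)$ a.s.\ via It\^o's isometry and Doob's inequality.

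I approximate $\alpha$ by an element of $\mathcal C(\nu_0)$ in three stages, and I expect the first to carry the real weight. A process in $\mathcal A$ need not be bounded --- only its stochastic exponential $\mathbf S^{\alpha,1}$ is, say by a constant $C=C(\alpha)$ --- so a naive truncation $\alpha\wedge M$ could drive $\mathbf S^{\alpha\wedge M,x}$ arbitrarily far outside the band $[x/C,xC]$ in which $\mathbf S^{\alpha,x}$ lives, wrecking uniform integrability. I therefore first cap the quadratic-variation budget: put $\tau_V:=\inf\{t:\int_0^t\alpha_s^2\,ds\ge V\}\wedge T$ and $\alpha^V:=\alpha\,\mathbb I_{[0,\tau_V)}$. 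Then $\mathbf S^{\alpha^V,x}=\mathbf S^{\alpha,x}_{\cdot\wedge\tau_V}$ still takes values in $[x/C,xC]$ and $\int_0^T(\alpha^V_s)^2\,ds\le V$, while $\tau_V\uparrow T$ a.s.\ as $V\to\infty$ (because $\int_0^T\alpha_s^2\,ds<\infty$ a.s.), so $\mathbf S^{\alpha^V,x}\to\mathbf S^{\alpha,x}$ uniformly on $[0,T]$ a.s.; uniform continuity of $f_1,f_2$ on $[x/C,xC]$ and bounded convergence then give $v_{\alpha^V}\to v_\alpha$.

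In the second stage I fix $V$ and regularise $\alpha^V$: I truncate it into $[M^{-1},M]$ and convolve in time, producing a continuous, $\{\mathcal G_t\}$-adapted process $\alpha^{V,M,\delta}$ with values in $[M^{-1},M]$ (off a negligible initial sliver) and $\int_0^T(\alpha^{V,M,\delta}_t)^2\,dt$ bounded by a constant depending only on $V$ (Jensen), with $\alpha^{V,M,\delta}\to\alpha^V$ in $L^2([0,T],dt)$ a.s.\ as $M\to\infty,\ \delta\to 0$. Because the quadratic-variation budget is now \emph{deterministically} bounded, standard moment estimates for stochastic exponentials bound $\sup_{M,\delta}\mathbb E_{\mathbb P}\big[\sup_{t\le u}(\mathbf S^{\alpha^{V,M,\delta},x}_t)^{2N}\big]$, which by \eqref{2.4} makes the relevant payoff functionals uniformly integrable, so along a diagonal sequence $v_{\alpha^{V,M_k,\delta_k}}\to v_{\alpha^V}$. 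In the third stage I enforce the initial condition $\alpha_0=\nu_0$ by a vanishing burn-in: on a small interval $[0,\eta]$ I replace $\alpha^{V,M_k,\delta_k}$ by the linear interpolation from $\nu_0$ to $\alpha^{V,M_k,\delta_k}_\eta$, leaving it unchanged on $[\eta,T]$. The result $\beta^{(k,\eta)}$ is continuous, $\{\mathcal G_t\}$-adapted, starts at $\nu_0$, and lies between $\min(\nu_0,M_k^{-1})$ and $\max(\nu_0,M_k)$, hence $\beta^{(k,\eta)}\in\mathcal C(\nu_0)$; and it differs from $\alpha^{V,M_k,\delta_k}$ only on the shrinking interval $[0,\eta]$, by a bounded amount, so $v_{\beta^{(k,\eta)}}\to v_{\alpha^{V,M_k,\delta_k}}$ as $\eta\to 0$.

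Chaining the three limits and diagonalising (first $V\to\infty$, then $(M_k,\delta_k)$, then $\eta\to 0$) produces a sequence $\beta^{(n)}\in\mathcal C(\nu_0)$ with $v_{\beta^{(n)}}\to v_\alpha$, hence $F(x,u)\ge v_\alpha$; taking the supremum over $\alpha\in\mathcal A$ completes the proof. Apart from the bookkeeping of the nested approximations, the only genuinely delicate issue is the uniform-integrability control, and the point of the first stage is exactly to reduce every later moment estimate to the case of a deterministically bounded quadratic-variation budget, where it is routine.
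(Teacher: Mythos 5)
Your proposal is correct in substance and follows the same overall strategy as the paper: approximate a given $\alpha\in\mathcal A$ by elements of $\mathcal C(\nu_0)$ in an $L^2$ sense, transfer this to uniform closeness of the stochastic exponentials in probability, and control the payoff discrepancy uniformly over stopping times (using Lemma~\ref{lem4.0}(i) to identify $G(x)$ with the horizon-$u$ value, exactly as you do). The differences are in execution. The paper first observes that $\mathbb E_{\mathbb P}\big[\int_0^u\alpha_t^2\,dt\big]<\infty$ automatically, because $\mathbf S^{\alpha,x}$ is a bounded martingale; it then invokes ``standard density arguments'' to produce $\alpha^{(n)}\in\mathcal C(\nu_0)$ converging in $L^2(\mathbb P\times dt)$, and bounds the error one-sidedly via the inequality \eqref{3.28} coming from \eqref{2.3++}, together with the supermartingale property of $\mathbf S^{\alpha,x}$ and Cauchy--Schwarz with $X=\sup_t f_2(\mathbf S^{\alpha,x}_t)$ --- so it only ever needs moments of the \emph{original} process, never of the approximants. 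You instead construct the approximating sequence explicitly (quadratic-variation capping, truncation and mollification, burn-in) and use a symmetric $L^1$ comparison, which forces you to establish uniform moments of the approximating exponentials; your deterministic cap on $\int\beta_t^2\,dt$ is precisely what makes those moments routine. Both routes work; yours fleshes out what the paper compresses into a density claim, at the cost of heavier bookkeeping.

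One concrete slip: your third stage, as written, is not adapted. The linear interpolation from $\nu_0$ at time $0$ to $\alpha^{V,M_k,\delta_k}_\eta$ at time $\eta$ uses the time-$\eta$ value of the process on all of $[0,\eta)$, so the resulting process is not $\{\mathcal G_t\}$-adapted and hence not in $\mathcal C(\nu_0)$. The fix is immediate: on $[0,\eta]$ take $\beta_t=(1-t/\eta)\,\nu_0+(t/\eta)\,\alpha^{V,M_k,\delta_k}_t$, a convex combination with the \emph{running} value, which is adapted, continuous, equals $\nu_0$ at $0$, stays bounded and bounded away from zero, and differs from $\alpha^{V,M_k,\delta_k}$ only on $[0,\eta]$ by a bounded amount; the rest of your argument then goes through unchanged.
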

\begin{proof}
Fix $x>0$, $u\in(0,T]$ and choose $\epsilon>0$. Let $\alpha\in\mathcal{A}$ such that
\begin{equation}\label{4.100}
G(x)<\epsilon+\inf_{\zeta\in\mathcal{T}_{u}}
 \mathbb E_{\mathbb P}\left[f_2({\mathbf S}^{\alpha,x}_{\zeta})\mathbb{I}_{\zeta<u}
 +f_1({\mathbf S}^{\alpha,x}_{u})\mathbb{I}_{\zeta=u}\right].
\end{equation}
Notice that $d\mathbf{S}^{\alpha,x}_t=\alpha_t \mathbf{S}^{\alpha,x}_t dW_t$, and so from the fact that
$\frac{1}{C}\leq \mathbf{S}^{\alpha,x}\leq C$ for some constant $C$, we deduce that
$\mathbb E_{\mathbb P}\big[\int_{0}^u\alpha^2_t \,dt\big]<\infty$.
Thus,
by applying standard density arguments, it follows that we can find a sequence of stochastic processes
$\replaced{(\alpha^{(n)})\subseteq}{\alpha^{(n)}\in}\mathcal{C}(\nu_0)$ such that
$$\lim_{n\rightarrow\infty}\mathbb E_{\mathbb P}\left[\int_{0}^u\big((\alpha^{(n)}_t-\alpha_t)^2+
|(\alpha^{(n)}_t)^2-(\alpha_t)^2|\big)\,dt\right]=0.$$
\replaced{We deduce from}{From} the \replaced{Burkholder--Davis--Gundy inequality}{Doob--Kolmogorov inequality and Ito's Isometry}\deleted{it follows} that
$$\lim_{n\rightarrow\infty}\mathbb E_{\mathbb P}\left[\sup_{0\leq v\leq u}\left(\int_{0}^v (\alpha^{(n)}_t-\alpha_t) \,dW_t\right)^2\right]=0.$$
\replaced{Therefore, we}{We} conclude the following convergence
\begin{equation*}
\sup_{0\leq t\leq u}|\ln{\mathbf S}^{\alpha_n,x}_t-\ln\mathbf{S}^{\alpha,x}_t|\rightarrow 0 \  \mbox{in}  \ \mbox{probability}.
\end{equation*}
Next, choose $\delta>0$. There exists $n\in\mathbb N$ such that
$$\mathbb P\left(\sup_{0\leq t\leq u}|\ln{\mathbf S}^{\alpha_n,x}_t-\ln\mathbf{S}^{\alpha,x}_t|>\delta\right)<\delta.$$
Set $X=\sup_{0\leq t\leq u} f_2(\mathbf{S}^{\alpha,x}_t)$ and the event
$U=\big(\sup_{0\leq t\leq u}|\ln{\mathbf S}^{\alpha_n,x}_t-\ln\mathbf{S}^{\alpha,x}_t|>\delta\big)$.
The growth condition (\ref{2.4}) implies
that $\mathbb E_{\mathbb P} [X^2]<\infty$.
Similarly to (\ref{3.28})--(\ref{3.29}), we get
\begin{align*}
F(x\added{,u})&\geq \inf_{\zeta\in\mathcal{T}_{u}}
 \mathbb E_{\mathbb P}\left[\mathbb{I}_{\Omega\setminus U}\left(f_2({\mathbf S}^{\alpha^{(n)},x}_{\zeta})\mathbb{I}_{\zeta<u}
 +f_1({\mathbf S}^{\alpha^{(n)},x}_{u})\mathbb{I}_{\zeta=u}\right)\right]\\
 &\geq
\frac{1-L(e^{2\delta}-1)}{1+L(e^{2\delta}-1)}\inf_{\zeta\in\mathcal{T}_{u}}\mathbb E_{\mathbb P}
\left[f_2({\mathbf S}^{\alpha,x}_{\zeta})\mathbb{I}_{\zeta<u}
+f_1({\mathbf S}^{\alpha,x}_{u})\mathbb{I}_{\zeta=u}\right]\\
& \phantom{ \geq }-
\frac{1-L(e^{2\delta}-1)}{1+L(e^{2\delta}-1)}\mathbb E_{\mathbb P}[X\mathbb{I}_U]-
\frac{L (e^{2\delta}-1)}{1+L(e^{2\delta}-1)}
\sup_{\zeta\in\mathcal{T}_{u}}\mathbb E_{\mathbb P}[{\mathbf S}^{\alpha,x}_{\zeta}]\\
&\geq
\frac{1-L(e^{2\delta}-1)}{1+L(e^{2\delta}-1)}\left(G(x)-\epsilon-\sqrt{\delta \mathbb E_{\mathbb P} [X^2]}\right)-
\frac{L (e^{2\delta}-1)}{1+L(e^{2\delta}-1)}x
\end{align*}
where the last inequality follows
from (\ref{4.100}), the Cauchy--Schwarz inequality and the fact that ${\mathbf S}^{\alpha,x}$ is a supermartingale.
By letting $\delta\downarrow 0$ we obtain
$F(x)\geq G(x)-\epsilon$, and by letting $\epsilon\downarrow 0$ we complete the proof.
\end{proof}
Next, recall the terms $\mathbb H$ and $g$ which
were defined before Assumption \ref{asm2.2}.
From Lemma \ref{lem4.0}, we conclude that $G\in\mathbb H$\replaced{,}{ and} in particular
$G\geq g$. This together with Lemma \ref{lem4.1} gives the following immediate corollary.
\begin{cor}\label{4.new}
For any $x>0$ and $u\in (0,T]$,
\begin{equation}\label{4.nnn}
\sup_{\alpha\in\mathcal{C}(\nu_0) }\inf_{\zeta\in\mathcal{T}_{u}}
 \mathbb E_{\mathbb P}\left[f_2(\mathbf S^{\alpha,x}_{\zeta})\mathbb{I}_{\zeta<u}
+f_1(\mathbf S^{\alpha,x}_{u})\mathbb{I}_{\zeta=u}\right]\geq g(x).
\end{equation}
\end{cor}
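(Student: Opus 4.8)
The plan is to obtain \eqref{4.nnn} as an immediate consequence of the two lemmas just proved, together with the minimality property of $g$. First I would observe that the left hand side of \eqref{4.nnn} is, by definition, nothing but $F(x,u)$; hence Lemma~\ref{lem4.1} directly gives
$$\sup_{\alpha\in\mathcal{C}(\nu_0) }\inf_{\zeta\in\mathcal{T}_{u}}\mathbb E_{\mathbb P}\left[f_2(\mathbf S^{\alpha,x}_{\zeta})\mathbb{I}_{\zeta<u}+f_1(\mathbf S^{\alpha,x}_{u})\mathbb{I}_{\zeta=u}\right]=F(x,u)\geq G(x),$$
so the whole task reduces to showing that $G(x)\geq g(x)$ for every $x>0$.

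For this I would invoke Lemma~\ref{lem4.0}: part~(ii) gives that $G$ is continuous with $f_1\leq G\leq f_2$, and part~(iii) gives that $G$ is concave in every interval in which $G<f_2$. These are exactly the conditions defining the class $\mathbb H$ introduced before Assumption~\ref{asm2.2}, so $G\in\mathbb H$. Since $g=\inf_{h\in\mathbb H}h$ is, by Ekstr\"om and Villeneuve (2006, Lemma~2.4), the smallest element of $\mathbb H$, it follows that $G\geq g$ pointwise. Combining this with the previous display yields \eqref{4.nnn}.

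There is essentially no obstacle at this stage: the substance of the corollary is entirely carried by Lemmas~\ref{lem4.0} and~\ref{lem4.1}, both of which are already in hand, and the final inequality $F(x,u)\geq G(x)\geq g(x)$ is just a one-line chaining. The only point that needs a moment's care is checking that the continuity/domination and the concavity-off-$\{G<f_2\}$ properties established for $G$ match verbatim the membership condition for $\mathbb H$, which is precisely what licenses the use of the minimality of $g$; this is exactly what parts (ii) and (iii) of Lemma~\ref{lem4.0} were arranged to deliver.
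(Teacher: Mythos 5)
Your proposal is correct and follows exactly the paper's argument: the left-hand side is $F(x,u)$, Lemma~\ref{lem4.1} gives $F(x,u)\geq G(x)$, and Lemma~\ref{lem4.0} shows $G\in\mathbb H$, hence $G\geq g$ by the minimality of $g$. This is precisely how the paper deduces the corollary.
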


We end with the following remark.

\begin{rem}
Let us take $r\equiv 0$. Then by following the proof of (\ref{2.30})
and applying Lemmas \ref{lem4.0}--\ref{lem4.1}, we get that for any $u\in [0,T]$
$$V\geq F(S_0,u)\geq G(S_0)\geq g(S_0).$$ This together with the inequality
$V\leq \mathbf V\leq g(S_0)$ (Assumption \ref{asm2.2} holds true) gives
$F(S_0,u)=G(S_0)=g(S_0)$, i.e., we conclude that $F(x,u)=G(x)=g$ and $G$ is the minimal element in $\mathbb H$.
Observe that the functions $F,G$ are independent of the
interest rates, and so this result can be viewed as a general conclusion which provides a link between the
game variant of concave envelope $g$ and the value $G$ of the
optimal stopping problem under volatility uncertainty.
\end{rem}

% % % % % % % % % % % % % % % % % % % % % % % % % % %
\section{Density Results for martingale measures}\label{sec:8}\setcounter{equation}{0}
Recall the filtered probability space $(\Omega,\mathcal{F},\{\mathcal{F}_t\}_{t=0}^T, \mathbb{P})$
and the price process $S=\{S_t\}_{t=0}^T$
introduced in \eqref{2.2}.
For any probability measure $\mathbb Q\ll \mathbb P$,
we denote by
$\mathbb Q^S$ the distribution of the discounted stock price process
$\tilde S_t=\frac{S_t}{B_t}$, $t\in [0,T]$
on the canonical space $C[0,T]$. Namely,
$\mathbb Q^S(\mathbb A)=\mathbb Q (\tilde S\in \mathbb A)$ for any Borel
set $\mathbb A\in C[0,T]$.

Define
$\mathcal M^S=\{\mathbb Q^S: \mathbb Q\in\mathcal Q\}$, where  $\mathcal Q$ is the
set of all probability measures $\mathbb Q\ll \mathbb P$ such that $\{W_t\}_{t=0}^T$ is a
Brownian motion with respect to $\mathbb Q$ and
the filtration $\{\mathcal F_t\}_{t=0}^T$, as defined in Section \ref{sec:3+}.
Clearly, $\mathcal M^S\subset \mathcal M$, where $\mathcal M$ denotes the set of all strictly positive local martingale measures as in Section \ref{sec:3+}.
\begin{lem}\label{lem.density}
If the financial market given by \eqref{2.1}--\eqref{2.2} is fully incomplete, then
$\mathcal M^S$ is a weakly dense subset of $\mathcal M$.
\end{lem}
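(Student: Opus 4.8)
The plan is to establish the non-trivial direction $\mathcal{M}\subseteq\overline{\mathcal{M}^S}$, where the closure is in the (metrizable) weak topology on probability measures on $C[0,T]$, since $\mathcal{M}^S\subseteq\mathcal{M}$ is already noted. I would split this into two steps. Step~1, where full incompleteness is used, shows that for every $\alpha\in\mathcal{C}(\nu_0)$ the law $\mathrm{Law}_{\mathbb{P}}(\mathbf{S}^{\alpha,S_0})$ lies in $\overline{\mathcal{M}^S}$. Step~2, which is purely model-free, shows that $\{\mathrm{Law}_{\mathbb{P}}(\mathbf{S}^{\alpha,S_0}):\alpha\in\mathcal{C}(\nu_0)\}$ is itself weakly dense in $\mathcal{M}$. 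Since each $\mathbf{S}^{\alpha,S_0}=S_0\mathcal{E}(\int_0^{\cdot}\alpha_s\,dW_s)$ is a strictly positive continuous local martingale started at $S_0$ (recall $\alpha,1/\alpha$ are bounded, so $\int_0^T\alpha_s^2\,ds<\infty$), its law belongs to $\mathcal{M}$, and combining the two steps yields $\mathcal{M}=\overline{\{\mathrm{Law}_{\mathbb{P}}(\mathbf{S}^{\alpha,S_0})\}}\subseteq\overline{\mathcal{M}^S}\subseteq\mathcal{M}$, which is the lemma.

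For Step~1, fix $\alpha\in\mathcal{C}(\nu_0)$ and let $C$ bound $\alpha$ and $1/\alpha$. Applying Definition~\ref{dfn2.1} with $\epsilon=1/n$ gives $\mathbb{Q}_n\ll\mathbb{P}$ with $W$ a $\mathbb{Q}_n$-Brownian motion and $\mathbb{Q}_n(\|\alpha-\nu\|_\infty>1/n)<1/n$; in particular $\mathbb{Q}_n\in\mathcal{Q}$, hence $\mathbb{Q}_n^S\in\mathcal{M}^S$. Since $d\tilde{S}_t=\nu_t\tilde{S}_t\,dW_t$ holds $\mathbb{P}$-a.s.\ and thus $\mathbb{Q}_n$-a.s., and $W$ is a $\mathbb{Q}_n$-Brownian motion, $\tilde{S}=S_0\mathcal{E}(\int_0^{\cdot}\nu_s\,dW_s)$ under $\mathbb{Q}_n$, whereas $\mathbf{S}^{\alpha,S_0}=S_0\mathcal{E}(\int_0^{\cdot}\alpha_s\,dW_s)$. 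With $\tau_n:=\inf\{t:|\alpha_t-\nu_t|\ge 2/n\}\wedge T$ one has $\{\tau_n<T\}\subseteq\{\|\alpha-\nu\|_\infty>1/n\}$, so $\mathbb{Q}_n(\tau_n<T)<1/n$; on $\{\tau_n=T\}$ the process $\nu$ is bounded by $C+1$, $\int_0^T|\nu_s^2-\alpha_s^2|\,ds\le (2C+1)T\cdot(2/n)$, and, by Doob's inequality and It\^o's isometry under $\mathbb{Q}_n$, $\mathbb{E}_{\mathbb{Q}_n}\!\big[\mathbb{I}_{\{\tau_n=T\}}\sup_{t\le T}\big(\int_0^t(\nu_s-\alpha_s)\,dW_s\big)^2\big]\le 16T/n^2$. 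Hence $\sup_{t\le T}|\log\tilde{S}_t-\log\mathbf{S}^{\alpha,S_0}_t|\to 0$ and therefore $\|\tilde{S}-\mathbf{S}^{\alpha,S_0}\|_\infty\to 0$ in $\mathbb{Q}_n$-probability. As $\alpha$ is adapted to the $\mathbb{P}$-completed filtration of $W$ and $W$ is a Brownian motion under both $\mathbb{P}$ and $\mathbb{Q}_n$, the law of $\mathbf{S}^{\alpha,S_0}$ is the same under $\mathbb{P}$ and under $\mathbb{Q}_n$; a coupling (converging-together) argument then gives $\mathbb{Q}_n^S=\mathrm{Law}_{\mathbb{Q}_n}(\tilde{S})\Rightarrow\mathrm{Law}_{\mathbb{P}}(\mathbf{S}^{\alpha,S_0})$, so $\mathrm{Law}_{\mathbb{P}}(\mathbf{S}^{\alpha,S_0})\in\overline{\mathcal{M}^S}$.

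Step~2 is the model-free approximation, which I expect to be the main obstacle. Given $Q\in\mathcal{M}$, realize it as the law of a strictly positive continuous local martingale $\mathbb{S}$ with $\mathbb{S}_0=S_0$. First localize: with $\tau_m:=\inf\{t:\mathbb{S}_t\notin(1/m,m)\}\wedge\inf\{t:\langle\log\mathbb{S}\rangle_t\ge m\}\wedge T$ one has $\tau_m\uparrow T$ a.s., hence $\mathrm{Law}(\mathbb{S}^{\tau_m})\Rightarrow Q$, so it suffices to approximate a martingale that is uniformly bounded, uniformly bounded below, and of uniformly bounded bracket. Such a target is then approximated by a time-discretization on a fine grid $0=t_0<\dots<t_{n}=T$: one builds an approximating martingale cell by cell, where on $[t_k,t_{k+1}]$ it interpolates continuously between its value at $t_k$ and a random variable whose conditional law, given the past, matches a finitely supported discretization of the conditional law of the target increment, realized as an explicit functional of the fresh independent increment $(W_s-W_{t_k})_{s\ge t_k}$ of the \emph{single} Brownian motion $W$ --- for instance via the Black--Scholes-bridge martingales of the kind built in the proof of Lemma~\ref{lem4.0}(iii), whose volatility is an explicit, strictly positive, continuous functional of that increment. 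Truncating each bridge slightly before $t_{k+1}$, adding a constant volatility floor $\epsilon$, and mollifying puts the construction in the form $\mathbf{S}^{\alpha,S_0}$ with $\alpha\in\mathcal{C}(\nu_0)$ (the normalization $\alpha_0=\nu_0$ at the single point $t=0$ does not affect the law). Sending $m\to\infty$, the mesh to $0$, and $\epsilon\to 0$, these processes converge uniformly (in probability along the construction) to $\mathbb{S}$, so their laws converge weakly to $Q$; one also checks that martingale laws with finitely supported conditional increments are themselves weakly dense in $\mathcal{M}$, which is what lets the cell-by-cell construction reach every $Q$. The delicate point --- the real obstacle --- is to carry out this discretization so that the resulting volatility is \emph{simultaneously} adapted to $W$, continuous, and uniformly bounded above and below, while controlling the pathwise error; this is precisely where the richness of the Wiener space is used, the disjoint increments of $W$ on successive grid cells supplying the independent randomness needed to realize the prescribed conditional laws.
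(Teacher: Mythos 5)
Your Step~1 is sound and is essentially the paper's final step: full incompleteness gives $\mathbb{Q}_n\in\mathcal Q$ with $\|\alpha-\nu\|_\infty$ small in $\mathbb Q_n$-probability, the law of $\mathbf S^{(\alpha)}$ is $\mathbb Q_n$-invariant because $\alpha$ is $W$-adapted, and a converging-together argument (the paper instead cites Duffie--Protter for weak convergence of the stochastic integrals; your direct estimate with the stopping time $\tau_n$ is a legitimate, even more elementary, substitute, provided you phrase the Doob/It\^o bound for the integral stopped at $\tau_n$, which agrees with the unstopped one on $\{\tau_n=T\}$).

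The genuine gap is in Step~2, which you yourself flag as ``the real obstacle'' and then do not resolve. The whole content of the lemma beyond the localization to bounded martingale laws (your $\tau_m$-stopping, which matches the paper's first step) is the construction, \emph{on the given Wiener space and adapted to $W$}, of exponential martingales $\mathbf S^{(\alpha)}$, $\alpha\in\mathcal C(\nu_0)$, whose laws approximate an arbitrary bounded martingale law $Q$. Your cell-by-cell bridge sketch leaves precisely this unproved: you do not specify how to realize the (regular) conditional laws of the skeleton of $Q$ as functionals of the successive Brownian increments in such a way that the interpolating process is a strictly positive martingale with a nonnegative, adapted volatility that can then be pushed into $\mathcal C(\nu_0)$; the binary-split bridge of Lemma~\ref{lem4.0}(iii) only handles two-point conditional laws, its volatility is neither bounded away from zero nor bounded above near the cell endpoints, and the proposed ``truncate, floor, mollify'' repairs are exactly the uncontrolled steps where the matching of conditional laws and the error bounds must be checked. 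Moreover, the claim that the constructed processes ``converge uniformly in probability to $\mathbb S$'' is not meaningful as stated: $\mathbb S$ is a canonical process under $Q$ while your approximants live under $\mathbb P$ on Wiener space, so only a comparison of laws (or a comparison with an intermediate Wiener-space copy of the skeleton) is available. The paper closes this gap with a specific device you would need: the quantile coupling $M_k=\sup\{y:\rho_k(y,M_1,\dots,M_{k-1})<\Phi(\tilde Z_k)\}$ built from regular conditional distributions and the independent increments $\tilde Z_k$, followed by the interpolation $\hat M_t=\mathbb E_{\mathbb P}[M_n\,|\,\mathcal G_t]$; the monotonicity of the quantile transform in $\tilde Z_k$ is what makes the integrand $\beta_t\geq 0$, so that $\hat M=\mathbf S^{\alpha,S_0}$ with $\alpha\in\mathcal A$, and the passage from $\mathcal A$ to $\mathcal C(\nu_0)$ is then the $L^2$-approximation of Lemma~\ref{lem4.1}, with the law comparison carried out in the L\'evy--Prokhorov metric via Doob's inequality. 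Without an argument of this kind (or a worked-out substitute), your Step~2 remains a plan rather than a proof.
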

\begin{proof}
${}$\\
\textbf{First Step:}
Denote by $\mathcal M^b$ the set of all probability measures $\hat Q\in\mathcal M$ such that the
canonical process $\mathbb S$ is a $\hat Q$-martingale which satisfies
$\frac{1}{C}\leq\mathbb S\leq C$ $\hat Q$-a.s.
for some constant $C>0$ (which depends on $\hat Q$).
Let us show that $\mathcal M^b$ is a weakly dense subset of $\mathcal M$.
Let $Q\in\mathcal M$.
For any $C>0$ define the stopping time
$\tau_C=T\wedge\min\{t: \mathbb S_t\leq\frac{1}{C} \ \mbox{or} \ \mathbb S_t\geq C\}.$
Observe that the continuity of $\mathbb S$ implies that $\tau_C$ is a
stopping time with respect to the canonical filtration
$\mathbb F_t=\sigma\{\mathbb S_u:u\leq t\}$.
Consider the truncated stochastic process $\mathbb S^C$ given by
$\mathbb S^C_t=\mathbb S_{t\wedge\tau_C}$ , $t\in [0,T]$.
Let $Q^C$ be a probability measure on $C[0,T]$
defined by
$Q^C(\mathbb A)= Q (\mathbb S^C\in \mathbb A)$, for any Borel
set $\mathbb A\in C[0,T]$. Observe that $Q^C$ is the distribution of the process
$\mathbb S^C$ under the probability measure $Q$.
Clearly,
$\lim_{C\rightarrow\infty}\max_{0\leq t\leq T}|\mathbb S^C_t-\mathbb S_t|=0$ $Q$-a.s.
Hence, as $C\rightarrow\infty$, $Q^C$ converges weakly to $Q$.

From the Doob optional stopping theorem, see, e.g., Liptser and Shiryaev (2001, Theorem~3.6), it follows that under the probability measure
$Q$ the stochastic process $\mathbb S^C$
is a continuous martingale which  satisfies
$\frac{1}{C}\leq \mathbb S^C\leq C$ $Q$-a.s. Thus,
for any $C>0$, we have $Q^C\in\mathcal M^b$, so
we conclude that
$Q$ is a cluster point of $\mathcal M^b$, as required. \\
\textbf{Second Step:}
 Choose $Q\in\mathcal M^b$ and fix $\epsilon>0$.
There exists $n\in\mathbb N$ such that
 \begin{equation}\label{A.1}
 \mathbb E_{Q}\Big(\sup_{|u-v|\leq T/n} |{\mathbb S}_u-{\mathbb S}_v|\Big)<\epsilon.
 \end{equation}
From the existence of the regular distribution function
(see e.g. Shiryaev (1984, page 227)),
there exists for any $1\leq k<n$ a function
$\rho_k:\mathbb{R}\times\mathbb{R}^{k-1}\rightarrow [0,1]$
such that for any $y_1,...,y_{k-1}\in\mathbb R^{k-1}$,
$\rho_k(\cdot,y_1,...,y_{k-1})$,
is a distribution function on $\mathbb{R}$, and for any
$y$, $\rho_k(y,\cdot):\mathbb{R}^{k-1}\rightarrow [0,1]$
is measurable satisfying
$$
 Q\left( {\mathbb S}_{\frac{kT}{n}}\leq y\,\big|\,{\mathbb S}_{\frac{1}{n}},..., {\mathbb S}_{\frac{(k-1)T}{n}}
\right)
=\rho_k\left(y, {\mathbb S}_{\frac{1}{n}},..., {\mathbb S}_{\frac{(k-1)T}{n}}\right),
\ \  Q\mbox{-a.s.}
$$

Recall the probability space
$(\Omega,\mathcal F,\{\mathcal F_t\}_{t=0}^T,\mathbb P)$ and the
filtration $\{\mathcal G_t\}_{t=0}^T$ generated by $W$, completed by the $\mathbb P$-null sets.
Set $\tilde Z_i=W_{\frac{iT}{n}}-W_{\frac{(i-1)T}{n}}$, $i=1,...,n$.
Define
recursively the random variables
\begin{equation}\label{A.1+}
M_0=s  \ \ \mbox{and} \  \mbox{for} \  1\leq k \leq n \ \
M_k=\sup\{y\,|\,\rho_k(y,M_1,...,M_{k-1})<\Phi(\tilde Z_k)\}
\end{equation}
where $\Phi$ is the cumulative distribution function of
$\sqrt\frac{T}{n} W_1$.
As $\rho_k$ is a right-continuous non-decreasing function in the first variable,
we obtain that
$\{M_k\leq x\}=\{\rho_k(x,M_1,...,M_{k-1})\geq \Phi(\tilde Z_k) \}$.
Thus (by induction), we conclude that
$M_0,...,M_{n}$ are measurable. Moreover, as
$\Phi(\tilde Z_k)$ is a random variable uniformly distributed on $[0,1]$, we get
$$
 \mathbb P(M_k\leq y\,|\,M_1,...,M_{k-1})=
\rho_k(y, M_1,...,M_{k-1}).$$
Therefore, the joint distribution
of $M_0,...,M_n$ under $\mathbb P$ equals the joint distribution
of ${\mathbb S}_0,{\mathbb S}_{\frac{T}{n}},...,{\mathbb S}_T$ under $Q$.
 In particular, we have
\begin{equation}\label{bb}
\frac{1}{C}\leq M_n\leq C \quad \mathbb P\mbox{-a.s.}
\end{equation}
for some constant $C$.
Furthermore, there is for any $k$
a measurable function $g_k:\mathbb R^k\rightarrow\mathbb R$ such that
$M_k=g_k(\tilde Z_1,...,\tilde Z_k)\ \mathbb P$-a.s.\\
\textbf{Third step:}
Define the Brownian martingale
$\hat M_t=\mathbb E_{\mathbb P}(M_n\,|\,\mathcal{G}_t)$, $t\in [0,T].$
Due to the independent increments of Brownian motion, $\hat M_{\frac{kT}{n}}=M_k$ for any $k$. Define the random variable
$\mathbf X=\max_{0\leq k< n}|M_{k+1}-M_k|$. Now,
let $k< n$ and $t\in [kT/n, (k+1)T/n]$. From Jensen's inequality
$|\hat M_t-\hat M_{kT/n}|\leq \mathbb E_{\mathbb P}(\mathbf X\,|\,\mathcal G_t)$. Thus,
%from the Kolmogorov inequality we obtain
applying Doob's martingale inequality and (\ref{A.1}) yield
\begin{align}
& \ \mathbb P(\max_{0\leq k<n}\max_{kT/n\leq t\leq (k+1)T/n}|\hat M_t-\hat M_{kT/n}|>\sqrt\epsilon) \label{A.2}\\
\leq & \
\mathbb P\left(\max_{0\leq t\leq T}\mathbb E_{\mathbb P}(\mathbf X|\mathcal G_t)>\sqrt\epsilon\right) \nonumber\\
\leq & \
\frac{1}{\sqrt\epsilon}\mathbb E_{\mathbb P} \mathbf X \nonumber\\
= & \
\frac{1}{\sqrt\epsilon}\mathbb E_{Q}\left(\max_{0\leq k<n} |{\mathbb S}_{(k+1)T/n}-{\mathbb S}_{kT/n}|\right) \nonumber\\
\leq & \
\sqrt\epsilon.\nonumber
\end{align}
For $k<n$ and
$\frac{kT}{n}\leq t\leq \frac{(k+1)T}{n}$, we obtain from the Markov property of Brownian motion that
$\hat M_t=\psi_k(\tilde Z_1,...,\tilde Z_{k},t,W_t-W_{kT/n})$, where
$$\psi_k(\tilde Z_1,...,\tilde Z_{k},t,y)=\int_{-\infty}^{\infty}g_{k+1}(\tilde Z_1,...,\tilde Z_{k},v+y)
\frac{e^{-\frac{v^2}{(2k+2)T/n-2t}}
}{\sqrt{2\pi ((k+1)T/n-t)}}dv.$$
From (\ref{A.1+}), we see that the function $g_{k+1}(y_1,...,y_{k+1})$ is non-decreasing in $y_{k+1}$. Hence
the function $\psi_k(\tilde Z_1,...,\tilde Z_{k},t,y)$ is non-decreasing in $y$.
By It\^o's formula, %we have
$\hat M_t=S_0+\int_{0}^t \beta_u dW_u$, $t\in [0,T]$, with
$\beta_t=\frac{\partial \psi_{[nt/T]}(\tilde Z_1,...,
\tilde Z_{[nt/T]},t,y)}{\partial y}|{\{y=W_t-W_{[nt/T]T/n}\}}$,
$t\in [0,T]$, being a non--negative process.
Finally, set $\alpha_t=\frac{\beta_t}{\hat M_t}$. Then, by construction, $\alpha\in\mathcal{A}$, where $\mathcal{A}$ is the set defined in Section~\ref{sec:lemmas}, which means
 $\alpha$ is a non-negative $\{\mathcal{G}_t\}_{t=0}^T$-progressive process such that
$$ \hat M_t = S_0\, e^{\int_0^t \alpha_v\, dW_v -\frac{1}{2}\int_0^t \alpha^2_v \,dv},\quad t \in [0,T],$$
satisfies $\frac{1}{C}\leq \hat M \leq C$, where the last inequality follows from
(\ref{bb}).
\\
\textbf{Fourth Step:}
Consider the space of all probability measures on $C[0,T]$. Recall the  L\'evy–-Prokhorov metric
 $$d(P_1,P_2)=\inf\{\delta>0: P_1(\mathbb A)\leq \delta+ P_2(\mathbb A^\delta) \ \mbox{and} \
  P_2(\mathbb A)\leq \delta+ P_1(\mathbb A^\delta) \ \forall \mathbb A\},$$ where
$\mathbb A^\delta$ is the set of all function that their distance (in the uniform metric)
 to the set $A$ is smaller than $\delta$. As $C[0,T]$ is a Polish  space, the  L\'evy–-Prokhorov metric induces the topology of weak convergence.
 Define the linear extrapolations
 \begin{align*}
\tilde{\mathbb S}_t&:=
\left(\left[{nt/T}\right]+1-{nt/T}\right){\mathbb S}_{[nt/T]T/n}+
\left(nt/T-\left[nt/T\right]\right){\mathbb S}_{([nt/T]+1)T/n},
 \quad t\leq T,\\
 \tilde{M}_t&:=
 \left(\left[{nt/T}\right]+1-{nt/T}\right)M_{[nt/T]}+
 \left(nt/T-\left[nt/T\right]\right)M_{[nt/T]+1},
  \quad t\leq T.
 \end{align*}
As a consequence of the second step, we obtain that the distribution of $\tilde {\mathbb S}$ (under $Q$)
equals the distribution of $\tilde M$ (under $\mathbb P)$. Denote it by $Q_1$.
From (\ref{A.1}) and the Markov inequality
we obtain that
$d(Q,Q_1)\leq \sqrt\epsilon.$
The inequality (\ref{A.2}) implies that
$d( Q_1, Q_2)\leq 2\sqrt\epsilon$ where $Q_2$ is the distribution of $\hat M$ (under $\mathbb P$).
Thus, we get $d(Q,Q_2)\leq 3\sqrt\epsilon$.
As $\epsilon>0$ was arbitrary, we obtain that the set of distributions of
$\mathbf S^{(\alpha)}$ (recall the definition after formula (\ref{3.21})),
$\alpha\in\mathcal A$,
is dense in $\mathcal M^b$, and in view of the first step we obtain that
the set of distributions of $\mathbf S^{(\alpha)}$,
$\alpha\in\mathcal A$, is dense in $\mathcal M$.
 Moreover, using similar arguments
as in Lemma~\ref{lem4.1}, we conclude that the set of distributions
of $\mathbf S^{(\alpha)}$, $\alpha\in\mathcal C(\nu_0)$, is dense in $\mathcal M$.
We arrive to the final step.
${}$\\
\textbf{Fifth step:}
From the last step, it follows that it is sufficient to prove that, for
any $\alpha\in\mathcal C(\nu_0)$, the distribution of $\mathbf S^{(\alpha)}$ lies in the weak closure of
$\mathcal M^S$. Thus, choose $\alpha\in\mathcal C(\nu_0)$.
We use the property of fully incomplete market. By Definition~\ref{dfn2.1}, there exists a sequence of probability measures
$\mathbb Q_n\ll \mathbb P$, $n\in\mathbb N$, such that (\ref{dnew}) holds for $\epsilon=\frac{1}{n}$
and $W$ is a $\mathbb Q_n$ Brownian motion. As $\alpha$ is adapted to $\{\mathcal G_t\}_{t=0}^T$
then the distribution of $(\alpha,W)$ under $\mathbb Q_n$ is the same as under $\mathbb P$. Hence,
the distribution of $(\nu,W)$ under $\mathbb Q_n$
converges weakly as $n\rightarrow\infty$ (on the space $C[0,T]\times C[0,T]$) to the distribution
of $(\alpha,W)$ under $\mathbb P$.
Recall that
\begin{align*}
d\tilde S_t&=S_0+\int_{0}^t \nu_t\tilde S_t \,dW_t, \ \ t\in [0,T], \ \ \mathbb Q_n\mbox{-a.s.,}\\
d\mathbf S^{(\alpha)}_t&=S_0+\int_{0}^t \alpha_t\mathbf{S}^{(\alpha)}_t dW_t, \ \ t\in [0,T] \ \ \mathbb {P}\mbox{-a.s}.
\end{align*}
Thus, from  Duffie and Protter (1992, Proposition 4.1 and Theorem 4.3--4.4), we obtain
that the distribution of $\tilde S$ under $\mathbb Q_n$
converges weakly to the distribution of $\mathbf S^{(\alpha)}$, as required.
\end{proof}

\begin{rem}
It is possible to define a fully incomplete market
as a market which satisfies that the set of distributions
$$\{\mathbb Q(S\in\cdot): \mathbb Q \ \mbox{is} \ \mbox{an} \
\mbox{equivalent} \ \mbox{martingale} \ \mbox{measure}\}$$
is a weakly dense subset of $\mathcal M$.
This is the only property that we used in
the proof of Theorem~\ref{thm2.1}.
However, when dealing with game options
(or any options which involve stopping times)
such as Theorem~\ref{thm2.1game}, we
need an additional structure related to the
filtration $\{\mathcal F_t\}_{t=0}^T$. This additional structure
is given by (\ref{2.2}) and Definition \ref{dfn2.1}.
\end{rem}

%%%%%%%%%%%%%%%%%%%%%%%%%%%%%%%%%%%%%%%%%%%%%%%%%%%%%%%%%%%%%%%%%


\begin{thebibliography}{10}


% % % % % % % % % % % % % % % % % %
% % % % % % % % % % % % % % % % % % % % %
\bibitem{A}
ACCIAIO, B., M. BEIGLBOECK, F. PENKNER, and W. SCHACHERMAYER (2015):
\newblock
A Model-free Version of the Fundamental Theorem of Asset Pricing and the Super-Replication Theorem.
\newblock Math. Finance, to appear.
\vspace{0.1cm}


\bibitem{BayraktarHuangZhou.15}
BAYRAKTAR, E., Y.-J. HUANG, and Z. ZHOU (2015):
On hedging American options under model uncertainty.
\newblock  SIAM Journal on Financial Mathematics 6(1), 425--447.
\vspace{0.1cm}



\bibitem{BayraktarZhou.15}
BAYRAKTAR, E., and Z. ZHOU (2015):
\newblock
Arbitrage, hedging and utility maximization using semi-static trading strategies with American options.
\newblock Annals of Applied Probability, to appear.
\vspace{0.1cm}

\bibitem{BayraktarZhou.16}
BAYRAKTAR, E., and Z. ZHOU (2016):
\newblock
Arbitrage and hedging with liquid American options.
\newblock preprint. arxiv: 1605.01327.
\vspace{0.1cm}


\bibitem{BCHPP}
BEIGLBOECK, M., A.M.G. COX, M. HUESMANN, N. PERKOWSKI, and D.J. PROEMEL (2015):
\newblock
Pathwise Super-replication vial Vovk's Outer Measure.
\newblock preprint. arxiv: 1504.03644.
\vspace{0.1cm}

\bibitem{BouchardNutz.15}
BOUCHARD, B., and M. NUTZ (2015):
\newblock
Arbitrage and Duality in Nondominated Discrete-Time Models.
\newblock Annals of Applied Probability 25(2), 823--859.
\vspace{0.1cm}


\bibitem{CvitanicPhamTouzi.99}
CVITANIC, J., H. PHAM,  and N. TOUZI (1999):
\newblock Super-replication in stochastic volatility models under portfolio constraints.
\newblock J. Appl. Probab. 36(2),
523--545.
\vspace{0.1cm}


\bibitem{DS} DELBAEN, F., and W. SCHACHERMAYER (1994):
\newblock A general version of the fundamental theorem
of asset pricing.
\newblock  Math. Annalen. {300}, 463--520.
\vspace{0.1cm}


\bibitem{DelbaenSchachermayer.06}
DELBAEN, F, and W. SCHACHERMAYER (2006):
\newblock The Mathematics of Arbitrage.
\newblock Springer, Berlin.
\vspace{0.1cm}


\bibitem{DengTan.16}
DENG, S., and X. TAN (2016):
\newblock Duality in nondominated discrete-time models for Americain options.
\newblock preprint. arxiv: 1604.05517.
\vspace{0.1cm}



\bibitem{Dolinsky.13}
DOLINSKY, Y. (2013):
\newblock Hedging of game options with the presence of transaction costs.
\newblock Ann. Appl. Probab. 23(6), 2212--2237.
\vspace{0.1cm}

\bibitem{DolinskySoner.14}
DOLINSKY, Y., and H. M. SONER (2014):
\newblock Martingale Optimal Transport  and Robust Hedging in Continuous Time.
\newblock  Probability Theory and Related Fields 160, 391--427.
\vspace{0.1cm}


\bibitem{DolinskySoner.15}
DOLINSKY, Y., and H. M. SONER (2015)(a):
\newblock Martingale Optimal Transport in the Skorokhod Space.
\newblock Stochastic Processes and their Applications 125, 3657--4020.
\vspace{0.1cm}

\bibitem{DolinskySoner.15b}
DOLINSKY, Y., and H. M. SONER (2015)(b):
\newblock Convex duality with transaction costs.
\newblock preprint. arxiv: 1502.01735.
\vspace{0.1cm}

\bibitem{DP}
DUFFIE, D., and P. PROTTER (1992):
\newblock
From Discrete- to Continuous-Time Finance: Weak Convergence of the Financial Gain Process.
\newblock Mathematical Finance. 2 (1), 1--15.
\vspace{0.1cm}


\bibitem{EkstromVilleneuve.06}
EKSTR\"OM, E., and S. VILLENEUVE  (2006):
\newblock On the value of optimal stopping games.
\newblock Ann. Appl. Probab. 16(3), 1576--1596.
\vspace{0.1cm}


\bibitem{GLT} GALICHON, A., P. HENRY--LABORDERE, and N. TOUZI (2014):
\newblock A stochastic control approach to no-arbitrage bounds given marginals, with an
application to Lookback options. Annals of Applied Probability
24, 312--336.
\vspace{0.1cm}

\bibitem{GJR}
GATHERAL, J., T. JAISSON, and M. ROSENBAUM (2014):
\newblock Volatility is rough.
\newblock preprint. arxiv: 1410.3394.
\vspace{0.1cm}


\bibitem{GRS} GUASONI. P., M. RASONYI, and W. SCHACHERMAYER (2008):
\newblock Consistent Price Systems and Face-Lifting Pricing under Transaction Costs.
\newblock Annals of Applied Probability 18, 491--520.
\vspace{0.1cm}

\bibitem{GTT}
GUO, G., X. TAN, and N. TOUZI (2015):
\newblock Tightness and duality of martingale transport on the Skorokhod space, preprint.
arxiv: 1507.01125.
\vspace{0.1cm}

\bibitem{Heston.93}
HESTON, S. L. (1993):
\newblock A Closed-Form Solution for Options with Stochastic
Volatility with Applications to Bond and Currency Options.
\newblock Review
of Financial Studies 6(2),
327--343.
\vspace{0.1cm}

\bibitem{H} HOBSON, D. (1998):
\newblock Robust hedging of the lookback option.
\newblock Finance and Stochastics 2(4), 329--347.
\vspace{0.1cm}


\bibitem{HobsonNeuberger.16}
HOBSON, D., and A. NEUBERGER (2016):
\newblock On the value of being American.
\newblock preprint. arxiv: 1604.02269.
\vspace{0.1cm}


\bibitem{HO}
HOU, Z., and J. OBLOJ (2015):
\newblock On robust pricing--hedging duality in continuous time.
\newblock preprint. arxiv: 1503.02822.
\vspace{0.1cm}


\bibitem{HullWhite.87}
HULL, J., and A. WHITE (1987):
\newblock The Pricing of Options on Assets with Stochastic Volatilities.
\newblock Journal of Finance 42(2),
281--300.
\vspace{0.1cm}

\bibitem{LiptserShiryaev.87}
LIPTSER, R. S., and A. N. SHIRYAEV (2001):
\newblock Statistics of Random Processes
I. General Theory.
\newblock Springer, 2nd edition.
\vspace{0.1cm}

\bibitem{LOST}
HENRY--LABORDERE, P., J. OBLOJ, P. SPOIDA, and N. TOUZI (2016):
\newblock Maximum Maximum of Martingales given Marginals.
Annals of Applied Probability 26(1), 1–44.
\vspace{0.1cm}

\bibitem{Kifer.00}
KIFER, Y. (2000):
\newblock Game options.
\newblock Finance Stoch. 4,
443--463.
\vspace{0.1cm}

\bibitem{Kusuoka.92}
KUSUOKA, S. (1992):
\newblock Consistent Price Sytem when Transaction Costs Exist.
\newblock Working Paper, Research Institute for Mathematical Sciences, Kyoto University.
\vspace{0.1cm}


\bibitem{P}
PAKKANEN, M.S. (2010):
\newblock Stochastic integrals and conditional full support.
\newblock J. Appl. Probab. 47,
 650--667.
\vspace{0.1cm}

\bibitem{Kyprianou.04}
KYPRIANOU, A. (2004):
\newblock Some calculations for Israeli options.
\newblock Finance Stoch. 8,
73--86.
\vspace{0.1cm}

\bibitem{PeskirShiryaev.06}
PESKIR, G., and A.N. SHIRYAEV (2006):
\newblock Optimal Stopping and Free-Boundary Problems.
\newblock Lectures in Mathematics, ETH Zurich, Birkhauser.
\vspace{0.1cm}

\bibitem{Scott.87}
SCOTT, L. O. (1987):
\newblock Option Pricing when the Variance Changes Randomly: Theory, Estimation, and an Application.
\newblock Journal of Financial and Quantitative Analysis 22(4), 419--438.
\vspace{0.01cm}

\bibitem{S}
SHIRYAEV, A.N. (1984):
\newblock Probability, Springer-Verlag, New York.

% % % % % % % % % % % % % % % % % % % % %
% ALT
% % % % % % % %5
\end{thebibliography}
\end{document}